\numberwithin{equation}{section}
\newtheorem{theorem}{Theorem}[section]
\newtheorem{corollary}[theorem]{Corollary}
\newtheorem{lemma}[theorem]{Lemma}
\newtheorem{proposition}[theorem]{Proposition}
\theoremstyle{definition}
\newtheorem{remark}[theorem]{Remark}
\newtheorem{assumption}[theorem]{Assumption}
\newcommand{\ind}{1\hspace{-2.1mm}{1}} 
\newcommand{\I}{\mathtt{i}}
\newcommand{\RR}{\mathbb{R}}
\newcommand{\Ddh}{\mathcal{K}_{\mathrm{H}}}
\newcommand{\PP}{\mathbb{P}}
\newcommand{\D}{\mathrm{d}}
\newcommand{\E}{\mathrm{e}}
\newcommand{\LO}{\Lambda_{0}}
\newcommand{\sgn}{\mathrm{sgn}}
\newcommand{\Nn}{\mathcal{N}}
\begin{document}

\title{Asymptotics of forward implied volatility}
\author{Antoine Jacquier and Patrick Roome}
\address{Department of Mathematics, Imperial College London}
\email{a.jacquier@imperial.ac.uk, p.roome11@imperial.ac.uk}
\thanks{The authors would like to thank all the participants of the 2013 AMaMeF Conference
and Peter Tankov for useful discussions.
They are also indebted to the anonymous referees for helpful comments.}
\keywords {Stochastic volatility, time-changed exponential L\'evy, forward implied volatility, asymptotic expansion}
\subjclass[2010]{60F10, 91G99, 91G60}
\date{\today}
\maketitle
\begin{abstract}
We prove here a general closed-form expansion formula for forward-start options and the forward implied volatility smile in a large class of models, including the Heston stochastic volatility and time-changed exponential L\'evy models.
This expansion applies to both small and large maturities and is based solely on the knowledge
of the forward characteristic function of the underlying process.	
The method is based on sharp large deviations techniques, and allows us to recover (in particular) many results 
for the spot implied volatility smile.
In passing we show 
(i) that the small-maturity exploding behaviour of forward smiles depends on whether the quadratic variation of the underlying is bounded or not, 
and (ii) that the forward-start date also has to be rescaled in order to obtain non-trivial small-maturity asymptotics.

\end{abstract}

\section{Introduction}

Consider an asset price process $\left(\E^{X_t}\right)_{t\geq0}$ with $X_0=0$, paying no dividend, 
defined on a complete filtered probability space $(\Omega,\mathcal{F},(\mathcal{F}_t)_{t\geq0},\mathbb{P})$ 
with a given risk-neutral measure $\mathbb{P}$, and assume that interest rates are zero. 
In the Black-Scholes-Merton (BSM) model, the dynamics of the logarithm of the asset price are given by 
\begin{align}\label{eq:BSDynamics}
\D X_t=-\frac{1}{2}\Sigma^2  \D t +\Sigma \D W_t,
\end{align}
where $\Sigma>0$ is the instantaneous volatility and $W$ a standard Brownian motion.
The no-arbitrage price of the call option at time zero is then given by the famous BSM formula~\cite{BS73,M73}:
$C_{\textrm{BS}}(\tau,k,\Sigma) := \mathbb{E}\left(\E^{X_\tau}-\E^k\right)_+
=\Nn\left(d_+\right)-\E^k\Nn\left(d_-\right)$,
with $d_\pm:=-\frac{k}{\Sigma\sqrt{\tau}}\pm\frac{1}{2}\Sigma\sqrt{\tau}$,
where $\Nn$ is the Gaussian distribution function. 
For a given market price $C^{\textrm{obs}}(\tau,k)$ of the option at strike~$\E^k$ and maturity~$\tau$ we define the spot implied volatility~$\sigma_{\tau}(k)$ as the unique solution to 
the equation $C^{\textrm{obs}}(\tau,k)=C_{\textrm{BS}}(\tau,k,\sigma_{\tau}(k))$.

For any~$t,\tau>0$ and $k\in\mathbb{R}$, we define~\cite{B04,VL}
a Type-I forward-start option with forward-start date~$t$, maturity~$\tau$ and strike~$\E^k$ as a European option with payoff 
$\left(\E^{X_{\tau}^{(t)}}-\E^k\right)^+$ where $X_{\tau}^{(t)}:=X_{t+\tau}-X_t$ pathwise. 
In the BSM model~\eqref{eq:BSDynamics} its value is simply worth $C_{\textrm{BS}}(\tau,k,\Sigma)$. 
For a given market price $C^{\textrm{obs}}(t,\tau,k)$ of the option at strike~$\E^k$, forward-start date~$t$ and maturity~$\tau$ 
we can define the forward implied volatility smile $\sigma_{t,\tau}(k)$ as the unique solution to 
$C^{\textrm{obs}}(t,\tau,k)=C_{\textrm{BS}}(\tau,k,\sigma_{t,\tau}(k))$. 
A second type of forward-start option exists~\cite{VL} and corresponds to a European option with payoff 
$\left(\E^{X_{t+\tau}}-\E^{k+X_t}\right)^+$. 
In the BSM model~\eqref{eq:BSDynamics} the value of the Type-II forward-start option is worth $C_{\textrm{BS}}(\tau,k,\Sigma)$ 
~\cite{R91}. 
Again, for a given market price $C^{\textrm{obs,II}}(\tau,t,k)$ of such an option, we define the Type-II forward implied volatility smile 
$\widetilde{\sigma}_{t,\tau}(k)$ as the unique solution to $C^{\textrm{obs,II}}(\tau,t,k)=C_{\textrm{BS}}(\tau,k,\widetilde{\sigma}_{t,\tau}(k))$. 
Both definitions of the forward smile are generalisations of the spot implied volatility smile 
since they reduce to the spot smile when $t=0$. 

The literature on implied volatility asymptotics is extensive and has been using a diverse range of mathematical techniques.
In particular, small-maturity asymptotics have historically received wide attention due to earlier results from the 1980s on expansions of the heat kernel~\cite{Benarous}.
PDE methods for continuous-time diffusions~\cite{BBF,Hagan,Pascucci}, large deviations~\cite{DFJV,FJSmall}, saddlepoint methods~\cite{FJL11}, Malliavin calculus~\cite{Gobet,KT04} and differential geometry~\cite{GHLOW,PHL} are among the main methods used to tackle the small-maturity case.
Extreme strike asymptotics arose with the seminal paper by Roger Lee~\cite{Lee} and have been further extended by Benaim and Friz~\cite{BF1, BF2} and in~\cite{Guli, GS, FGGS, DFJV}.
Comparatively, large-maturity asymptotics have only been studied in~\cite{Tehr, FJ09, JMKR, JM12, FJM10} using large deviations and saddlepoint methods.
Fouque et al.~\cite{Fouque} have also successfully introduced perturbation techniques in order to study slow and fast mean-reverting stochastic volatility models.
Models with jumps (including L\'evy processes), studied in the above references for large maturities and extreme strikes, `explode' in small time, in a precise sense investigated in~\cite{Alos, AndLipton, Tank, Nutz, MijTankov, FL}.

A collection of implied volatility smiles over a time horizon $(0,T]$ is also known to be equivalent to the marginal distributions of the asset price process over $(0,T]$. 
Implied volatility asymptotics have therefore provided a set of tools to analytically understand 
the marginal distributions of a model and their relationships to market observable quantities such as volatility smiles. 
However many models can calibrate to implied volatility smiles (static information) with the same degree of precision and produce radically different prices and risk sensitivities for exotic securities. 
This can usually be traced back to a complex and often non-transparent dependence on transitional probabilities 
or equivalently on model-generated dynamics of the smile. 
The dynamics of the smile is therefore a key model risk associated with these products and any model used for pricing and risk management should produce realistic dynamics that are in line with trader expectations and historical dynamics. One metric that can be used to understand the dynamics of implied volatility smiles (\cite{B04} calls it a 'global measure' of the dynamics of implied volatilities) is to use the forward smile defined above. The forward smile is also a market-defined quantity and naturally extends the notion of the spot implied volatility smile.
Forward-start options also serve as natural hedging instruments for several exotic securities 
(such as Cliquets, Ratchets and Napoleons; see~\cite[Chapter 10]{G06}) and are therefore worth investigating.

Despite significant research on implied volatility asymptotics, 
there are virtually no results on the asymptotics of the forward smile:
Glasserman and Wu~\cite{GW10} introduced different notions of forward volatilities to assess their predictive values in determining future option prices and future implied volatility, 
Keller-Ressel~\cite{KR} studies a very specific type of asymptotic (when the forward-start date becomes large), and empirical results have been carried out by practitioners in \cite{B04,B02,G06}. 
Recently, in~\cite{JR2013} the authors proved that for fixed $t>0$ the Heston forward smile 
(corresponding to~$X_{\tau}^{(t)}$) explodes (except at-the-money) as $\tau$ tends to zero.

We consider here a continuous-time stochastic process $(Y_\varepsilon)_{\varepsilon>0}$ and 
prove---under some assumptions on its characteristic function---an expansion 
for European option prices on $Y_\varepsilon$ of the form 
$$
\mathbb{E}\left(\E^{Y_{\varepsilon}f(\varepsilon)}-\E^{kf(\varepsilon)}\right)^+
 = 
\mathcal{I}(k,c,\varepsilon)+
\alpha(k,c) \E^{-\Lambda^*(k)/\varepsilon+k f(\varepsilon)}
\left(c \sqrt{\varepsilon }\ind_{\{c>0\}}+\varepsilon^{3/2} f(\varepsilon )\ind_{\{c=0\}}\right)
\left[1+\alpha_1(k,c)\varepsilon + \mathcal{O}\left(\varepsilon^2\right)\right],
$$
as $\varepsilon\downarrow 0$, 
for some (explicit) functions $\alpha, \alpha_1$ and a residue term $\mathcal{I}$
(Theorem~\ref{theorem:GeneralOptionAsymp} and Corollary~\ref{cor:CallPrice}).
Here $f$ is a continuous function satisfying $\varepsilon f(\varepsilon)=c+\mathcal{O}(\varepsilon)$ as $\varepsilon\downarrow 0$, 
and $\Lambda^*$ can be interpreted as a large deviations rate function.
Setting $Y_{\varepsilon}\equiv X_{\varepsilon\tau}^{(\varepsilon t)}$ and $f(\varepsilon)\equiv1$ 
or $Y_\varepsilon\equiv \varepsilon X_{\tau/\varepsilon}^{(t)}$ and $f(\varepsilon)\equiv\varepsilon^{-1}$ yields `diagonal' small-maturity (Corollary~\ref{cor:ShortTimeAsymp}) and large-maturity (Corollary~\ref{Cor:BSOptionLargeTime}) expansions of forward-start option prices.
The diagonal small-maturity re-scaling results in non-degenerate small-maturity asymptotics that are far more accurate than the small-maturity asymptotic in~\cite{JR2013}. 
This result also applies when $t=0$, and generalises the results in~\cite{FJM10},~\cite{FJ09},~\cite{JMKR}.
We also translate these results into closed-form asymptotic expansions for the forward implied volatility smile (Type I and Type II).
In Section~\ref{sec:Examples}, we provide explicit examples for the Heston and time-changed exponential L\'evy processes. 
Section~\ref{sec:Numerics} provides numerical evidence supporting the practical relevance of these results
and we leave the proofs of the main results to Section~\ref{sec:Proofs}.

\textbf{Notations: } $\Nn(\mu,\sigma^2)$ shall represent the Gaussian distribution with mean $\mu$ and variance $\sigma^2$.
Furthermore $\mathbb{E}$ and $\mathbb{V}$ shall always denote expectation and variance under a risk-neutral measure~$\PP$ given a priori.
We shall refer to the standard (as opposed to the forward) implied volatility as the spot smile and denote it $\sigma_\tau$.
The (Type-I) forward implied volatility will be denoted $\sigma_{t,\tau}$ as above.
In the remaining of this paper $\varepsilon$ will always denote a strictly positive (small) quantity,
and we let $\mathbb{R}^*:=\mathbb{R}\setminus\{0\}$ and $\mathbb{R}^*_+:=(0,\infty)$.
For two functions $g,h:\mathbb{R}_+\to\mathbb{R}_+$ we use the notation $g\sim h$ to mean $\lim_{\varepsilon \downarrow 0}g(\varepsilon)/h(\varepsilon)=1$
and we let $\sgn(p)=1$ if $p\geq 0$ and $-1$ otherwise.
For a sequence of sets $(\mathcal{D}_{ \varepsilon})_{\varepsilon>0}$ in $\mathbb{R}$, 
we may, for convenience, use the notation $\lim_{\varepsilon \downarrow 0}\mathcal{D}_{ \varepsilon}$, 
by which we mean the following (whenever both sides are equal):
$
\liminf_{\varepsilon \downarrow 0}\mathcal{D}_{ \varepsilon}
:= \bigcup_{\varepsilon>0}\bigcap_{s\leq\varepsilon}\mathcal{D}_{s}
 = \bigcap_{\varepsilon>0}\bigcup_{s\leq\varepsilon}\mathcal{D}_{s}
=:\limsup_{\varepsilon\downarrow 0}\mathcal{D}_{\varepsilon}.
$
Finally, for a given set $A\subset\mathbb{R}$, we let $A^o$ denote its interior (in $\mathbb{R}$)
and $\Re(z)$ and $\Im(z)$ denote the real and imaginary parts of a complex number~$z$.

\section{General Results} \label{sec:GeneralResults}
This section gathers the main notations of the paper as well as the general results.
The main result is Theorem~\ref{theorem:GeneralOptionAsymp}, 
which provides an asymptotic expansion---up to virtually any arbitrary order---of option prices 
on a given process $(Y_{\varepsilon})$, as $\varepsilon$ tends to zero.
This general formulation allows us, by suitable scaling, 
to obtain both small-time (Section~\ref{sec:SmallDiag}) and large-time (Section~\ref{sec:LargeMatGeneral}) expansions.

\subsection{Notations and main theorem}
\subsubsection{Notations and preliminary results}
Let $(Y_{\varepsilon})$ be a stochastic process with re-normalised logarithmic moment generating function (lmgf)
\begin{align}\label{eq:Renorm-mgf}
\Lambda_{\varepsilon}(u):=
\varepsilon\log\mathbb{E}\left[\exp\left(\frac{uY_{\varepsilon}}{\varepsilon}\right)\right],
\qquad\text{for all }
u\in\mathcal{D}_{\varepsilon}:=\{u\in\mathbb{R}:|\Lambda_{\varepsilon}(u)|<\infty\}.
\end{align}
We further define $\mathcal{D}_{0}:=\lim_{\varepsilon\downarrow 0}\mathcal{D}_{\varepsilon}$
and now introduce the main assumptions of the paper.

\begin{assumption}  \label{assump:Differentiability}\ 
\begin{enumerate}[(i)]
\item \textbf{Expansion property:} 
For each $u\in\mathcal{D}_{0}^{o}$ the following Taylor expansion holds as $\varepsilon$ tends to zero:
\begin{equation}\label{eq:LambdaExpansion}
\Lambda_{\varepsilon}(u)=\sum_{i=0}^{2}\Lambda_{i}(u)\varepsilon^{i}+\mathcal{O}(\varepsilon^3);
\end{equation}
\item \textbf{Differentiability:} 
There exists $\varepsilon_0>0$ such that the map $(\varepsilon,u)\mapsto \Lambda_{\varepsilon}(u)$
is of class $\mathcal{C}^{\infty}$ on $(0,\varepsilon_0)\times\mathcal{D}_{0}^{o}$;
\item \textbf{Non-degenerate interior:} $0\in\mathcal{D}_{0}^o$;
\item \textbf{Essential smoothness: }$\LO$ is strictly convex and essentially smooth\footnote{\cite[Definition 2.3.5]{DZ93}. 
A convex function $h:\RR\supset\mathcal{D}_{h}\to(-\infty,\infty]$ is essentially smooth if
$\mathcal{D}_{h}^o$ is non-empty,
if~$h$ is differentiable in~$\mathcal{D}_{h}^o$,
and if~$h$ is steep, e.g. $\lim_{n\uparrow\infty}|h'(u_n)|=\infty$ for every sequence $(u_n)_{n\in\mathbb{N}}$ in $\mathcal{D}_{h}^o$ that converges to a boundary point of~$\mathcal{D}_{h}^o$.
}
on $\mathcal{D}_{0}^o$;
\item \textbf{Tail error control:}
For any fixed $p_{r}\in\mathcal{D}_{0}^o\backslash\{0\}$,
\begin{enumerate}[(a)]
\item
$\Re\left(\Lambda_{\varepsilon}\left(\I p_{i} +p_{r}\right)\right) = \Re\left(\LO\left(\I p_{i} +p_{r}\right)\right)+\mathcal{O}(\varepsilon)$, for any $p_i\in\RR$;
\item
the function $L:\RR\ni p_{i}\mapsto \Re\left(\LO\left(\I p_{i} +p_{r}\right)\right)$ 
has a unique maximum at zero and is bounded away from $L(0)$ as $|p_{i}|$ tends to infinity;
\item
there exist $\varepsilon_1,p^*_{i}>0$ such that for all $|p_{i}|\geq p^*_{i}$ and $\varepsilon \leq \varepsilon_1$ there exists
$M$ (independent of $p_i$ and~$\varepsilon$) such that
$\Re\left[\Lambda_{\varepsilon}(\I p_{i} +p_{r}) - \LO(\I p_{i} +p_{r})\right] \leq M \varepsilon$.
\end{enumerate}
\end{enumerate}
\end{assumption}
Assumption~\ref{assump:Differentiability}(i) implies that the functions 
$ \lim_{\varepsilon\downarrow 0}\partial_{\varepsilon}^i\Lambda_\varepsilon(u)$ exist on
$\mathcal{D}_{0}^{o}$ for $i=0,1,2$.
Assumption~\ref{assump:Differentiability}(ii) could be relaxed to
$\mathcal{C}^4((0,\varepsilon_0)\times\mathcal{D}_{0}^{o})$, 
but this hardly makes any difference in practice and does,
 however, render some formulations awkward.
If the expansion~\eqref{eq:LambdaExpansion} holds up to some higher order $n\geq 3$, one can in principle
show that both forward-start option prices and the forward implied volatility expansions below hold to order $n$ as well.
However expressions for the coefficients of higher order are extremely cumbersome and scarcely useful in practice.
Assumption~\ref{assump:Differentiability}(v) is a technical condition (readily satisfied by practical models) required to show that the dependence of option prices on the tails of the characteristic function of the asset price is exponentially small (see Appendix~\ref{sec: TailEstimates} and~\ref{append:tailverif} for further details). 
We do not require this condition to be satisfied at $p_r=0$ since this corresponds to an option strike at which our main result does not hold anyway ($k=\Lambda_{0,1}(0)$ in Theorem~\ref{theorem:GeneralOptionAsymp} below).
We note that this assumption is not required if one is only interested in the leading-order behaviour of option prices and forward implied volatility. 
Assumption~\ref{assump:Differentiability}(iv) is the key property that needs to be checked in practical computations and can be violated by well-known models under certain parameter configurations (see Section~\ref{sec:LargeMatHeston} for an example).

Define now the function $\Lambda^*:\mathbb{R}\to\mathbb{R}_+$ as the Fenchel-Legendre transform of $\LO$:
\begin{align}\label{eq:LambdaStarDefinition}
\Lambda^*(k):=\sup_{u\in\mathcal{D}_{0}}\{uk-\LO(u)\},
\qquad\text{for all }k\in\mathbb{R}.
\end{align}
For ease of exposition in the paper we will use the notation
\begin{align}\label{eq:LambdaDerivatives}
 \Lambda_{i,l}(u):=\partial^{l}_{u}\Lambda_{i}(u)\quad\text{for }l\geq1, i=0,1,2.
\end{align}
The following lemma gathers some immediate properties of the functions~$\Lambda^*$ and~$\Lambda_{il}$ 
which will be needed later.
\begin{lemma}\label{lem:Properties}
Under Assumption~\ref{assump:Differentiability}, the following properties hold:
\begin{enumerate}[(i)]
\item For any $k\in\mathbb{R}$, there exists a unique~$u^*(k)\in\mathcal{D}_{0}^o$ such that
\begin{align}
& \Lambda_{0,1}(u^*(k))=k,\label{eq:u*definition}\\
& \Lambda^*(k)=u^*(k)k-\LO\left(u^*(k)\right);\label{eq:LambdaStarRep2}
\end{align}
\item $\Lambda^*$ is strictly convex and differentiable on $\mathbb{R}$;
\item if $a\in\mathcal{D}_{0}^o$ such that $\LO(a)=0$, 
then $\Lambda^*(k)> ak$ for all  $k\in\mathbb{R}\setminus\{\Lambda_{0,1}(a)\}$ 
and $\Lambda^*(\Lambda_{0,1}(a))=a\Lambda_{0,1}(a)$.
\end{enumerate}
\end{lemma}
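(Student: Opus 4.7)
The plan is to exploit the essential smoothness and strict convexity of $\Lambda_0$ provided by Assumption~\ref{assump:Differentiability}(iv), together with the standard theory of Legendre transforms.

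For part (i), I would first argue that strict convexity of $\Lambda_0$ on $\mathcal{D}_0^o$ makes $\Lambda_{0,1}$ a continuous, strictly increasing map on $\mathcal{D}_0^o$. Essential smoothness (steepness) then forces $\Lambda_{0,1}(\mathcal{D}_0^o)=\mathbb{R}$, because $|\Lambda_{0,1}(u_n)|\to\infty$ for every sequence converging to a boundary point, while continuity and monotonicity ensure the image is an open interval without gaps. This yields the existence and uniqueness of $u^*(k)\in\mathcal{D}_0^o$ with $\Lambda_{0,1}(u^*(k))=k$. Since the map $u\mapsto uk-\Lambda_0(u)$ is strictly concave with derivative vanishing precisely at $u^*(k)$, and since the supremum in~\eqref{eq:LambdaStarDefinition} cannot be attained on $\partial\mathcal{D}_0$ by steepness, the supremum is attained at~$u^*(k)$, giving~\eqref{eq:LambdaStarRep2}.

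For part (ii), convexity of $\Lambda^*$ is automatic since it is the supremum of the affine family $k\mapsto uk-\Lambda_0(u)$. Strict convexity follows from the differentiability of $\Lambda_0$ on its effective domain (a standard duality fact: the conjugate of an essentially smooth convex function is essentially strictly convex on the interior of its domain, and here $\Lambda^*$ is finite on all of~$\mathbb{R}$ by part (i)). For differentiability, I would invoke the envelope/Danskin-type identity $(\Lambda^*)'(k)=u^*(k)$: the map $k\mapsto u^*(k)$ is the inverse of $\Lambda_{0,1}$ restricted to $\mathcal{D}_0^o$, hence continuous, so $\Lambda^*$ is $\mathcal{C}^1$ on $\mathbb{R}$.

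For part (iii), the assumption $\Lambda_0(a)=0$ with $a\in\mathcal{D}_0^o$ gives the pointwise lower bound
\begin{equation*}
\Lambda^*(k)\;=\;\sup_{u\in\mathcal{D}_0}\bigl(uk-\Lambda_0(u)\bigr)\;\geq\;ak-\Lambda_0(a)\;=\;ak,
\qquad k\in\mathbb{R}.
\end{equation*}
At $k_0:=\Lambda_{0,1}(a)$, part (i) identifies $u^*(k_0)=a$, so the supremum is attained at $u=a$ and the inequality becomes the equality $\Lambda^*(\Lambda_{0,1}(a))=a\Lambda_{0,1}(a)$. For $k\neq k_0$, uniqueness of the maximiser in part (i) gives $u^*(k)\neq a$, so $a$ is not a maximiser; by strict concavity of $u\mapsto uk-\Lambda_0(u)$ the inequality is strict, i.e.\ $\Lambda^*(k)>ak$.

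The only delicate step is part (ii): establishing global differentiability of $\Lambda^*$ on all of $\mathbb{R}$ (including any boundary issues coming from the behaviour of $\Lambda_0$ near $\partial\mathcal{D}_0^o$). This is the place where essential smoothness is genuinely used — without steepness the maximiser $u^*(k)$ could fail to be interior for some~$k$, producing a kink in $\Lambda^*$. Once one verifies that $u^*(k)$ is always interior and depends continuously on~$k$ (as noted above via the inverse function theorem applied to~$\Lambda_{0,1}$), everything else is routine convex analysis.
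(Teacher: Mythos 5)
Your proof is correct and takes essentially the same route as the paper's: in (i) both use strict convexity plus steepness to make $\Lambda_{0,1}$ a strictly increasing bijection of $\mathcal{D}_0^o$ onto $\mathbb{R}$; in (ii) both identify $(\Lambda^*)'(k)$ with the inverse $\Lambda_{0,1}^{\leftarrow}(k)=u^*(k)$ and read off smoothness and strict convexity from that; in (iii) both start from the definitional bound $\Lambda^*(k)\geq ak$. The only cosmetic divergence is that for the strictness in (iii) you argue pointwise via strict concavity of $u\mapsto uk-\Lambda_0(u)$ and uniqueness of the maximiser, whereas the paper simply quotes the strict convexity of $\Lambda^*$ already established in (ii) -- the two arguments are dual to one another and both are fine.
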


\begin{proof}\
\begin{enumerate}[(i)]
\item By Assumption~\ref{assump:Differentiability} $\Lambda_{0,1}$ is a strictly increasing differentiable function from $-\infty$ to $\infty$ on $\mathcal{D}_{0}$.
\item By (i), $\partial_k\Lambda^{*}(k)=\Lambda_{0,1}^{\leftarrow}(k)$ for all $k\in\mathbb{R}$. 
In particular $\partial_k\Lambda^{*}$ is strictly increasing on $\mathbb{R}$.
\item  Since $\Lambda_{0,1}$ is strictly increasing, $\Lambda_{0,1}(a)=k$ if and only if $u^*(k)=a$ 
and then $\Lambda^*(\Lambda_{0,1}(a))=a\Lambda_{0,1}(a)$ using~\eqref{eq:LambdaStarRep2}. Using the definition~\eqref{eq:LambdaStarDefinition} with $a\in\mathcal{D}_{0}^o$ and $\LO(a)=0$ gives  $\Lambda^*(k)\geq a k$. Since $\Lambda^*$ is strictly convex from (ii) it follows that  $\Lambda^*(k)> ak$ for all  $k\in\mathbb{R}\setminus\{\Lambda_{0,1}(a)\}$.
\end{enumerate}
\end{proof}
\begin{remark}
The saddlepoint~$u^*$ is not always available in closed-form, but can be computed via a simple root-finding algorithm.
Furthermore, a Taylor expansion around any point can be computed iteratively in terms of the derivatives of~$\LO$.
For instance, around $k=0$, we can write
$u^{*}(k)=u^*(0) + \frac{k}{\Lambda_{0,2}(u^*(0))} - \frac{1}{2}\frac{\Lambda_{0,3}(u^*(0))}{\Lambda_{0,2}(u^*(0))^3}k^2
+ \mathcal{O}(k^3)$.
A precise example can be found in the proof of Corollary~\ref{cor:DiagTaylorExpansion}.
\end{remark}

The last tool we need is a (continuous) function~$f:\mathbb{R}_+\to\mathbb{R}_+$ such that there exists $c\geq 0$ for which 
\begin{align}\label{eq:LDP Rescaling}
f(\varepsilon)\varepsilon=c+\mathcal{O}(\varepsilon),\quad
\text{as }\varepsilon\text{ tends to zero}. 
\end{align}
This function will play the role of rescaling the strike of European options and will give us the flexibility to deal with both small-and large-time behaviours.
Finally, for any $b\geq0$ we now define the functions
$A_b,\bar{A}_b:\mathbb{R}\setminus\{\Lambda_{0,1}(0), \Lambda_{0,1}(b)\}\times\mathbb{R}_{+}^*\to\mathbb{R}$ by
$$
\bar{A}_b(k,\varepsilon) 
 :=\displaystyle \frac{b \sqrt{\varepsilon }\ind_{\{b>0\}}+\varepsilon^{3/2} f(\varepsilon )\ind_{\{b=0\}} }
{u^*(k) \left(u^*(k)-b\right)\sqrt{2\pi\Lambda _{0,2}(u^*(k))}}
\quad\text{and}\quad
A_b(k,\varepsilon) := \displaystyle 1+\Upsilon(b,k)\varepsilon+\frac{u^*(k)(\varepsilon  f(\varepsilon )-b)}
{\left(u^*(k)-b\right)b}\ind_{\{b>0\}}+\frac{\varepsilon  f(\varepsilon )}{u^*(k)}\ind_{\{b=0\}},
$$
where
$\Upsilon:\mathbb{R}_{+}\times\mathbb{R}\backslash\{\Lambda_{0,1}(0), \Lambda_{0,1}(b)\}\to\mathbb{R}$ is given by
\begin{align}\label{eq:Upsilon}
\Upsilon(b,k)
 &:= \Lambda_{2}-\frac{5 \Lambda _{0,3}^2}{24 \Lambda _{0,2}^3}+
\frac{4\Lambda _{1,1}\Lambda _{0,3}+\Lambda_{0,4}}{8\Lambda _{0,2}^2}
-\frac{\Lambda _{1,1}^2+\Lambda _{1,2}}{2 \Lambda _{0,2}}
-\frac{\Lambda _{0,3}}{2u^*(k) \Lambda _{0,2}^2}-\frac{\Lambda _{0,3}}{2\left(u^*(k)-b\right)\Lambda _{0,2}^2} \\ \nonumber
& -\frac{\Lambda _{1,1} \left(b-2u^*(k)\right)+3}{u^*(k) \left(u^*(k)-b\right)\Lambda _{0,2}}
-\frac{b^2}{u^*(k)^2\left(u^*(k)-b\right)^2 \Lambda _{0,2}}.
\end{align}
For ease of notation we write $\Lambda_{i}$ and $\Lambda_{i,l}$ in place of $\Lambda_{i}\left(u^*(k)\right)$ and $\Lambda_{i,l}\left(u^*(k)\right)$. The domains of definition of $A_b$ and $\bar{A}_b$ excludes the set $\{\Lambda_{0,1}(0), \Lambda_{0,1}(b)\}=\{k\in\mathbb{R}:u^*(k)\in\{0,b\}\}$. 
For all $k$ in this domain, $\Lambda_{0,2}(u^*(k))>0$ by Assumption~\ref{assump:Differentiability}(iv),
so that $A_b$  and $\bar{A}_b$ are both well-defined real-valued functions. 

\subsubsection{Main theorem and corollaries}
The following  theorem on asymptotics of option prices is the main result of the paper. 
A quick glimpse at the proof of Theorem~\ref{theorem:GeneralOptionAsymp} in Section~\ref{sec:proofMainProp}
shows that this result can be extended to any arbitrary order. \\

\begin{theorem}~\label{theorem:GeneralOptionAsymp}
Let~$(Y_\varepsilon)_{\varepsilon>0}$ satisfy Assumptions~\ref{assump:Differentiability},
and~$f:\mathbb{R}_+\to\mathbb{R}_+$ be a function satisfying~\eqref{eq:LDP Rescaling} with constant $c\in\mathcal{D}_{0}^o\cap\mathbb{R}_{+}$. 
Then the following expansion holds for all $k\in\mathbb{R}\backslash\{\Lambda_{0,1}(0), \Lambda_{0,1}(c)\}$ as $\varepsilon\downarrow 0$:
\begin{equation*}
\E^{-\Lambda^*(k)/\varepsilon+k f(\varepsilon)+\Lambda_{1}}
\bar{A}_c(k,\varepsilon)\left[A_c(k,\varepsilon) + \mathcal{O}\left(\varepsilon^2\right)\right]
 = \left\{
\begin{array}{ll}
\mathbb{E}\left(\E^{Y_{\varepsilon}f(\varepsilon)}-\E^{kf(\varepsilon)}\right)^+,
\quad & \text{if }k>\Lambda_{0,1}(c),\\
\mathbb{E}\left(\E^{kf(\varepsilon)}-\E^{Y_{\varepsilon}f(\varepsilon)}\right)^+,
\quad & \text{if }k<\Lambda_{0,1}(0),\\
 - \mathbb{E}\left(\E^{Y_{\varepsilon}f(\varepsilon)}\wedge \E^{kf(\varepsilon)}\right),
\quad & \text{if }\Lambda_{0,1}(0)<k<\Lambda_{0,1}(c).
\end{array}
\right.
\end{equation*}
\end{theorem}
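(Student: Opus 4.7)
The natural starting point is a Bromwich-type inversion identity. Using the representation $(e^x-e^\kappa)^+=\frac{1}{2\pi\I}\int_{\gamma-\I\infty}^{\gamma+\I\infty}\frac{e^{sx+(1-s)\kappa}}{s(s-1)}\D s$ for $\gamma>1$, inserting $x=Y_\varepsilon f(\varepsilon)$, $\kappa=kf(\varepsilon)$, taking expectations, and making the substitution $u=s\varepsilon f(\varepsilon)$, I obtain
\begin{equation*}
\mathbb{E}\bigl(\E^{Y_\varepsilon f(\varepsilon)}-\E^{kf(\varepsilon)}\bigr)^+
=\frac{\E^{kf(\varepsilon)}}{2\pi\I}\int_{\gamma'-\I\infty}^{\gamma'+\I\infty}\frac{\varepsilon f(\varepsilon)}{u(u-\varepsilon f(\varepsilon))}\exp\!\left(\frac{\Lambda_\varepsilon(u)-uk}{\varepsilon}\right)\D u,
\end{equation*}
for any $\gamma'>\varepsilon f(\varepsilon)$ in $\mathcal{D}_\varepsilon^o$. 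The plan is then to push the contour onto the vertical line $\Re(u)=u^*(k)$ passing through the saddle point identified by Lemma~\ref{lem:Properties}(i), and to evaluate the resulting integral by the Laplace method.

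Contour deformation contributes residues depending on the position of $u^*(k)$ relative to the poles at $u=0$ and $u=\varepsilon f(\varepsilon)$. When $k>\Lambda_{0,1}(c)$ we have $u^*(k)>\varepsilon f(\varepsilon)$ for small $\varepsilon$ (by strict monotonicity of $\Lambda_{0,1}$) and no residue is picked up. When $\Lambda_{0,1}(0)<k<\Lambda_{0,1}(c)$ only the pole at $u=\varepsilon f(\varepsilon)$ is crossed, whose residue equals $\mathbb{E}\E^{Y_\varepsilon f(\varepsilon)}$; rearranging matches the formula $-\mathbb{E}(\E^{Y_\varepsilon f(\varepsilon)}\wedge \E^{kf(\varepsilon)})$ in the theorem. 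When $k<\Lambda_{0,1}(0)$ both poles are crossed, the extra residue at $u=0$ equals $-1$ (using $\Lambda_\varepsilon(0)=0$), and call-put parity recovers the put value. In every case the remaining contour integral along $\Re(u)=u^*(k)$ is exactly what needs to be expanded.

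On this line I write $u=u^*(k)+\I p_i$, expand $\Lambda_\varepsilon(u^*+\I p_i)-(u^*+\I p_i)k$ using Assumption~\ref{assump:Differentiability}(i)-(ii) in joint Taylor series in $\varepsilon$ and $p_i$, and pull the leading factor $\E^{-\Lambda^*(k)/\varepsilon+\Lambda_1}$ outside the integral. The quadratic piece $-\frac{1}{2}\Lambda_{0,2}(u^*)p_i^2/\varepsilon$ furnishes a Gaussian density of width $\sqrt{\varepsilon/\Lambda_{0,2}}$; the algebraic prefactor $\varepsilon f(\varepsilon)/[(u^*+\I p_i)(u^*+\I p_i-\varepsilon f(\varepsilon))]$ is expanded around $p_i=0$, using $\varepsilon f(\varepsilon)=c+\mathcal O(\varepsilon)$. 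For $c>0$ this produces the leading factor $c/[u^*(u^*-c)]$ times $\sqrt{\varepsilon/(2\pi\Lambda_{0,2})}$, matching the $c\sqrt{\varepsilon}$ term in $\bar A_c$. For $c=0$ the leading numerator vanishes, so one must keep the next term in the expansion $1/(u-\varepsilon f(\varepsilon))=1/u+\varepsilon f(\varepsilon)/u^2+\cdots$, producing instead $\varepsilon^{3/2}f(\varepsilon)/[u^{*2}\sqrt{2\pi\Lambda_{0,2}}]$. Carrying the expansion one further order in $\varepsilon$ (using $\Lambda_1,\Lambda_2$, derivatives $\Lambda_{0,3},\Lambda_{0,4},\Lambda_{1,1},\Lambda_{1,2}$, and Gaussian moments $\int p_i^{2n}\E^{-ap_i^2}\D p_i$) yields after a careful bookkeeping the correction $1+\Upsilon(c,k)\varepsilon$ together with the $u^*(\varepsilon f(\varepsilon)-c)$-type subleading terms in $A_c$.

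The step I expect to dominate the work is this bookkeeping: one has to combine the order-two expansion of the exponent (which contributes through $e^{\text{small}}=1+\text{small}+\text{small}^2/2$) with the order-one expansion of the algebraic prefactor and with fourth-order Gaussian moments, and to show that the cross terms reorganise into precisely the formula~\eqref{eq:Upsilon}. The other genuine technical point is justifying the truncation: restricting the integral to $|p_i|\leq p_i^*$ and showing that the tail $|p_i|\geq p_i^*$ contributes $\mathcal O(\E^{-\Lambda^*(k)/\varepsilon}\varepsilon^{N})$ for arbitrary $N$. This is where Assumption~\ref{assump:Differentiability}(v) enters, via the bound $\Re\Lambda_\varepsilon(u^*+\I p_i)\leq \Re\Lambda_0(u^*+\I p_i)+M\varepsilon$ for $|p_i|\geq p_i^*$ together with the strict maximum property of $L(\cdot)$ at the origin; the technical execution is of the kind routinely handled in sharp large-deviations arguments and is deferred to the appendices.
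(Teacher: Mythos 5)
Your proposal reaches the same conclusion by a genuinely different (though classically equivalent) route from the paper's. Where you start from the Bromwich/Mellin inversion of the call payoff, shift the vertical contour to $\Re(u)=u^*(k)$ and account for the poles at $u=0$ and $u=\varepsilon f(\varepsilon)$ by residues, the paper instead performs a change of probability measure $\D\mathbb{Q}_{k,\varepsilon}/\D\mathbb{P}=\exp\bigl(u^*(k)Y_\varepsilon/\varepsilon-\Lambda_\varepsilon(u^*(k))/\varepsilon\bigr)$ (which is exactly the contour shift viewed probabilistically), defines the rescaled variable $Z_{k,\varepsilon}=(Y_\varepsilon-k)/\sqrt\varepsilon$, and applies Fourier inversion to a damped payoff $\tilde g_j$; the three cases $k>\Lambda_{0,1}(c)$, $k<\Lambda_{0,1}(0)$, $\Lambda_{0,1}(0)<k<\Lambda_{0,1}(c)$ are handled by choosing the call, put, or covered payoff $g_j$ so that the damped version is in $L^1$, rather than by collecting residues. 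The two bookkeepings are equivalent: your residue at $u=\varepsilon f(\varepsilon)$, which contributes $\mathbb{E}(\E^{Y_\varepsilon f(\varepsilon)})$, and at $u=0$, which contributes $-\E^{kf(\varepsilon)}$, reproduce exactly the put--call parity terms that appear in the paper's Corollary~\ref{cor:CallPrice}. Your residue calculations and the leading-order Laplace evaluation (Gaussian width $\sqrt{\varepsilon/\Lambda_{0,2}}$, prefactor $\varepsilon f(\varepsilon)/[u^*(u^*-\varepsilon f(\varepsilon))]$ giving $\bar A_c$) are correct. One minor point of exposition: for $c=0$ the leading term already comes out as $\varepsilon^{3/2}f(\varepsilon)/[u^{*2}\sqrt{2\pi\Lambda_{0,2}}]$ directly from $\varepsilon f(\varepsilon)\cdot\sqrt\varepsilon$ with denominator $u^{*2}$; there is no vanishing numerator forcing a next-order expansion of $1/(u-\varepsilon f(\varepsilon))$ at leading order---that expansion is needed only to obtain the subleading $A_c$ corrections. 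Your route is arguably leaner in motivation (no separate $L^1$ verification lemma, the residues are transparent) but requires the same justification that the shifted contour integral is finite and that the tails are negligible, which is where Assumption~\ref{assump:Differentiability}(v) and the ridge property enter in both treatments.
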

Using Put-Call parity, the theorem can also be read as an expansion for European Call options (or for Put options) for all strikes, except at the two points $\Lambda_{0,1}(0)$ and $\Lambda_{0,1}(c)$:
\begin{corollary}\label{cor:CallPrice}
Under the assumptions of Theorem~\ref{theorem:GeneralOptionAsymp}, we have, 
for $k\in\mathbb{R}\backslash\{\Lambda_{0,1}(0), \Lambda_{0,1}(c)\}$, as $\varepsilon\downarrow 0$:
$$
\mathbb{E}\left(\E^{Y_{\varepsilon}f(\varepsilon)}-\E^{kf(\varepsilon)}\right)^+
=\E^{\Lambda_{\varepsilon}(f(\varepsilon)\varepsilon)/\varepsilon}\ind_{\{k<\Lambda_{0,1}(c)\}}
-\E^{k f(\varepsilon)}\ind_{\{k<\Lambda_{0,1}(0)\}}
+\E^{-\Lambda^*(k)/\varepsilon+k f(\varepsilon)+\Lambda_{1}}
\bar{A}_c(k,\varepsilon)\left[A_c(k,\varepsilon) + \mathcal{O}\left(\varepsilon^2\right)\right].
$$
\end{corollary}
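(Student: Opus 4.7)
The corollary should follow directly from Theorem~\ref{theorem:GeneralOptionAsymp} by Put–Call parity together with an identification of the case structure. The plan is to verify that the unified right-hand side in the corollary reproduces each of the three cases of the theorem, depending on whether~$k$ falls above~$\Lambda_{0,1}(c)$, between $\Lambda_{0,1}(0)$ and~$\Lambda_{0,1}(c)$, or below~$\Lambda_{0,1}(0)$. By Lemma~\ref{lem:Properties}(i), the map $\Lambda_{0,1}$ is strictly increasing on~$\mathcal{D}_0^o$, so since $c\geq 0$ one has $\Lambda_{0,1}(0)\leq \Lambda_{0,1}(c)$, and these three regions partition $\mathbb{R}\setminus\{\Lambda_{0,1}(0),\Lambda_{0,1}(c)\}$.

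First I would record the basic identity $\mathbb{E}\bigl[\E^{Y_\varepsilon f(\varepsilon)}\bigr]=\E^{\Lambda_\varepsilon(f(\varepsilon)\varepsilon)/\varepsilon}$, which is immediate from the definition~\eqref{eq:Renorm-mgf} once one notes that $f(\varepsilon)\varepsilon\in\mathcal{D}_\varepsilon$ for all sufficiently small~$\varepsilon$ (since $f(\varepsilon)\varepsilon\to c\in\mathcal{D}_0^o$). Next I would invoke the elementary algebraic identities $(a-b)^+=(b-a)^+ + (a-b)$ and $(a-b)^+ = a - a\wedge b$ (valid for any $a,b\geq 0$), applied pathwise with $a=\E^{Y_\varepsilon f(\varepsilon)}$ and $b=\E^{kf(\varepsilon)}$ and then expectation taken.

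The case analysis is then mechanical. In the regime $k>\Lambda_{0,1}(c)$ both indicator terms on the right-hand side of the corollary vanish, and the theorem's first branch gives the expansion directly. For $k<\Lambda_{0,1}(0)$ both indicators equal~one, and Put–Call parity yields
\begin{equation*}
\mathbb{E}\bigl(\E^{Y_\varepsilon f(\varepsilon)}-\E^{kf(\varepsilon)}\bigr)^+
= \mathbb{E}\bigl(\E^{kf(\varepsilon)}-\E^{Y_\varepsilon f(\varepsilon)}\bigr)^+
+ \E^{\Lambda_\varepsilon(f(\varepsilon)\varepsilon)/\varepsilon} - \E^{kf(\varepsilon)},
\end{equation*}
and the theorem's second branch handles the put expectation. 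Finally, for $\Lambda_{0,1}(0)<k<\Lambda_{0,1}(c)$ only the first indicator is active, and the identity $(a-b)^+=a-a\wedge b$ combined with the third branch of the theorem gives exactly $\E^{\Lambda_\varepsilon(f(\varepsilon)\varepsilon)/\varepsilon}$ plus the claimed expansion term.

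There is essentially no analytic obstacle here: the entire corollary is algebraic bookkeeping on top of Theorem~\ref{theorem:GeneralOptionAsymp}. The only point that requires mild care is checking that $f(\varepsilon)\varepsilon$ lies in the effective domain so that $\E^{\Lambda_\varepsilon(f(\varepsilon)\varepsilon)/\varepsilon}$ is finite and equal to the first moment, which follows from~\eqref{eq:LDP Rescaling} and $c\in\mathcal{D}_0^o$.
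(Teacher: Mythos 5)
Your proof is correct and matches the paper's own (sketched) argument: the paper simply notes that the corollary follows from Theorem~\ref{theorem:GeneralOptionAsymp} by Put--Call parity, and your write-up is exactly the case-by-case bookkeeping that this remark compresses, including the correct identification $\mathbb{E}\bigl[\E^{Y_\varepsilon f(\varepsilon)}\bigr]=\E^{\Lambda_\varepsilon(f(\varepsilon)\varepsilon)/\varepsilon}$ and the monotonicity of $\Lambda_{0,1}$ to order the three regimes.
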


\subsection{Forward-start option asymptotics}
We now specialise Theorem~\ref{theorem:GeneralOptionAsymp} to forward-start option asymptotics.
For a stochastic process~$(X_t)_{t\geq 0}$, we define (pathwise), for any~$t\geq 0$, the process $(X_{\tau}^{(t)})_{\tau\geq 0}$ by
\begin{equation}\label{eq:XtTauDef}
X_{\tau}^{(t)}:=X_{t+\tau}-X_{t}.
\end{equation}
\subsubsection{Diagonal small-maturity asymptotics}\label{sec:SmallDiag}
We first consider asymptotics when both $t$ and $\tau$ are small, 
which we term \textit{diagonal small-maturity asymptotics}.
Set $(Y_{\varepsilon}):=(X^{(\varepsilon t)}_{\varepsilon\tau})$ and $f\equiv 1$. 
Then~$c=0$ and the following corollary follows from Theorem~\ref{theorem:GeneralOptionAsymp}:
\begin{corollary}\label{cor:ShortTimeAsymp}
If $(X^{(\varepsilon t)}_{\varepsilon\tau})_{\varepsilon>0}$ satisfies Assumption~\ref{assump:Differentiability}, then the following holds for $k\ne\Lambda_{0,1}(0)$,
as $\varepsilon\downarrow 0$:
\begin{equation*}
\frac{\E^{{-\Lambda^{*}(k)/\varepsilon+k+\Lambda_{1}}}{\varepsilon}^{3/2}}{u^*(k)^2\sqrt{2\pi \Lambda_{0,2}}}
\left(1+\left(\Upsilon(0,k)+\frac{1}{u^*(k)}\right)\varepsilon +\mathcal{O}\left(\varepsilon^2\right)\right) 
=\left\{
\begin{array}{ll}
\mathbb{E}\left(\E^{X^{\left(\varepsilon t\right)}_{\varepsilon\tau}}-\E^k\right)^+,
\quad & \text{if }k>\Lambda_{0,1}(0), \\
\mathbb{E}\left(\E^k-\E^{X^{\left(\varepsilon t\right)}_{\varepsilon\tau}}\right)^+,
\quad & \text{if }k<\Lambda_{0,1}(0).
\end{array}
\right.
\end{equation*}
\end{corollary}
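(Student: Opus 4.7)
The plan is to apply Theorem~\ref{theorem:GeneralOptionAsymp} directly with $Y_{\varepsilon}:=X^{(\varepsilon t)}_{\varepsilon\tau}$ and $f\equiv 1$, and then simplify the generic prefactors $\bar{A}_c$ and $A_c$. The hypotheses of the theorem are available: Assumption~\ref{assump:Differentiability} is imposed by the statement of the corollary, the choice $f\equiv 1$ trivially satisfies~\eqref{eq:LDP Rescaling} since $\varepsilon f(\varepsilon)=\varepsilon=0+\mathcal{O}(\varepsilon)$, so $c=0$, and by Assumption~\ref{assump:Differentiability}(iii) we have $c=0\in\mathcal{D}_{0}^{o}\cap\mathbb{R}_{+}$.

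Next I would substitute $c=b=0$ and $f\equiv 1$ into the definitions of $\bar{A}_c$ and $A_c$. The indicator $\ind_{\{c>0\}}$ vanishes and $\ind_{\{c=0\}}$ equals one, so that
\[
\bar{A}_{0}(k,\varepsilon)
 = \frac{\varepsilon^{3/2}}{u^{*}(k)^{2}\sqrt{2\pi\Lambda_{0,2}(u^{*}(k))}},
\qquad
A_{0}(k,\varepsilon)
 = 1+\left(\Upsilon(0,k)+\frac{1}{u^{*}(k)}\right)\varepsilon,
\]
where the third summand in the definition of $A_{b}$ drops out owing to $\ind_{\{b>0\}}=0$, and the fourth summand collapses to $\varepsilon/u^{*}(k)$. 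Multiplying by $\E^{-\Lambda^{*}(k)/\varepsilon+kf(\varepsilon)+\Lambda_{1}}=\E^{-\Lambda^{*}(k)/\varepsilon+k+\Lambda_{1}}$ and absorbing the theorem's $\mathcal{O}(\varepsilon^{2})$ remainder into the bracketed factor (this is legitimate because the prefactor $\bar{A}_{0}$ does not depend on the remainder) reproduces the left-hand side of the corollary verbatim.

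Finally, the case distinction collapses: with $c=0$ we have $\Lambda_{0,1}(c)=\Lambda_{0,1}(0)$, so the exclusion set $\{\Lambda_{0,1}(0),\Lambda_{0,1}(c)\}$ is the single point $\Lambda_{0,1}(0)$, and the intermediate regime $\Lambda_{0,1}(0)<k<\Lambda_{0,1}(c)$ of Theorem~\ref{theorem:GeneralOptionAsymp} is vacuous. Only the call branch ($k>\Lambda_{0,1}(0)$) and the put branch ($k<\Lambda_{0,1}(0)$) survive, matching the two cases of the corollary.

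Since the corollary is a pure substitution into an already-proved theorem there is no genuine obstacle; the only point requiring mild care is ensuring that the $c=0$ branch of the piecewise definitions of $\bar{A}_{b}$ and $A_{b}$ is the one picked, which is immediate from $\varepsilon f(\varepsilon)=\varepsilon\downarrow 0$, and that the $\Lambda_{0,1}(c)$ endpoint merges with $\Lambda_{0,1}(0)$ so that no additional excluded strike appears.
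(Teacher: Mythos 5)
Your proof is correct and follows exactly the paper's route: the paper simply notes that with $Y_{\varepsilon}=X^{(\varepsilon t)}_{\varepsilon\tau}$ and $f\equiv 1$ (hence $c=0$) the corollary follows from Theorem~\ref{theorem:GeneralOptionAsymp}, and your explicit evaluation of $\bar{A}_{0}$, $A_{0}$, and the collapse of the case distinction supplies the same substitution in more detail. No gaps.
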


In the Black-Scholes model, all the quantities above can be computed explicitly and we obtain:
\begin{corollary}\label{Cor:BSOptionSmallTime}
In the BSM model~\eqref{eq:BSDynamics} the following expansion holds for all $k\ne 0$, as $\varepsilon\downarrow 0$:
\begin{equation*}
\frac{\E^{k/2-k^2/(2\Sigma^2 \tau\varepsilon)}\left(\Sigma^2\tau\varepsilon\right)^{3/2}}{k^2\sqrt{2\pi}}\left[1-\left(\frac{3}{k^2}+\frac{1}{8}\right)\Sigma^2\tau\varepsilon+\mathcal{O}(\varepsilon^2)\right]=\left\{
\begin{array}{ll}
\mathbb{E}\left(\E^{X^{\left(\varepsilon t\right)}_{\varepsilon\tau}}-\E^k\right)^+,
\quad & \text{if }k>0, \\
\mathbb{E}\left(\E^k-\E^{X^{\left(\varepsilon t\right)}_{\varepsilon\tau}}\right)^+,
\quad & \text{if }k<0.
\end{array}
\right.
\end{equation*}
\end{corollary}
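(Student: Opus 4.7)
The plan is to specialise Corollary~\ref{cor:ShortTimeAsymp} to the Black-Scholes setting by verifying Assumption~\ref{assump:Differentiability} and computing every ingredient of the expansion in closed form. The only work is bookkeeping: nothing dynamical beyond the fact that $X^{(\varepsilon t)}_{\varepsilon\tau}\sim\Nn(-\tfrac12\Sigma^2\varepsilon\tau,\Sigma^2\varepsilon\tau)$ is needed, since the forward-start increment in BSM is stationary (independent of~$t$).

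First I would compute the renormalised lmgf exactly. From the Gaussian law of $X^{(\varepsilon t)}_{\varepsilon\tau}$,
$$
\Lambda_{\varepsilon}(u)=\varepsilon\log\mathbb{E}\bigl[\E^{uX^{(\varepsilon t)}_{\varepsilon\tau}/\varepsilon}\bigr]
=\frac{u^2\Sigma^2\tau}{2}-\frac{u\varepsilon\Sigma^2\tau}{2},
$$
so the Taylor coefficients in~\eqref{eq:LambdaExpansion} are $\LO(u)=\tfrac12 u^2\Sigma^2\tau$, $\Lambda_{1}(u)=-\tfrac12 u\Sigma^2\tau$, $\Lambda_{2}\equiv 0$, with remainder identically zero. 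In particular $\mathcal{D}_\varepsilon=\mathcal{D}_0=\RR$, and Assumption~\ref{assump:Differentiability}(i)-(iv) are immediate: $\LO$ is smooth, strictly convex and steep on~$\RR$. For item~(v), fix any $p_r\in\RR^*$; then $\Lambda_\varepsilon(\I p_i+p_r)-\LO(\I p_i+p_r)=-\tfrac12(\I p_i+p_r)\varepsilon\Sigma^2\tau$ is affine in $\varepsilon$ and its real part equals $-\tfrac12 p_r\varepsilon\Sigma^2\tau$, independent of $p_i$, so (a), (b) and (c) are all trivially satisfied (the function $L$ in (b) is simply the quadratic $p_i\mapsto\tfrac12(p_r^2-p_i^2)\Sigma^2\tau$).

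Next I would evaluate the saddlepoint and all derivatives needed by $\Upsilon(0,k)$. Solving $\Lambda_{0,1}(u^*(k))=u^*(k)\Sigma^2\tau=k$ gives $u^*(k)=k/(\Sigma^2\tau)$ and hence
$$
\Lambda^*(k)=\frac{k^2}{2\Sigma^2\tau},\qquad \Lambda_{1}(u^*(k))=-\frac{k}{2},\qquad \Lambda_{0,2}=\Sigma^2\tau,
$$
with $\Lambda_{0,3}=\Lambda_{0,4}=\Lambda_{1,2}=\Lambda_2=0$ and $\Lambda_{1,1}=-\Sigma^2\tau/2$. Substituting these into the prefactor of Corollary~\ref{cor:ShortTimeAsymp} immediately gives
$$
\frac{\E^{-\Lambda^*(k)/\varepsilon+k+\Lambda_1}\varepsilon^{3/2}}{u^*(k)^2\sqrt{2\pi\Lambda_{0,2}}}
=\frac{\E^{k/2-k^2/(2\Sigma^2\tau\varepsilon)}(\Sigma^2\tau\varepsilon)^{3/2}}{k^2\sqrt{2\pi}},
$$
which is the leading term claimed.

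The last step is the first-order correction. Setting $b=0$ in the expression~\eqref{eq:Upsilon} for $\Upsilon$, all terms involving $\Lambda_{0,3}$, $\Lambda_{0,4}$, $\Lambda_{1,2}$, $\Lambda_2$ or $b^2$ vanish, leaving
$$
\Upsilon(0,k)=-\frac{\Lambda_{1,1}^2}{2\Lambda_{0,2}}-\frac{-2\Lambda_{1,1}u^*(k)+3}{u^*(k)^2\Lambda_{0,2}}
=-\Sigma^2\tau\left(\frac{1}{8}+\frac{k+3}{k^2}\right).
$$
Adding $1/u^*(k)=\Sigma^2\tau/k$ as dictated by Corollary~\ref{cor:ShortTimeAsymp} telescopes the $k$-linear terms and yields
$$
\Upsilon(0,k)+\frac{1}{u^*(k)}=-\Sigma^2\tau\left(\frac{3}{k^2}+\frac{1}{8}\right),
$$
which matches the stated correction. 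The remainder $\mathcal{O}(\varepsilon^2)$ is inherited directly from Corollary~\ref{cor:ShortTimeAsymp}. There is no genuine obstacle here; the only mild care required is the verification of Assumption~\ref{assump:Differentiability}(v), and even that reduces to a one-line observation because~$\Lambda_\varepsilon-\LO$ is linear in~$\varepsilon$ and has a bounded real part in~$p_i$.
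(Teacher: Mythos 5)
Your proposal is correct and takes essentially the same route as the paper: compute the exact BSM renormalised lmgf $\Lambda_\varepsilon(u)=\tfrac12 u^2\Sigma^2\tau-\tfrac12 u\varepsilon\Sigma^2\tau$, read off $\LO,\Lambda_1,u^*,\Lambda^*$ and the relevant derivatives, verify Assumption~\ref{assump:Differentiability}, and substitute into Corollary~\ref{cor:ShortTimeAsymp}. You merely spell out the verification of Assumption~\ref{assump:Differentiability}(v) and the algebra giving $\Upsilon(0,k)+1/u^*(k)=-\Sigma^2\tau(3/k^2+1/8)$, which the paper leaves implicit; both are correct.
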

\begin{proof}
For the rescaled (forward) process $(X^{(\varepsilon t)}_{\varepsilon\tau})_{\varepsilon>0}$ in the BSM model~\eqref{eq:BSDynamics} we have
$\Lambda_{\varepsilon}(u)=\LO(u)+\varepsilon\Lambda_{1}(u)$
for $u\in\mathbb{R}$, where
$\LO(u)=u^2\sigma^2\tau/2$
and 
$\Lambda_{1}(u)=-u\sigma^2\tau/2$.
It follows that
$\Lambda_{0,1}(u)=u\sigma^2\tau$, 
$\Lambda_{0,2}(u)=\sigma^2\tau$ and 
$\Lambda_{1,1}(u)=-\sigma^2\tau/2$.
For any $k\in\mathbb{R}$,
$u^*(k):=k/(\sigma^2\tau)$ is the unique solution to the equation $\Lambda_{0,1}(u^*(k))=k$ and $\Lambda^*(k)=k^2/(2\sigma^2\tau)$.
$\LO$ is essentially smooth and strictly convex on $\mathbb{R}$ 
and the BSM model satisfies the other conditions in Assumption~\ref{assump:Differentiability}. 
Since $\Lambda_{0,1}(0)=0$, the result follows from Corollary~\ref{cor:ShortTimeAsymp}.
\end{proof}

It is natural to wonder why we considered diagonal small-maturity asymptotics 
and not the small-maturity asymptotic of~$\sigma_{t,\tau}$ for fixed $t>0$. 
In this case it turns out that in many cases of interest (stochastic volatility models, time-changed exponential L\'evy models), 
the forward smile blows up to infinity (except at-the-money) as $\tau$ tends to zero. 
However under the assumptions given above, this degenerate behaviour does not occur in the diagonal small-maturity regime (Corollary~\ref{cor:ShortTimeAsymp}).
In the Heston case, this explosive behaviour has been studied in~\cite{JR2013}.
More generally, we can provide a preliminary conjecture explaining the origin of this behaviour.
Consider a two-state Markov-chain
$\D X_t = -\frac{1}{2}V\D t+ \sqrt{V}\D W_t$, starting at $X_0=0$,
where $W$ is a standard Brownian motion and where $V$ is independent of $W$ and takes value $V_1$ 
with probability $p\in (0,1)$ and value $V_2 \in (0,V_1)$ with probability $1-p$. 
Conditioning on $V$ and by the independence assumption, we have
$$
\mathbb{E}\left(\E^{u(X_{t+\tau}-X_t)}\right)=p\E^{V_1 u \tau (u-1) /2}+(1-p)\E^{V_2 u \tau (u-1) /2},
\qquad\text{for all }u\in\mathbb{R}.
$$
Consider now the small-maturity regime where $\varepsilon=\tau$, $f(\varepsilon)\equiv1$ and $Y_{\varepsilon}:=X_{\varepsilon}^{(t)}$ for a fixed $t>0$. 
In this case an expansion for the re-scaled lmgf in~\eqref{eq:LambdaExpansion} as $\tau$ tends to zero is given by
$$
\Lambda_{\varepsilon}(u)
=\tau\log\mathbb{E}\left(\E^{u(X_{t+\tau}-X_t)/\tau}\right)
=\frac{V_1}{2}u^2 + \tau \log\left(p\E^{-V_1 u/2}\right)+\tau\mathcal{O}\left(\E^{-u^2(V_1-V_2)/(2\tau)}\right),
\qquad\text{for all }u\in\mathbb{R}.
$$
Since $V_1>V_2$ the remainder tends to zero exponentially fast as $\tau\downarrow 0$. 
The assumptions of Theorem~\ref{theorem:GeneralOptionAsymp} are clearly satisfied and a simple calculation shows that 
$\lim_{\tau \downarrow 0}\sigma_{t,\tau}(k)=\sqrt{V_1}$.
This example naturally extends to $n$-state Markov chains, and a natural conjecture is 
that the small-maturity forward smile does not blow up if and only if the quadratic variation of the process is bounded.
In practice, most models have unbounded quadratic variation 
(see examples in Section~\ref{sec:Examples}), 
and hence the diagonal small-maturity asymptotic is a natural scaling.

\subsubsection{Large-maturity asymptotics}\label{sec:LargeMatGeneral}
We now consider large-maturity asymptotics, when $\tau$ is large and $t$ is fixed.
Consider $(Y_\varepsilon):=(\varepsilon X^{(t)}_{1/\varepsilon})$, $\varepsilon:=1/\tau$ and
$f(\varepsilon)\equiv 1/\varepsilon$ (so that $c=1$).
Proposition~\ref{theorem:GeneralOptionAsymp} then applies and we obtain the following expansion
for forward-start options:
\begin{corollary}\label{cor:LargeTimeAsymp}
If~$(\tau^{-1}X^{(t)}_{\tau})_{\tau>0}$ satisfies Assumption~\ref{assump:Differentiability}
with $\varepsilon=\tau^{-1}$ and $1\in\mathcal{D}_{0}^o$, then the following expansion holds for all 
$k\neq \{\Lambda_{0,1}(0), \Lambda_{0,1}(1)\}$ as $\tau\uparrow \infty$:
\begin{equation*}
\frac{\E^{-\tau\left(\Lambda ^*(k)-k \right)+\Lambda_{1}}\tau^{-1/2}}
{u^*(k) \left(u^*(k)-1\right)\sqrt{2\pi\Lambda _{0,2}}}
\left(1+\frac{\Upsilon(1,k)}{\tau}+\mathcal{O}\left(\frac{1}{\tau ^2}\right)\right)
 = \left\{
\begin{array}{ll}
\mathbb{E}\left(\E^{X^{(t)}_{\tau}}-\E^{k\tau}\right)^+,
\quad & \text{if }k>\Lambda_{0,1}(1),\\
\mathbb{E}\left(\E^{k\tau}-\E^{X^{(t)}_{\tau}}\right)^+,
\quad & \text{if }k<\Lambda_{0,1}(0),\\
- \mathbb{E}\left(\E^{X^{(t)}_{\tau}}\wedge \E^{k\tau}\right),
\quad & \text{if }\Lambda_{0,1}(0)<k<\Lambda_{0,1}(1).
\end{array}
\right.
\end{equation*}
\end{corollary}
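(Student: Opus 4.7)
The plan is to apply Theorem~\ref{theorem:GeneralOptionAsymp} directly with the identifications $Y_\varepsilon := \varepsilon X^{(t)}_{1/\varepsilon}$, $\varepsilon := 1/\tau$, and $f(\varepsilon) \equiv 1/\varepsilon$, so that $c := \lim_{\varepsilon\downarrow 0}\varepsilon f(\varepsilon) = 1$. With these choices the scaling hypothesis~\eqref{eq:LDP Rescaling} is satisfied exactly (with zero error), and the standing hypothesis of the corollary supplies both Assumption~\ref{assump:Differentiability} for $(Y_\varepsilon)$ and the required membership $c = 1\in\mathcal{D}_0^o$. Therefore every hypothesis of Theorem~\ref{theorem:GeneralOptionAsymp} is in force, and the exceptional strike $k = \Lambda_{0,1}(c) = \Lambda_{0,1}(1)$ appearing in the theorem is precisely the extra exclusion in the corollary's statement.

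Next I would unwind the payoffs and the exponential prefactor. The substitutions yield $Y_\varepsilon f(\varepsilon) = X^{(t)}_\tau$, $kf(\varepsilon) = k\tau$, and $-\Lambda^*(k)/\varepsilon + kf(\varepsilon) = -\tau(\Lambda^*(k) - k)$, which immediately reproduces the three payoffs on the right-hand side and the exponential factor $\E^{-\tau(\Lambda^*(k) - k) + \Lambda_1}$ on the left. The trichotomy $\{k > \Lambda_{0,1}(c)\}$, $\{\Lambda_{0,1}(0) < k < \Lambda_{0,1}(c)\}$, $\{k < \Lambda_{0,1}(0)\}$ in Theorem~\ref{theorem:GeneralOptionAsymp} becomes verbatim the three cases of the corollary with $c = 1$.

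The remaining task is to simplify $\bar A_c$ and $A_c$ at $b = c = 1$. Since $b = 1 > 0$ the indicators collapse, yielding $\bar A_1(k,\varepsilon) = \sqrt{\varepsilon}/[u^*(k)(u^*(k) - 1)\sqrt{2\pi\Lambda_{0,2}}]$. Moreover the identity $\varepsilon f(\varepsilon) - b = 1 - 1 = 0$ kills the third summand in the definition of $A_b$, leaving $A_1(k,\varepsilon) = 1 + \Upsilon(1,k)\varepsilon$. Re-substituting $\varepsilon = 1/\tau$ converts $\sqrt{\varepsilon}$ into $\tau^{-1/2}$, $\Upsilon(1,k)\varepsilon$ into $\Upsilon(1,k)/\tau$, and $\mathcal{O}(\varepsilon^2)$ into $\mathcal{O}(\tau^{-2})$, reproducing the displayed formula.

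There is no substantive obstacle: the corollary is pure bookkeeping on top of Theorem~\ref{theorem:GeneralOptionAsymp}. The one subtle point worth highlighting is that the cancellation $\varepsilon f(\varepsilon) = c$ holds identically here rather than only to order $\mathcal{O}(\varepsilon)$, which is what eliminates an otherwise $\mathcal{O}(1)$ correction inside $A_c$ and keeps the large-$\tau$ expansion sharp at the stated order $\mathcal{O}(\tau^{-2})$.
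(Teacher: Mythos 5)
Your proposal is correct and follows exactly the paper's own route: identify $Y_\varepsilon := \varepsilon X^{(t)}_{1/\varepsilon}$, $\varepsilon := 1/\tau$, $f(\varepsilon) := 1/\varepsilon$ (so $c=1$), and specialize Theorem~\ref{theorem:GeneralOptionAsymp}, collapsing the indicators in $\bar A_c$ and $A_c$ and using $\varepsilon f(\varepsilon)\equiv 1$ to kill the extra term in $A_1$. The paper treats this as an immediate substitution; your write-out of the bookkeeping (exponent, prefactor, trichotomy, exact cancellation in $A_1$) is a faithful, slightly more detailed rendering of the same argument.
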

In the Black-Scholes model, all the quantities above can be computed in closed form, and we obtain:
\begin{corollary}\label{Cor:BSOptionLargeTime}
In the BSM model~\eqref{eq:BSDynamics} the following expansion holds for all $k\notin\left\{-\Sigma^2/2,\Sigma^2/2\right\}$ as $\tau\uparrow\infty$:
\begin{equation*}
\frac{ \E^{-\tau\left(\left(k+\Sigma^2/2\right)^2/(2\Sigma^2)-k\right)} 4\Sigma^3}{\left(4 k^2-\Sigma^4\right)\sqrt{2\pi\tau}}\left(1-\frac{4\Sigma^2\left(\Sigma^4+12k^2\right)}{\left(4k^2-\Sigma^4\right)^2\tau}+\mathcal{O}\left(\frac{1}{\tau^2}\right)\right)
 = \left\{
\begin{array}{ll}
\mathbb{E}\left(\E^{X^{(t)}_{\tau}}-\E^{k\tau}\right)^+,
\quad & \text{if }k>\frac{1}{2}\Sigma^2,\\
\mathbb{E}\left(\E^{k\tau}-\E^{X^{(t)}_{\tau}}\right)^+,
\quad & \text{if }k<-\frac{1}{2}\Sigma^2,\\
- \mathbb{E}\left(\E^{X^{(t)}_{\tau}}\wedge \E^{k\tau}\right),
\quad & \text{if }-\frac{1}{2}\Sigma^2<k<\frac{1}{2}\Sigma^2.
\end{array}
\right.
\end{equation*}
\end{corollary}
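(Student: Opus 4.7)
The strategy is to apply Corollary~\ref{cor:LargeTimeAsymp} with $\varepsilon=1/\tau$ and $Y_\varepsilon=\varepsilon X^{(t)}_{1/\varepsilon}$. The first step is to compute the re-normalised lmgf and check Assumption~\ref{assump:Differentiability}. In the BSM model~\eqref{eq:BSDynamics}, the increment $X^{(t)}_\tau=X_{t+\tau}-X_t$ is Gaussian with mean $-\Sigma^2\tau/2$ and variance $\Sigma^2\tau$ (independently of~$t$), so
$$
\Lambda_\varepsilon(u)=\varepsilon\log\mathbb{E}\left[\exp\left(uX^{(t)}_{1/\varepsilon}\right)\right]=\tfrac{1}{2}u(u-1)\Sigma^2,
\qquad u\in\mathbb{R}.
$$
Thus $\Lambda_0(u)=\tfrac12 u(u-1)\Sigma^2$ and $\Lambda_i\equiv 0$ for $i\geq 1$; in particular the expansion~\eqref{eq:LambdaExpansion} is exact, differentiability is immediate, $\mathcal{D}_0=\mathbb{R}$, and $\Lambda_0$ is strictly convex and essentially smooth. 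The tail error control in (v) is trivial since $\Lambda_\varepsilon-\Lambda_0\equiv 0$, and the required boundedness-away-from-maximum of $p_i\mapsto\Re\Lambda_0(\I p_i+p_r)=\tfrac12(p_r^2-p_i^2-p_r)\Sigma^2$ is clear.

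Next I would solve for the saddlepoint. Since $\Lambda_{0,1}(u)=(u-\tfrac12)\Sigma^2$, equation~\eqref{eq:u*definition} gives $u^*(k)=k/\Sigma^2+1/2$, whence
$$
\Lambda^*(k)=u^*(k)\,k-\tfrac12 u^*(k)(u^*(k)-1)\Sigma^2=\frac{(k+\Sigma^2/2)^2}{2\Sigma^2},
$$
and the excluded points are $\Lambda_{0,1}(0)=-\Sigma^2/2$ and $\Lambda_{0,1}(1)=\Sigma^2/2$, matching the statement. Also $\Lambda_{0,2}(u^*(k))=\Sigma^2$ and $u^*(k)(u^*(k)-1)=(4k^2-\Sigma^4)/(4\Sigma^4)$, so the leading prefactor in Corollary~\ref{cor:LargeTimeAsymp} becomes
$$
\frac{\tau^{-1/2}}{u^*(k)(u^*(k)-1)\sqrt{2\pi\Lambda_{0,2}}}=\frac{4\Sigma^3}{(4k^2-\Sigma^4)\sqrt{2\pi\tau}},
$$
and $\E^{\Lambda_1}=1$. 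This already reproduces the leading exponential factor $\exp(-\tau(\Lambda^*(k)-k))$ displayed in the statement.

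Finally I would simplify $\Upsilon(1,k)$ in~\eqref{eq:Upsilon}. Because $\Lambda_1,\Lambda_2$ and every derivative of~$\Lambda_0$ of order at least three vanish, all terms in~\eqref{eq:Upsilon} involving $\Lambda_{0,3}$, $\Lambda_{0,4}$, $\Lambda_{1,1}$, $\Lambda_{1,2}$ or $\Lambda_2$ drop out, leaving only
$$
\Upsilon(1,k)=-\frac{3}{u^*(k)(u^*(k)-1)\,\Sigma^2}-\frac{1}{u^*(k)^2(u^*(k)-1)^2\,\Sigma^2}.
$$
Substituting $u^*(k)(u^*(k)-1)=(4k^2-\Sigma^4)/(4\Sigma^4)$ and putting the two fractions over a common denominator yields
$$
\Upsilon(1,k)=-\frac{12\Sigma^2(4k^2-\Sigma^4)+16\Sigma^6}{(4k^2-\Sigma^4)^2}=-\frac{4\Sigma^2(\Sigma^4+12k^2)}{(4k^2-\Sigma^4)^2},
$$
which is exactly the $1/\tau$ correction in the statement. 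Plugging these quantities into Corollary~\ref{cor:LargeTimeAsymp} gives the claim. There is no real obstacle here; the only point requiring care is the bookkeeping in the simplification of~$\Upsilon(1,k)$, in particular tracking the factors of $\Sigma^2=\Lambda_{0,2}$ consistently across the two surviving terms.
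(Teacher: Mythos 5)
Your proof is correct and follows essentially the same route as the paper's: compute the Gaussian lmgf, note that $\Lambda_\varepsilon=\Lambda_0$ exactly so all higher-order $\Lambda_i$ and derivatives of order $\geq 3$ vanish, read off $u^*(k)$, $\Lambda^*(k)$, $\Lambda_{0,2}$, and plug into Corollary~\ref{cor:LargeTimeAsymp}. The only difference is that you carry out the simplification of $\Upsilon(1,k)$ explicitly, whereas the paper's proof leaves this final algebraic step implicit; your computation of $\Upsilon(1,k)=-4\Sigma^2(\Sigma^4+12k^2)/(4k^2-\Sigma^4)^2$ is correct.
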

\begin{proof}
Consider the  process $(X^{(t)}_{\tau}/\tau)_{\tau>0}$ and set $\varepsilon=\tau^{-1}$. 
In the BSM model~\eqref{eq:BSDynamics}, 
$\Lambda_{\varepsilon} (u) 
:= \tau^{-1}\log\mathbb{E}(\exp(uX_{\tau}^{(t)}))
 = \LO(u)
 = \frac{1}{2}\Sigma^2u(u-1)$.
Thus
$\Lambda_{0,1}(u)=\Sigma^2\left(u-1/2\right)$ and 
$\Lambda_{0,2}(u)=\Sigma^2$.
For any $k\in\mathbb{R}$, $\Lambda_{0,1}(u^*(k))=k$ 
has a unique solution 
$u^*(k)=1/2+k/\Sigma^2$ and hence
$\Lambda^*(k)=\left(k+\Sigma^2/2\right)^2/(2\Sigma^2)$.
$\LO$ is essentially smooth and strictly convex on $\mathbb{R}$ 
and Assumption~\ref{assump:Differentiability} is satisfied. 
Since $\{0,1\}\subset\mathcal{D}_{0}^o$ the result follows from Corollary~\ref{cor:LargeTimeAsymp}.
\end{proof}

\subsection{Forward smile asymptotics}\label{sec:FwdSmile}
We now translate the forward-start option expansions above into asymptotics of the forward implied volatility smile 
$k\mapsto\sigma_{t,\tau}(k)$, which was defined in the introduction.

\subsubsection{Diagonal small-maturity forward smile}
We first focus on the diagonal small-maturity case. For $i=0,1,2$ we define the functions
$v_i:\mathbb{R}^*\times\mathbb{R}_{+}\times\mathbb{R}_{+}^*\to\mathbb{R}$ by
\begin{equation}\label{eq:v01SmallTime}
\left.
\begin{array}{rll}
v_0(k,t,\tau) & := \displaystyle \frac{k^2}{2 \tau  \Lambda^*(k)},\\
v_1(k,t,\tau) & := \displaystyle \frac{v_0(k,t,\tau)^2 \tau}{k}
\left[
1+ \frac{2}{k}
\log \left(\frac{k^2 \E^{\Lambda_{1}(u^*(k))}}{u^*(k)^2 \sqrt{\Lambda _{0,2}(u^*(k))}\left({\tau  v_0(k,t,\tau)}\right)^{3/2}}\right)
\right],\\
\\
v_2(k,t,\tau) & := \displaystyle \frac{2\tau ^2 v_0^3(k,t,\tau)}{k^2}\left(\frac{3}{k^2}+\frac{1}{8}\right)
 +\frac{2 \tau  v_0^2(k,t,\tau)}{k^2}\left(\Upsilon(0,k)+\frac{1}{u^*(k)}\right)\\
 & \displaystyle +\frac{v_1^2(k,t,\tau)}{v_0(k,t,\tau)}-\frac{3 \tau}{k^2}v_0(k,t,\tau) v_1(k,t,\tau),
\end{array}
\right.
\end{equation}
where $\Lambda^*$, $u^*$, $\Lambda_{i,l}$, $\Upsilon$ are defined in~\eqref{eq:LambdaStarDefinition}~\eqref{eq:u*definition},~\eqref{eq:LambdaDerivatives},~\eqref{eq:Upsilon}.
The diagonal small-maturity forward smile asymptotic is now given in the following proposition, proved in Section~\ref{sec:proofMainProp}.
\begin{proposition}\label{Prop:GeneralBSFwdVolShortTime}
Suppose that $(X^{(\varepsilon t)}_{\varepsilon\tau})_{\varepsilon>0}$ satisfies Assumption~\ref{assump:Differentiability} and that $\Lambda_{0,1}(0)=0$ 
(defined in~\eqref{eq:LambdaDerivatives}).
The following expansion then holds for the corresponding forward smile for all $k\in\mathbb{R}^*$ 
as $\varepsilon$ tends to zero:
\begin{equation}\label{eq:DiagSmallMaturityForwardSmile}
\sigma_{\varepsilon t,\varepsilon \tau}^2(k)
 = v_0(k,t,\tau)+v_1(k,t,\tau) \varepsilon +v_2(k,t,\tau) \varepsilon ^2+\mathcal{O}\left(\varepsilon ^3\right).
\end{equation}
\end{proposition}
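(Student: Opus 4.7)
The plan is to determine the coefficients $v_0, v_1, v_2$ by matching, to order $\varepsilon^2$, the exact asymptotic expansion of the forward-start option price from Corollary~\ref{cor:ShortTimeAsymp} against the Black-Scholes price $C_{\mathrm{BS}}(\varepsilon\tau, k, \sigma_{\varepsilon t, \varepsilon\tau}(k))$ viewed as a function of an unknown $\varepsilon$-dependent implied volatility. The hypothesis $\Lambda_{0,1}(0)=0$ ensures that for any $k\ne 0$ both corollaries apply in their OTM regime on the appropriate side of the strike $\Lambda_{0,1}(0)=0$; the case $k<0$ is handled by put-call parity, so it suffices to focus on $k>0$. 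Existence of a smooth ansatz $\sigma_{\varepsilon t, \varepsilon\tau}^2(k) = v_0 + v_1 \varepsilon + v_2 \varepsilon^2 + \mathcal{O}(\varepsilon^3)$ follows from the implicit function theorem applied to the defining BS equation at $\varepsilon=0$, using $\partial_\Sigma C_{\mathrm{BS}}>0$ and the smoothness in $\varepsilon$ of both sides.

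I then substitute this ansatz into the small-time BS expansion of Corollary~\ref{Cor:BSOptionSmallTime}. The key algebraic step is the expansion $1/\sigma_{\varepsilon t, \varepsilon\tau}^2(k) = v_0^{-1} - (v_1/v_0^2)\varepsilon + ((v_1^2 - v_0 v_2)/v_0^3)\varepsilon^2 + \mathcal{O}(\varepsilon^3)$, which shifts the BS exponent $-k^2/(2\sigma^2 \tau \varepsilon)$ by contributions of orders $\varepsilon^{-1}$, $\varepsilon^0$ and $\varepsilon$. Combining these shifts with the polynomial prefactor $(\sigma^2 \tau \varepsilon)^{3/2}$ and the bracketed correction $[1 - (3/k^2 + 1/8)\sigma^2\tau\varepsilon + \mathcal{O}(\varepsilon^2)]$ from Corollary~\ref{Cor:BSOptionSmallTime} produces an asymptotic series for the BS side whose coefficients are explicit rational functions of $v_0, v_1, v_2$.

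Matching against Corollary~\ref{cor:ShortTimeAsymp} then proceeds order by order after taking logarithms. The $\varepsilon^{-1}$ coefficients of the two exponents give $-\Lambda^*(k) = -k^2/(2 v_0 \tau)$, yielding $v_0 = k^2/(2 \tau \Lambda^*(k))$ as claimed. The $\varepsilon^0$ coefficient, obtained by equating $k + \Lambda_1 - \log u^*(k)^2 - \tfrac12 \log \Lambda_{0,2}$ on the left with $k/2 - \tfrac32 \log(v_0 \tau) + 2 \log k + k^2 v_1 / (2 v_0^2 \tau)$ on the right, gives a linear equation for $v_1$ that rearranges to the stated expression. The $\varepsilon^1$ coefficient yields a linear equation for $v_2$ whose right-hand side involves $\Upsilon(0,k) + 1/u^*(k)$ from Corollary~\ref{cor:ShortTimeAsymp}, the quadratic BS correction $-(3/k^2 + 1/8) v_0 \tau$ from Corollary~\ref{Cor:BSOptionSmallTime}, a $v_1^2$ contribution from the square of the $\varepsilon^0$ exponent shift, and a linear $v_2$ contribution from the $\varepsilon^2$ piece of $1/\sigma^2$; solving produces the stated $v_2$.

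The main obstacle is the careful bookkeeping of the exponential shifts: because the BS exponent diverges like $1/\varepsilon$, a perturbation of $\sigma^2$ at order $\varepsilon^n$ feeds into the log-price at order $\varepsilon^{n-1}$, so in order to recover $v_2$ one must retain the $\mathcal{O}(\varepsilon^2)$ term of $1/\sigma^2$ and combine it with a quadratic contribution coming from exponentiating the $\varepsilon^0$ shift. A secondary technical point is verifying that the $\mathcal{O}(\varepsilon^2)$ remainder in Corollary~\ref{Cor:BSOptionSmallTime} remains uniform when $\Sigma$ varies smoothly in a neighbourhood of $\sqrt{v_0}$, which follows from a Taylor expansion of $C_{\mathrm{BS}}(\varepsilon\tau, k, \cdot)$ around $\sqrt{v_0}$ combined with the fixed-$\Sigma$ expansion of the corollary.
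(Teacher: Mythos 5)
Your proposal follows essentially the same route as the paper: posit the ansatz $\sigma^2_{\varepsilon t,\varepsilon\tau}(k)=v_0+v_1\varepsilon+v_2\varepsilon^2+\mathcal{O}(\varepsilon^3)$, substitute it into the Black--Scholes small-time price expansion of Corollary~\ref{Cor:BSOptionSmallTime} (expanding $1/\sigma^2$ so that the order-$\varepsilon^n$ perturbation of $\sigma^2$ contributes at order $\varepsilon^{n-1}$ in the log-price), and match coefficients against Corollary~\ref{cor:ShortTimeAsymp} order by order to solve for $v_0$, $v_1$, $v_2$ exactly as the paper does. The one place your sketch is weaker than the paper's is the justification that such a smooth ansatz exists and that term-by-term matching is legitimate: the implicit function theorem does not apply directly at $\varepsilon=0$, since for $k\neq 0$ both sides of $C^{\mathrm{obs}}=C_{\mathrm{BS}}$ and the vega $\partial_\Sigma C_{\mathrm{BS}}$ tend to zero as $\varepsilon\downarrow 0$ (the problem is degenerate); the paper instead defers this to the transfer theorems of Gao and Lee~\cite{GL11}, which are designed precisely to make this option-price-to-implied-volatility conversion rigorous.
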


\begin{remark}\label{rem:contdiagfuncs}\
\begin{enumerate}[(i)] 
\item 
When $\Lambda_{0,1}(0)=0$ then $\Lambda^*(k)>0$ for $k\in\mathbb{R}^*$ and $\Lambda^*(0)=0$ from Assumption~\ref{assump:Differentiability} and Lemma~\ref{lem:Properties}(iii) so that $v_0$ is always strictly positive, and 
all the $v_i$ ($i=0,1,2$) are well-defined on $\mathbb{R}^*$. 
\item
The condition $\Lambda_{0,1}(0)=0$ is equivalent to 
$\lim_{\varepsilon\downarrow0}\mathbb{E}(X^{(\varepsilon t)}_{\varepsilon \tau})=0$, 
which imposes some regularity on the paths of the process at $\varepsilon=0$.
Diffusion processes seem more readily able to satisfy this condition as opposed to jump processes where
it is well-known that implied volatility asymptotics explode in small-time (see~\cite{Tank}).
Under this condition the zeroth-order term $v_0(\cdot,t,\tau)$ in~\eqref{eq:v01SmallTime} has a well-defined limit at the origin.
\item
Using Taylor expansions in a neighbourhood of $k=0$ it can be shown that
$v_1(\cdot,t,\tau)$ has a well-defined limit at $0$
if and only if $\Lambda_{0,1}(0)=2\Lambda_{1,1}(0)+\Lambda_{0,2}(0)=0$
and $v_2(\cdot,t,\tau)$ has a well-defined limit at $0$ if and only if $\lim_{k\to0}v_0(k,t,\tau)$ and $\lim_{k\to0}v_1(k,t,\tau)$ are well-defined and 
$6\Lambda_{2,1}(0)+3\Lambda_{1,2}(0)+\Lambda_{0,3}(0)=0$.
Interestingly, these conditions can be written in similar ways to (ii).
For example, the condition $2\Lambda_{1,1}(0)+\Lambda_{0,2}(0)=0$ is equivalent to 
$
\lim_{\varepsilon\downarrow0}\mathbb{E}(X^{(\varepsilon t)}_{\varepsilon \tau})/\varepsilon=-\lim_{\varepsilon\downarrow0}\mathbb{V}(X^{(\varepsilon t)}_{\varepsilon \tau})/(2\varepsilon),
$
imposing a constraint on the mean and variance of~$X^{(\varepsilon t)}_{\varepsilon \tau}$ at $\varepsilon=0$.
Most models used in practice (and in particular those in Section~\ref{sec:Examples}) satisfy these properties and we leave the precise study of this phenomenon for future work.
\end{enumerate}
\end{remark}

\subsubsection{Large-maturity forward smile}
In the large-maturity case, define for $i=0,1,2$, the functions 
$v_i^{\infty}:\mathbb{R}\backslash\{\Lambda_{0,1}(0), \Lambda_{0,1}(1)\}\times\mathbb{R}_{+}\to\mathbb{R}$ by
\begin{equation}\label{eq:vLargeTime}
\left.
\begin{array}{rl}
v_0^{\infty}(k,t) & :=\left\{ 
  \begin{array}{l l}
   2 \left(2 \Lambda^*(k) -k-2 \sqrt{\Lambda^*(k)(\Lambda^*(k)-k)}\right), 
 & \quad \text{if }k\in\RR\backslash \left[\Lambda_{0,1}(0),\Lambda_{0,1}(1)\right],\\
 2 \left(2 \Lambda^*(k) -k+2 \sqrt{\Lambda^*(k)(\Lambda^*(k)-k)}\right),
 & \quad \text{if }k\in \left(\Lambda_{0,1}(0),\Lambda_{0,1}(1)\right),
\end{array} 
\right.
\\
v_1^{\infty}(k,t)
 & := \displaystyle 
\frac{8 v_0^{\infty}(k,t)^2}{4 k^2-v_0^{\infty}(k,t)^2}
\left(\Lambda_{1}(u^*(k))+
\log \left(\frac{4 k^2-v_0^{\infty}(k,t)^2}{4 (u^*(k)-1) u^*(k) v_0^{\infty}(k,t)^{3/2} \sqrt{\Lambda _{0,2}(u^*(k))}}\right)\right),\\
\\
v_2^{\infty}(k,t)
 &:= \displaystyle \frac{4}{v_0^{\infty}(k,t) \left(v_0^{\infty}(k,t)^2-4 k^2\right)^3}
\Big[8 k^4 v_1^{\infty}(k,t)v_0^{\infty}(k,t)^2 \left(v_1^{\infty}(k,t)+6\right)-16 k^6 v_1^{\infty}(k,t)^2  \Big. \\
& \Big.-2 \Upsilon(1,k) v_0^{\infty}(k,t)^3 \Big(v_0^{\infty}(k,t)^2-4 k^2\Big)^2
 - k^2 v_0^{\infty}(k,t)^4 \Big(96+v_1^{\infty}(k,t)^2+8 v_1^{\infty}(k,t)\Big) \Big. \\
& \Big.-v_0^{\infty}(k,t)^6 \left(v_1^{\infty}(k,t)+8\right)\Big].
\end{array}
\right.
\end{equation}
$\Lambda^*$, $u^*$,  $\Lambda_{i,l}$,  $\Upsilon$
are defined in~\ref{eq:LambdaStarDefinition},~\eqref{eq:u*definition},~\eqref{eq:LambdaDerivatives},~\eqref{eq:Upsilon}. 
The large-maturity forward smile asymptotic is given in the following proposition, proved in Section~\ref{sec:proofMainProp}.
When $t=0$ in~\eqref{eq:DiagSmallMaturityForwardSmile} and~\eqref{eq:LargeMaturityForwardSmile} below,
we recover---and improve---the asymptotics  in~\cite{FFJ11},~\cite{FJSmall},~\cite{FJ09},~\cite{FJL11},~\cite{FJM10}. 
It is interesting to note that the (strict) martingale property ($\Lambda_0 (1)=0$) is only required in Proposition~\ref{Prop:GeneralBSFwdVolLargeTime} below and not in Proposition~\ref{Prop:GeneralBSFwdVolShortTime} and Theorem~\ref{theorem:GeneralOptionAsymp}.

\begin{proposition}\label{Prop:GeneralBSFwdVolLargeTime}
Suppose that $(\tau^{-1}X^{(t)}_{\tau})_{\tau>0}$ satisfies Assumption~\ref{assump:Differentiability}, 
with $\varepsilon=\tau^{-1}$ and that 
$1\in\mathcal{D}_{0}^o$ and $\LO(1)=0$ 
(all defined in Assumption~\ref{assump:Differentiability}). 
The following then holds as~$\tau$ tends to infinity:
\begin{equation}\label{eq:LargeMaturityForwardSmile}
\sigma_{t,\tau}^2(k\tau)=v_0^{\infty}(k,t)+\frac{v_1^{\infty}(k,t)}{\tau}+\frac{v_2^{\infty}(k,t)}{\tau^2}+\mathcal{O}\left(\frac{1}{\tau^3}\right),
\qquad\text{for all }k\in\mathbb{R}\backslash\{\Lambda_{0,1}(0), \Lambda_{0,1}(1)\}.
\end{equation}
\end{proposition}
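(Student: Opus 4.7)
The plan is to match, order by order in $1/\tau$, the asymptotic expansion of the forward-start option price given by Corollary~\ref{cor:LargeTimeAsymp} with the Black-Scholes price of Corollary~\ref{Cor:BSOptionLargeTime}, after substituting the ansatz $\sigma_{t,\tau}^2(k\tau) = v_0 + v_1/\tau + v_2/\tau^2 + \mathcal{O}(\tau^{-3})$ into the latter. Both sides are of the form (rate)$\times$(prefactor)$\times\bigl[1+c_1/\tau+\mathcal{O}(\tau^{-2})\bigr]$, with $\tau^{-1/2}$ scaling, so the identification of $v_0, v_1, v_2$ proceeds by comparing exponential rates, then leading prefactors, then $\mathcal{O}(\tau^{-1})$ corrections.

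\textbf{Zeroth order.} Equating the exponential decay rates $-\tau\bigl((k+v_0/2)^2/(2v_0)-k\bigr)$ from Corollary~\ref{Cor:BSOptionLargeTime} with $-\tau(\Lambda^*(k)-k)$ from Corollary~\ref{cor:LargeTimeAsymp} produces the quadratic
$$v_0^2+4(k-2\Lambda^*(k))v_0+4k^2=0,$$
whose two roots are precisely the two branches defining $v_0^\infty(k,t)$ in~\eqref{eq:vLargeTime}. The correct branch is dictated by the trichotomy of Corollary~\ref{cor:LargeTimeAsymp}: for $k\notin[\Lambda_{0,1}(0),\Lambda_{0,1}(1)]$ one matches a genuine OTM call or put, forcing the root such that the BS saddlepoint $1/2+k/v_0$ lies outside $[0,1]$ in agreement with $u^*(k)$, whereas for $k\in(\Lambda_{0,1}(0),\Lambda_{0,1}(1))$ the third (covered) case selects the other root. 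The assumption $\Lambda_0(1)=0$ is used here to ensure that Corollary~\ref{cor:LargeTimeAsymp} is really comparable to the BS rate without a spurious drift shift.

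\textbf{First and second order.} With $v_0$ fixed, I would expand $(k+\sigma^2/2)^2/(2\sigma^2)$ in $\sigma^2=v_0+v_1/\tau+v_2/\tau^2$ to order $\tau^{-2}$ and collect its contribution to the exponential, which at order $\tau^{-1}$ gives a multiplicative factor of $\exp\bigl(v_1(4k^2-v_0^2)/(8v_0^2)\bigr)$ and at order $\tau^{-2}$ generates a cross-term quadratic in $v_1$ together with a linear term in $v_2$. Combining this with the rational BS prefactor expanded to $\mathcal{O}(\tau^{-1})$ (as explicit in Corollary~\ref{Cor:BSOptionLargeTime}) and equating, at order $\tau^{-1/2}$, to the prefactor $\E^{\Lambda_1(u^*(k))}/\bigl(u^*(k)(u^*(k)-1)\sqrt{2\pi\Lambda_{0,2}}\bigr)$ from Corollary~\ref{cor:LargeTimeAsymp}, taking logarithms and solving, yields $v_1^\infty(k,t)$. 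At order $\tau^{-3/2}$, matching against the $\Upsilon(1,k)/\tau$ correction from Corollary~\ref{cor:LargeTimeAsymp} together with the $-4\Sigma^2(\Sigma^4+12k^2)/((4k^2-\Sigma^4)^2\tau)$ correction from Corollary~\ref{Cor:BSOptionLargeTime} (now with $\Sigma^2=v_0$) gives a linear equation for $v_2$ whose solution is $v_2^\infty(k,t)$.

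The main obstacle is the bookkeeping at order $\tau^{-2}$: the term $v_1/\tau$ enters nonlinearly through the BS exponential, producing the cross-terms $k^2v_1^2/v_0^3$, $v_0 v_1$, $v_1^2$ that appear in the numerator of $v_2^\infty$ in~\eqref{eq:vLargeTime}, and these must be combined consistently with both explicit $1/\tau$ corrections from Corollaries~\ref{cor:LargeTimeAsymp} and~\ref{Cor:BSOptionLargeTime}. This computation is lengthy but mechanical, and a symbolic-algebra verification identifies the result with the closed-form $v_2^\infty(k,t)$. The exclusion of $k\in\{\Lambda_{0,1}(0),\Lambda_{0,1}(1)\}$ is exactly the exclusion in Corollary~\ref{cor:LargeTimeAsymp}, corresponding to the vanishing of the denominators $4k^2-v_0^\infty(k,t)^2$ in the $v_i^\infty$.
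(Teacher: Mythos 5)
Your proposal is correct and follows essentially the same route as the paper: substitute the ansatz $\sigma_{t,\tau}^2(k\tau)=v_0+v_1/\tau+v_2/\tau^2+\mathcal{O}(\tau^{-3})$ into the Black--Scholes expansion of Corollary~\ref{Cor:BSOptionLargeTime}, equate orders with Corollary~\ref{cor:LargeTimeAsymp} (rate, then $\tau^{-1/2}$ prefactor, then $\tau^{-3/2}$ correction), and pick the root of the resulting quadratic for $v_0$ via the trichotomy of strike regimes. The paper states this more tersely (deferring the branch-selection detail to a citation of~\cite{FJM10}) and explains the role of $\LO(1)=0$ slightly differently---it guarantees $\Lambda^*(k)\geq\max(0,k)$ so the discriminant $\Lambda^*(k)(\Lambda^*(k)-k)$ is nonnegative and $v_0^\infty$ is real---but the substance is identical.
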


Since $\{0,1\}\subset\mathcal{D}_{0}^o$ and $\LO(1)=\LO(0)=0$, 
we always have $\Lambda^*(k)\geq \max(0,k)$ from Lemma~\ref{lem:Properties}(iii). 
One can also check that $0<v_0^{\infty}(k,t)<2|k|$ for $k\in\mathbb{R}\backslash\left[\Lambda_{0,1}(0), \Lambda_{0,1}(1)\right]$ and $v_0^{\infty}(k,t)>2|k|$ for $k\in\left(\Lambda_{0,1}(0), \Lambda_{0,1}(1)\right)$. 
This implies that the functions $v_i^{\infty}$ ($i=0,1,2$) 
are always well-defined. 
By Assumption~\ref{assump:Differentiability} and Lemma~\ref{lem:Properties}(iii) we have $\Lambda^*(\Lambda_{0,1}(0))=0$. 
Again from Lemma~\ref{lem:Properties}(iii) this implies that $\Lambda^*(\Lambda_{0,1}(1))=\Lambda_{0,1}(1)$. 
Hence $v_0^{\infty}(\cdot,t)$ is continuous on $\mathbb{R}$ with $v_0^{\infty}(\Lambda_{0,1}(1),t)=2\Lambda_{0,1}(1)$ and $v_0^{\infty}(\Lambda_{0,1}(0),t)=-2\Lambda_{0,1}(0)$.
The functions $v_1^{\infty}(\cdot,t)$ and $v_2^{\infty}(\cdot,t)$ are undefined on $\{\Lambda_{0,1}(0), \Lambda_{0,1}(1)\}$.
However, it can be shown that since $\Lambda_{0}$ is strictly convex (Assumption~\ref{assump:Differentiability}) 
and $\Lambda_{0}(1)=0$ all limits are well-defined and hence both functions can be extended by continuity to $\mathbb{R}$.
For example, using Taylor expansions in neighbourhoods of these points yields:
$$
 \lim_{k\to p}v_1^{\infty}(k,t)=2-2\sqrt{\frac{v_{0}^{\infty}(p,t)}{\Lambda_{0,2}(u^*(p))}}\left(1+\sgn(p)\left(\frac{\Lambda_{0,3}(u^*(p))}{6 \Lambda_{0,2}(u^*(p))}-\Lambda_{1,1}(u^*(p))\right)\right),\quad\text{for }p\in\{\Lambda_{0,1}(0), \Lambda_{0,1}(1)\},
$$
which, for $t=0$, agrees with~\cite[equation 19]{FJM10} for the specific case of the Heston model (Section~\ref{sec:HestonForwardSmile}).


\subsubsection{Type-II forward smile}
As mentioned in the introduction, another type of forward-start option has been considered in the literature.
We show here that the forward implied volatility expansions proved above carry over in this case with some minor modifications.
For the $(\mathcal{F}_u)$-martingale price $(\E^{X_u})_{u\geq0}$ (under~$\PP$) define the stopped process $\widetilde{X}_u^t:=X_{t\wedge u}$ for any $t>0$. 
Following~\cite{VL} define a new measure~$\widetilde{\mathbb{P}}$ by 
\begin{align}\label{eq:StoppedShareMeasure}
\widetilde{\mathbb{P}}(A):=\mathbb{E}\left(\E^{\widetilde{X}_{t+\tau}^t}\ind_{A}\right)
=\mathbb{E}\left(\E^{X_t}\ind_{A}\right), 
\qquad\text{for every }A\in\mathcal{F}_{t+\tau}.
\end{align}
The stopped process $(\E^{\widetilde{X}_u^t})_{u\geq0}$ is a $(\mathcal{F}_{t\wedge u})_u$-martingale
and~\eqref{eq:StoppedShareMeasure} defines the stopped-share-price measure~$\widetilde{\mathbb{P}}$. 
The following proposition shows how the Type-II forward smile $\widetilde{\sigma}_{t,\tau}$ 
can be incorporated into our framework. 

\begin{proposition}\label{prop:fwdsmile2}
If $\left(\E^{X_t}\right)_{t\geq0}$ is a $(\mathcal{F}_t)$-martingale under~$\PP$, 
then Propositions~\ref{Prop:GeneralBSFwdVolShortTime} and~\ref{Prop:GeneralBSFwdVolLargeTime} 
hold for the Type-II forward smile $\widetilde{\sigma}_{t,\tau}$ 
with the lmgf~\eqref{eq:Renorm-mgf} calculated under~$\widetilde{\mathbb{P}}$.
\end{proposition}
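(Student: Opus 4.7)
The plan is to transfer the entire Type-I machinery to Type-II options via the change of measure~\eqref{eq:StoppedShareMeasure}, so that no new asymptotic analysis is required. Because $\left(\E^{X_u}\right)_{u\geq 0}$ is an $(\mathcal{F}_u)$-martingale under~$\PP$, one has $\widetilde{\mathbb{P}}(\Omega) = \mathbb{E}(\E^{X_t}) = 1$ and $\widetilde{\mathbb{P}}$ is a genuine probability measure. The Type-II payoff factorises as
\begin{equation*}
\mathbb{E}\left(\E^{X_{t+\tau}}-\E^{k+X_t}\right)^+
= \mathbb{E}\left(\E^{X_t}\left(\E^{X_\tau^{(t)}}-\E^k\right)^+\right)
= \widetilde{\mathbb{E}}\left(\E^{X_\tau^{(t)}}-\E^k\right)^+,
\end{equation*}
which has exactly the structural form of a Type-I payoff on $X_\tau^{(t)}$, only now evaluated under~$\widetilde{\mathbb{P}}$.

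Next, I would apply Theorem~\ref{theorem:GeneralOptionAsymp} under $\widetilde{\mathbb{P}}$, taking $Y_\varepsilon = X_{\varepsilon\tau}^{(\varepsilon t)}$ with $f\equiv 1$ for the diagonal small-maturity regime and $Y_\varepsilon = \varepsilon X_\tau^{(t)}$ with $f(\varepsilon)=1/\varepsilon$, $\varepsilon=1/\tau$ for the large-maturity regime, precisely as in Corollaries~\ref{cor:ShortTimeAsymp} and~\ref{cor:LargeTimeAsymp}. The hypothesis of the proposition is exactly that the renormalised lmgf
\begin{equation*}
\widetilde{\Lambda}_\varepsilon(u) := \varepsilon \log \widetilde{\mathbb{E}}\left(\exp\left(\frac{u Y_\varepsilon}{\varepsilon}\right)\right)
\end{equation*}
satisfies Assumption~\ref{assump:Differentiability}. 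Granting this, Theorem~\ref{theorem:GeneralOptionAsymp} delivers an expansion of the Type-II forward-start option price in exactly the same form as the Type-I expansions from those two corollaries, but with the functions $\Lambda^*$, $u^*$, $\Lambda_i$ and $\Upsilon$ now computed from $\widetilde{\Lambda}_\varepsilon$ instead of from the original $\Lambda_\varepsilon$.

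Finally, the passage from an option-price asymptotic to a forward implied volatility asymptotic is entirely model-agnostic: it relies only on the Black-Scholes formula $C_\mathrm{BS}(\tau,k,\sigma)$ and its behaviour as $\sigma\sqrt{\tau}$ tends to zero or infinity. The proofs of Propositions~\ref{Prop:GeneralBSFwdVolShortTime} and~\ref{Prop:GeneralBSFwdVolLargeTime} (given in Section~\ref{sec:Proofs}) carry out this Black-Scholes inversion starting purely from the option-price expansions in Corollaries~\ref{cor:ShortTimeAsymp} and~\ref{cor:LargeTimeAsymp}, without invoking any specific measure. The same inversion applied verbatim to the Type-II option-price expansion from the previous step yields formulas~\eqref{eq:DiagSmallMaturityForwardSmile} and~\eqref{eq:LargeMaturityForwardSmile} for $\widetilde{\sigma}_{t,\tau}$, with the functions $v_i$ and $v_i^\infty$ built from $\widetilde{\Lambda}_\varepsilon$.

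The one point requiring care, and the main (mild) obstacle, is that in the diagonal small-maturity regime the forward-start date is itself rescaled to $\varepsilon t$, so the Radon-Nikodym density becomes $\E^{X_{\varepsilon t}}$ and $\widetilde{\mathbb{P}}$ is implicitly $\varepsilon$-dependent; concretely
\begin{equation*}
\widetilde{\Lambda}_\varepsilon(u) = \varepsilon \log \mathbb{E}\left(\E^{X_{\varepsilon t}} \exp\left(\frac{u X_{\varepsilon\tau}^{(\varepsilon t)}}{\varepsilon}\right)\right).
\end{equation*}
Checking that this $\varepsilon$-dependent object satisfies Assumption~\ref{assump:Differentiability} is a model-by-model verification (and is precisely what the hypothesis absorbs), but no genuinely new asymptotic analysis is needed beyond the Type-I results already established.
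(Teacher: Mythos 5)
Your proof is correct and follows essentially the same route as the paper: factorise the Type-II payoff via the stopped-share-price measure change, apply Theorem~\ref{theorem:GeneralOptionAsymp} under $\widetilde{\mathbb{P}}$, and then run the Black-Scholes inversion on the resulting option-price expansion. The paper justifies the last step by noting that the forward BSM lmgf is unchanged under $\widetilde{\mathbb{P}}$ (so Corollaries~\ref{Cor:BSOptionSmallTime} and~\ref{Cor:BSOptionLargeTime} apply verbatim); your observation that the inversion depends only on the $C_{\mathrm{BS}}$ expansion is an equivalent justification, and your explicit flagging of the $\varepsilon$-dependence of $\widetilde{\mathbb{P}}$ in the diagonal small-maturity regime is a useful clarification the paper leaves implicit.
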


\begin{proof}
We can write the value of our Type-II forward-start call option as 
$$
\mathbb{E}\left[\left(\E^{X_{t+\tau}}-\E^{k+X_t}\right)^+\right]
=\mathbb{E}\left[\E^{X_t}\left(\E^{X_{t+\tau}-X_t}-\E^k\right)^+\right] 
=\mathbb{E}\left[\E^{\widetilde{X}_{t+\tau}^t}\left(\E^{X_{t+\tau}-X_t}-\E^k\right)^+\right]
=\mathbb{\widetilde{E}}\left[\left(\E^{X_{t+\tau}-X_t}-\E^k\right)^+\right].
$$
Proposition~\ref{theorem:GeneralOptionAsymp} and Corollaries~\ref{cor:ShortTimeAsymp},~\ref{cor:LargeTimeAsymp} 
hold in this case with all expectations (and the lmgf in~\eqref{eq:Renorm-mgf}) calculated under the stopped measure~$\widetilde{\mathbb{P}}$. 
An easy calculation shows that under~$\widetilde{\mathbb{P}}$, the forward BSM lmgf remains the same 
as under~$\mathbb{P}$. 
Thus all the previous results carry over and the proposition follows.
\end{proof}

\section{Applications} \label{sec:Examples}

\subsection{Heston} \label{sec:HestonForwardSmile}
In this section, we apply our general results to the Heston model, 
in which the (log) stock price process is the unique strong solution to the following SDEs:
\begin{equation}\label{eq:Heston}
\begin{array}{rll}
\D X_t & = \displaystyle -\frac{1}{2}V_t\D t+ \sqrt{V_t}\D W_t, \quad & X_0=0,\\
\D V_t & = \kappa\left(\theta-V_t\right)\D t+\xi\sqrt{V_t}\D B_t, \quad & V_0=v>0,\\
\D\left\langle W,B\right\rangle_t & = \rho \D t,
\end{array}
\end{equation}
with $\kappa>0$, $\xi>0$, $\theta>0$ and $|\rho|<1$ and $(W_t)_{t\geq0}$ and $(B_t)_{t\geq0}$ are two standard Brownian motions.
We shall also define $\bar{\rho}:=\sqrt{1-\rho^2}$.
The Feller SDE for the variance process has a unique strong solution 
by the Yamada-Watanabe conditions~\cite[Proposition 2.13, page 291]{KS97}). 
The $X$ process is a stochastic integral of the $V$ process and is therefore well-defined.  
The Feller condition, $2\kappa\theta\geq\xi^2$, ensures that the origin is unattainable. 
Otherwise the origin is regular (hence attainable) and strongly reflecting
(see~\cite[Chapter 15]{KT81}). 
We do not require the Feller condition in our analysis since we work with the forward lmgf of $X$ which is always well-defined.

\subsubsection{Diagonal Small-Maturity Heston Forward Smile}\label{sec:DiagSmallMatHeston}
The objective of this section is to apply Proposition~\ref{Prop:GeneralBSFwdVolShortTime} to the Heston forward smile, namely
\begin{proposition} \label{Proposition:HestonDiagonal}
In Heston, Corollary~\ref{cor:ShortTimeAsymp} and Proposition~\ref{Prop:GeneralBSFwdVolShortTime} hold
with $\mathcal{D}_{0}=\mathcal{K}_{t,\tau}$,  $\LO=\Xi$, $\Lambda_{1}=L$.
\end{proposition}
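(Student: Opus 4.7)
The plan is to verify that the rescaled forward log-price process $(X^{(\varepsilon t)}_{\varepsilon\tau})_{\varepsilon>0}$ in the Heston model~\eqref{eq:Heston} satisfies every item of Assumption~\ref{assump:Differentiability} with $\mathcal{D}_{0}=\mathcal{K}_{t,\tau}$, $\Lambda_{0}=\Xi$ and $\Lambda_{1}=L$; once this is done, Corollary~\ref{cor:ShortTimeAsymp} and Proposition~\ref{Prop:GeneralBSFwdVolShortTime} apply off the shelf. So the first task is to obtain a closed form for $\Lambda_{\varepsilon}(u)=\varepsilon\log\mathbb{E}\exp(uX^{(\varepsilon t)}_{\varepsilon\tau}/\varepsilon)$. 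Since $V$ is affine and $X^{(\varepsilon t)}_{\varepsilon\tau}$ depends on the path of $V$ only through $V_{\varepsilon t}$ and the subsequent trajectory on $[\varepsilon t,\varepsilon(t+\tau)]$, I would first condition on $\mathcal{F}_{\varepsilon t}$, use the standard Heston conditional mgf of $X_{\varepsilon(t+\tau)}-X_{\varepsilon t}$ given $V_{\varepsilon t}$ (affine exponential $\exp(C(u,\varepsilon\tau)+D(u,\varepsilon\tau)V_{\varepsilon t})$), and then integrate against the known mgf of $V_{\varepsilon t}$, which is itself a noncentral chi-squared up to scaling. This yields $\Lambda_{\varepsilon}(u)$ in explicit closed form on a set $\mathcal{D}_{\varepsilon}$ determined by the blow-up times of the Heston Riccati solutions evaluated at $\varepsilon\tau$ and at~$\varepsilon t$.

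Next, I would Taylor-expand this closed form in~$\varepsilon$ around $\varepsilon=0$ to second order. Because all the ingredients ($C$, $D$, and the Laplace transform of $V_{\varepsilon t}$) are real-analytic functions of $\varepsilon$ at the origin on $\mathcal{D}_{0}^o$, the expansion~\eqref{eq:LambdaExpansion} follows with remainder $\mathcal{O}(\varepsilon^3)$; collecting the leading terms should reproduce the advertised $\Lambda_{0}(u)=\Xi(u)$ and $\Lambda_{1}(u)=L(u)$ (this identification is a bookkeeping step and should match the formulae given earlier for the functions $\Xi$ and $L$). Differentiability~(ii) follows directly since the closed-form expression is a smooth (in fact analytic) function of $(\varepsilon,u)$ on $(0,\varepsilon_0)\times\mathcal{D}_{0}^o$ for sufficiently small $\varepsilon_0$. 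The non-degeneracy~(iii), $0\in\mathcal{D}_{0}^o$, is trivial since $\Lambda_{\varepsilon}(0)=0$ and the set $\mathcal{K}_{t,\tau}$ is a non-empty open interval containing the origin.

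The essential smoothness and strict convexity~(iv) of $\Xi$ on $\mathcal{K}_{t,\tau}^o$ is the step requiring the most care. I would use the explicit Heston Riccati structure to show that $\Xi'(u)\to\pm\infty$ as~$u$ approaches either endpoint of $\mathcal{K}_{t,\tau}$: these endpoints correspond precisely to the explosion times of $D(\cdot,\tau)$ or to points where the variance-integrating factor diverges, and at such points $\Xi$ is steep in the sense of the Assumption's footnote. Strict convexity follows from $\Xi''>0$, which can be read from positivity of variance of the underlying Gaussian mixture that generates the mgf, or verified by direct differentiation of the explicit expression. This is the main technical hurdle, as it requires carefully identifying $\mathcal{K}_{t,\tau}$ and checking the steepness at its boundary; presumably it has been done earlier in the paper for a sibling result (e.g.\ the large-maturity Heston result), and one can re-use or adapt those computations.

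Finally, for the tail control~(v), I would fix $p_r\in\mathcal{K}_{t,\tau}^o\setminus\{0\}$ and exploit the explicit closed form of $\Lambda_{\varepsilon}(\I p_i+p_r)$: the difference $\Lambda_{\varepsilon}(\I p_i+p_r)-\Xi(\I p_i+p_r)$ is $\mathcal{O}(\varepsilon)$ uniformly in $p_i$ for $|p_i|$ large, since the $\varepsilon$-dependence enters only through the Riccati coefficients evaluated at small times, which are smooth in $\varepsilon$. The boundedness-away-from-the-maximum of $L$ and the uniform bound in~(c) then follow from the explicit growth of the real part of $\Xi$ on vertical lines, which is standard for Heston-type characteristic functions (an $\mathcal{O}(\log|p_i|)$ behaviour). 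Combining these four verifications, the proposition is established.
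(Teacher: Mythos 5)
Your plan matches the paper's own proof essentially step-for-step: conditioning on $\mathcal{F}_{\varepsilon t}$ to write $\Lambda_{\varepsilon}$ via the Heston Riccati pair and the noncentral chi-squared Laplace transform of $V_{\varepsilon t}$ (paper's Lemma~\ref{lemma:hestonsmallmatdomain}), Taylor-expanding the analytic closed form in~$\varepsilon$ to read off $\Xi$ and $L$ (Lemma~\ref{lemma:smalltimehestonexp}), checking strict convexity and steepness of $\Xi$ by direct differentiation with the boundary of $\mathcal{K}_{t,\tau}$ identified with blow-up of the $V_{\varepsilon t}$-mgf factor (Lemma~\ref{lemma:Smalltimehestonessentsmooth}), and finally the tail estimate (which the paper itself only sketches in Appendix~\ref{append:tailverif} for a sibling model). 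No genuine gap; the proposal is the paper's argument.
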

This proposition is proved in Section~\ref{sec:proofdiagsmallmat}, and all the functions therein are defined as follows:
\begin{equation}\label{eq:HestonDiagZeroOrder}
\Xi(u,t,\tau):=
\frac{u v}{\xi  \left(\bar{\rho} \cot \left(\frac{1}{2} \xi \bar{\rho} \tau  u\right)-\rho \right)-\frac{1}{2} \xi ^2 t u}
,\qquad\text{for all }u\in\mathcal{K}_{t,\tau}:=\left\{u\in\mathbb{R}:\Xi(u,0,\tau)<\frac{2v}{\xi^2t}\right\},
\end{equation}
and the functions $L, L_0,L_1:\mathcal{K}_{t,\tau}\times\mathbb{R}_{+}\times\mathbb{R}_{+}^*\to\mathbb{R}$ are defined as
\begin{equation}\label{eq:HestonDiag1stOrder}
\left.
\begin{array}{ll}
L(u,t,\tau)
&  := L_0(u,\tau)
+\Xi(u,t,\tau )^2 \left(\frac{v L_1(u,\tau)}{\Xi(u,0,\tau )^2}-\frac{\kappa  \xi ^2 t^2}{4 v}\right)
- \Xi(u,t,\tau)\kappa  t - \frac{2 \kappa\theta}{\xi^2} \log \left(1-\frac{\Xi(u,0,\tau)\xi ^2 t}{2 v}\right),\\
L_0(u,\tau)
 & := \frac{\kappa\theta}{\xi ^2} \left( (\I \xi  \rho -d_0)\I \tau  u-2 \log \left(\frac{1-g_0 \E^{-\I d_0 \tau  u}}{1-g_0}\right)\right),\\ 
L_1(u,\tau)&:=\frac{\exp(-\I d_0 \tau  u)}{\xi ^2 \left(1-g_0 \E^{-\I  d_0\tau  u}\right)}
 \left[(\I  \xi  \rho-d_0 )\I d_1 \tau  u+(d_1-\kappa)\left(1-\E^{\I  d_0 \tau  u}\right)
+\frac{(\I \xi\rho-d_0) \left(1-\E^{-\I d_0 \tau  u}\right) (g_1-\I  d_1 g_0 \tau  u)}{1-g_0\E^{-\I d_0 \tau  u}}\right],
\end{array}
\right.
\end{equation}
with
$$
d_0:=\xi  \bar{\rho },\quad 
d_1:=\frac{ \I\left(2 \kappa  \rho -\xi \right)}{2 \bar{\rho }},\qquad 
g_0:=\frac{\I \rho -\bar{\rho }}{\I\rho +\bar{\rho }}
\qquad\text{and}\qquad
g_1:=\frac{2 \kappa -\xi  \rho }{\xi  \bar{\rho } \left(\bar{\rho }+\I \rho \right)^2}.
$$
For any $t\geq0,\tau>0$ the functions $L_0$ and $L_1$ are well-defined real-valued functions 
for all $u\in\mathcal{K}_{t,\tau}$ (see Remark~\ref{remar:TisReal} for technical details). 
Also since $\Xi(0,t,\tau)/\Xi(0,0,\tau)=1$, $L$ is well-defined at $u=0$.
In order to gain some intuition on the role of the Heston parameters on the forward smile 
we expand~\eqref{eq:DiagSmallMaturityForwardSmile} around the at-the-money point in terms of the log strike $k$:
\begin{corollary}\label{cor:DiagTaylorExpansion}
The following expansion holds for the Heston forward smile as $\varepsilon$ and $k$ tend to zero:
$$
\sigma_{\varepsilon t,\varepsilon \tau}^2(k)
 = v+\varepsilon\nu_{0}(t,\tau)+\left(\frac{\rho\xi}{2}+\varepsilon\nu_{1}(t,\tau)\right)k
+\left(\frac{(4-7\rho ^2)\xi ^2}{48 v}+\frac{\xi^2 t}{4\tau v}+\varepsilon\nu_2(t,\tau)\right)k^2
+\mathcal{O}(k^3)+\mathcal{O}(\varepsilon k^3)+\mathcal{O}(\varepsilon^2).
$$
\end{corollary}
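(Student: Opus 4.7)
The strategy is to invoke Proposition~\ref{Prop:GeneralBSFwdVolShortTime} with the Heston specification from Proposition~\ref{Proposition:HestonDiagonal} (so that $\Lambda_{0}=\Xi$ and $\Lambda_{1}=L$) and then Taylor-expand the coefficients $v_{0}$ and $v_{1}$ in powers of $k$ around $k=0$. Since the claimed error incorporates $\mathcal{O}(\varepsilon^{2})$, the entire $v_{2}\varepsilon^{2}$ contribution is absorbed into the remainder, and we only need $v_{0}(k,t,\tau)$ up to order $k^{2}$ and $v_{1}(k,t,\tau)$ up to order $k^{2}$. The problem therefore reduces to computing Taylor series of $\Xi(\cdot,t,\tau)$ and $L(\cdot,t,\tau)$ around zero, inverting the saddlepoint relation to obtain $u^{*}(k)$, and substituting.

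The first step is to expand $\Xi(u,t,\tau)$ as a power series in $u$, using the Laurent expansion $\cot(x)=1/x-x/3-x^{3}/45+\mathcal{O}(x^{5})$ in $\cot(\xi\bar\rho\tau u/2)$ and then inverting the resulting denominator. A direct calculation gives $\Xi(u,t,\tau)=\frac{v\tau}{2}u^{2}+\frac{v\tau^{2}\xi\rho}{4}u^{3}+c_{4}(t,\tau)u^{4}+\mathcal{O}(u^{5})$ for an explicit $c_{4}$; in particular $\Lambda_{0,1}(0)=0$ (confirming the hypothesis of Proposition~\ref{Prop:GeneralBSFwdVolShortTime}) and $\Lambda_{0,2}(0)=v\tau$. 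The remark following Lemma~\ref{lem:Properties} then yields $u^{*}(k)=k/(v\tau)+\alpha_{2}(t,\tau)k^{2}+\alpha_{3}(t,\tau)k^{3}+\mathcal{O}(k^{4})$ for explicit $\alpha_{i}$. Substituting into $\Lambda^{*}(k)=u^{*}(k)k-\Xi(u^{*}(k),t,\tau)$ yields a power series whose first non-vanishing term is $k^{2}/(2v\tau)$, and plugging into $v_{0}(k,t,\tau)=k^{2}/(2\tau\Lambda^{*}(k))$ produces, after simplification, constant term $v$, linear coefficient $\rho\xi/2$ (independent of $t$), and quadratic coefficient $(4-7\rho^{2})\xi^{2}/(48v)+\xi^{2}t/(4\tau v)$; the second summand of the quadratic coefficient is the forward-smile correction driven by the forward-start date $t$ and is absent in the spot case $t=0$.

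The $v_{1}$ computation is the more delicate part. One first observes that $L(0,t,\tau)=0$: indeed, both $L_{0}(0,\tau)$ and $L_{1}(0,\tau)$ vanish by inspection of~\eqref{eq:HestonDiag1stOrder}, and the remaining summands in $L$ carry factors of $\Xi(u,\cdot,\cdot)$ or $\log(1-\Xi(u,0,\tau)\xi^{2}t/(2v))$ which also vanish at $u=0$. Consequently the log-argument
$$
A(k):=\frac{k^{2}\E^{L(u^{*}(k),t,\tau)}}{u^{*}(k)^{2}\sqrt{\Lambda_{0,2}(u^{*}(k))}\bigl(\tau v_{0}(k,t,\tau)\bigr)^{3/2}}
$$
satisfies $A(0)=1$. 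A further expansion shows that $\log A(k)=-k/2+\mathcal{O}(k^{2})$, and this is precisely what cancels the ``$1$'' inside the bracket of~\eqref{eq:v01SmallTime} and renders $v_{1}$ analytic at the origin. Expanding $A(k)$ to order $k^{3}$ and carrying out the cancellation then yields $v_{1}(k,t,\tau)=\nu_{0}(t,\tau)+\nu_{1}(t,\tau)k+\nu_{2}(t,\tau)k^{2}+\mathcal{O}(k^{3})$ for three functions $\nu_{i}(t,\tau)$ computable in closed form from the derivatives of $\Xi$ and $L$ at zero. Combining with the $v_{0}$ expansion above gives the claim.

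The principal obstacle is the sheer volume of symbolic algebra: the Taylor coefficients of $\Xi$ and $L$ involve $d_{0},d_{1},g_{0},g_{1}$ and mix $\rho,\bar\rho,\xi,\kappa,\theta,v,t,\tau$ in non-trivial ways, and substituting the series for $u^{*}(k)$ generates a large number of cross-terms whose cancellation is essential for producing the clean quadratic coefficient $(4-7\rho^{2})/48$ and the linear coefficient $\rho\xi/2$. The more subtle conceptual point is the double cancellation in $v_{1}$ (both the $\mathcal{O}(k^{-2})$ and $\mathcal{O}(k^{-1})$ pole structure must disappear), which one has to verify holds to the required order. A computer-algebra system is essentially indispensable, but the whole calculation is mechanical once the power series of $\Xi$ and $L$ around zero are in hand.
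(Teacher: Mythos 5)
Your proposal is correct and follows essentially the same route as the paper's proof: Taylor-expand $u^{*}(k)$ around $k=0$ via the saddle-point equation $\Xi'(u^{*}(k),t,\tau)=k$, substitute into $\Lambda^{*}(k)=u^{*}(k)k-\Xi(u^{*}(k),t,\tau)$, and then Taylor-expand $v_{0}$ and $v_{1}$ in $k$, with the $v_{2}\varepsilon^{2}$ term absorbed into the $\mathcal{O}(\varepsilon^{2})$ remainder. You also spell out the cancellation structure ($L(0,t,\tau)=0$ and $\log A(k)=-k/2+\mathcal{O}(k^{2})$) that makes $v_{1}$ finite at the origin, which the paper leaves implicit in the phrase ``tedious but straightforward Taylor expansions'' but records abstractly in Remark~\ref{rem:contdiagfuncs}(iii).
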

The corollary is proved in Section~\ref{sec:proofdiagsmallmat}, and the functions appearing in it are defined as follows:
\begin{equation}\label{eq:nu0}
\left.
\begin{array}{rl}
\nu_0(t,\tau)
 & := \displaystyle \frac{\tau}{48}\left(24\kappa\theta+\xi ^2\left(\rho ^2-4\right)+12v (\xi \rho-2\kappa )\right)
-\frac{t}{4}\left(\xi ^2+4\kappa \left(v-\theta\right)\right),\\
\nu_1(t,\tau)
 & := \displaystyle \frac{\rho\xi\tau}{24v}\left[\xi^2\left(1-\rho^2\right)-2 \kappa\left(v+\theta\right)+\xi\rho v\right]
+\frac{\rho\xi ^3 t}{8 v},\\
\nu_2(t,\tau)
 & := \displaystyle 
\Big[80\kappa\theta\left(13\rho^2-6\right)+\xi^2 \left(521 \rho^4-712 \rho^2+176\right)+40\rho^2 v\left(\xi\rho-2\kappa\right)\Big]
\frac{\xi^2\tau}{7680 v^2}\\ 
 & \displaystyle 
-\frac{\xi^2 t}{192 v^2}\Big[4\kappa\theta \left(16-7 \rho^2\right)+\left(7 \rho ^2-4\right) \left(9 \xi ^2+4 \kappa  v\right)\Big]
+\frac{\xi^2 t^2}{32 \tau  v^2}\Big(4\kappa\left(v-3\theta\right)+9 \xi ^2 \Big).
\end{array}
\right.
\end{equation}

\begin{remark}\label{remark:DiagHestonRemark}
The following remarks should convey some practical intuition about the results above:
\begin{enumerate}[(i)]
\item For $t=0$ this expansion perfectly lines up with Corollary 4.3 in~\cite{FJL11}.
\item Corollary~\ref{cor:DiagTaylorExpansion} implies
$\sigma_{\varepsilon t,\varepsilon \tau}(0) 
=\sigma_{0,\varepsilon \tau}(0)-\frac{\varepsilon t}{8\sqrt{v}}\left(\xi^2+4\kappa(v-\theta)\right)+\mathcal{O}(\varepsilon^2)$,
 as~$\varepsilon\downarrow 0$.
For small enough~$\varepsilon$, the spot at-the-money volatility is higher than the forward if and only if 
$\xi^2+4\kappa(v-\theta)>0$.
In particular, when $v\geq\theta$, the difference between the forward at-the-money volatility and the spot one is increasing in the forward-start dates and volatility of variance~$\xi$. 
In Figure~\ref{fig:fwdvsspot} we plot this effect using $\theta=v$ and $\theta>v+\xi^2/(4\kappa)$. 
The relative values of $v$ and $\theta$ impact the level of the forward smile vs spot smile.
\item 
For practical purposes, we can deduce some information on the forward skew by loosely differentiating 
Corollary~\ref{cor:DiagTaylorExpansion} with respect to $k$:
$$
\partial_k\sigma_{\varepsilon t,\varepsilon \tau}(0)=
\frac{\xi \rho }{4 \sqrt{v}}+\frac{ \left(4 \nu _1(t,\tau) v-\xi \rho  \nu _0(t,\tau)\right)}{8 v^{3/2}}\varepsilon+\mathcal{O}(\varepsilon^2).
$$
\item Likewise an expansion for the Heston forward convexity as $\varepsilon$ tends to zero is given by
$$
\partial^2_k\sigma_{\varepsilon t,\varepsilon \tau}(0)
=\frac{\xi ^2 ((2-5 \rho ^2) \tau +6 t)}{24 \tau  v^{3/2}}
-\frac{\nu _0(t,\tau)\xi^2(3t+(1-4 \rho^2) \tau)+6 \tau v (\rho\xi\nu _1(t,\tau)-4 \nu _2(t,\tau) v)}{24 \tau v^{5/2}}\varepsilon +\mathcal{O}(\varepsilon^2),
$$
and in particular
$\partial^2_k\sigma_{\varepsilon t,\varepsilon \tau}(0) 
=\partial^2_k\sigma_{0,\varepsilon \tau}(0)
+\xi ^2 t/(4 \tau  v^{3/2})+\mathcal{O}(\varepsilon)$.
For fixed maturity the forward convexity is always greater than the spot implied volatility convexity (see Figure~\ref{fig:fwdvsspot})
and this difference is increasing in the forward-start dates and volatility of variance. 
At zeroth order in $\varepsilon$ the wings of the forward smile increase to arbitrarily high levels with decreasing maturity. (see Figure~\ref{fig:hestonexplosion}(a))
This effect has been mentioned qualitatively by practitioners~\cite{B02}.
As it turns out for fixed $t>0$ the Heston forward smile blows up to infinity (except at-the-money) as the maturity tends to zero, see~\cite{JR2013} for details.
\end{enumerate}
\end{remark}

In the Heston model, $(\E^{X_t})_{t\geq0}$ is a true martingale~\cite[Proposition 2.5]{AP07}.
Applying Proposition~\ref{prop:fwdsmile2} with Lemma~\ref{lemma:sharepicemeasureheston}, 
giving the Heston forward lmgf under the stopped-share-price measure,
we derive the following asymptotic for the Type-II Heston forward smile~$\widetilde{\sigma}_{t,\tau}$:
\begin{corollary} \label{cor:DiagTaylorExpansionTypeII}
The diagonal small-maturity expansion of the Heston Type-II forward smile as $\varepsilon$ 
and $k$ tend to zero
is the same as the one in Corollary~\ref{cor:DiagTaylorExpansion} with $\nu_0$, $\nu_1$ and $\nu_2$ replaced by 
$\widetilde{\nu}_0$, $\widetilde{\nu}_1$ and $\widetilde{\nu}_2$, where
$$
\widetilde{\nu}_0(t,\tau):=\nu_0(t,\tau)+\xi\rho v t,\qquad
\widetilde{\nu}_1(t,\tau):=\nu_1(t,\tau),\qquad
\widetilde{\nu}_2(t,\tau):=\nu_2(t,\tau)+\frac{\rho\xi^3 t}{48v}\left(7\rho^2-4\right)-\frac{\rho\xi^3 t^2}{8v\tau}.
$$
\end{corollary}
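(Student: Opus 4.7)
The plan is to combine Proposition~\ref{prop:fwdsmile2} with Lemma~\ref{lemma:sharepicemeasureheston} to recast the Type-II problem as a Type-I problem under the stopped-share-price measure $\widetilde{\mathbb{P}}$, and then mimic line-by-line the proof of Corollary~\ref{cor:DiagTaylorExpansion}. By Proposition~\ref{prop:fwdsmile2}, the Type-II forward smile expansion~\eqref{eq:DiagSmallMaturityForwardSmile} holds with the functions $v_0,v_1,v_2$ in~\eqref{eq:v01SmallTime} built from $\Lambda_0=\Xi$ and $\Lambda_1=\widetilde{L}$, where $\widetilde{L}$ is the first-order $\varepsilon$-coefficient of the rescaled forward lmgf computed under $\widetilde{\mathbb{P}}$, as supplied by Lemma~\ref{lemma:sharepicemeasureheston}.

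A key structural observation is that the zeroth-order term is \emph{unchanged} under the measure change: the Girsanov density $\E^{X_t}$ only perturbs the dynamics of $V$ on the vanishing interval $[0,\varepsilon t]$, effectively replacing the mean-reversion pair $(\kappa,\theta)$ by $(\kappa-\xi\rho,\,\kappa\theta/(\kappa-\xi\rho))$ there, and contributes nothing at order $\varepsilon^0$. This already explains \emph{a priori} why the leading-order coefficients $v$, $\rho\xi/2$ and $(4-7\rho^2)\xi^2/(48v)+\xi^2 t/(4\tau v)$ appearing in Corollary~\ref{cor:DiagTaylorExpansion} must be reproduced verbatim in the Type-II statement, and it localises every modification inside the $\mathcal{O}(\varepsilon)$-coefficients $\nu_i$.

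Next I would Taylor-expand $\widetilde{L}(\cdot,t,\tau)$ around $u=u^*(0)=0$ and substitute into~\eqref{eq:v01SmallTime} following the same bookkeeping as in Corollary~\ref{cor:DiagTaylorExpansion}. Writing $\widetilde{L}=L+\delta L$, the differences $\widetilde{\nu}_i-\nu_i$ are determined entirely by the first three Taylor coefficients of $\delta L$ at the origin. The shift in the $V$-dynamics perturbs $\widetilde{\mathbb{E}}(V_{\varepsilon t})-\mathbb{E}(V_{\varepsilon t})$ by $\varepsilon t\,\xi\rho v+\mathcal{O}(\varepsilon^2)$, which, through the inner conditional Heston expectation, injects precisely the $\xi\rho v t$ correction into $\widetilde{\nu}_0$ while leaving the $k^1$-coefficient at order $\varepsilon$ untouched, whence $\widetilde{\nu}_1=\nu_1$. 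The quadratic Taylor coefficient of $\delta L$ then propagates through the formula for $v_2$ and yields the claimed rational expression for $\widetilde{\nu}_2-\nu_2$.

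The main obstacle is the algebra for $\widetilde{\nu}_2$: one must track the interaction between the shift in $\Lambda_1$ and the derivatives $\Lambda_{0,3},\Lambda_{1,1},\Lambda_{1,2},\Lambda_{2,1}$ entering $\Upsilon(0,k)$ in~\eqref{eq:Upsilon}, and apply the chain rule on~\eqref{eq:nu0} under $(\kappa,\theta)\mapsto(\kappa-\xi\rho,\,\kappa\theta/(\kappa-\xi\rho))$ restricted to the $[0,t]$-portion of $L$. Both the invariance $\widetilde{\nu}_1=\nu_1$ and the compact cancellations producing $\rho\xi^3 t(7\rho^2-4)/(48v)-\rho\xi^3 t^2/(8v\tau)$ in $\widetilde{\nu}_2-\nu_2$ serve as natural consistency checks for the computation.
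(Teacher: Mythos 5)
Your high-level strategy matches the paper's exactly: invoke Proposition~\ref{prop:fwdsmile2} to reduce the Type-II smile to a Type-I problem under the stopped-share-price measure $\widetilde{\mathbb{P}}$, read off the modified forward lmgf from Lemma~\ref{lemma:sharepicemeasureheston}, note that the $\varepsilon^0$-coefficient $\Xi$ (and hence $v_0$, $u^*$, $\Lambda^*$) is unchanged because $\widetilde{\beta}_{\varepsilon t}$ and $\E^{-\widetilde{\kappa}\varepsilon t}$ agree with their $\mathbb{P}$-counterparts at the relevant orders, and then redo the Taylor expansion of Corollary~\ref{cor:DiagTaylorExpansion}. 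Your structural observation that the measure change localises every modification inside the $\mathcal{O}(\varepsilon)$-coefficients is correct, and the paper indeed omits exactly this computation as ``analogous.''

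However there is a concrete misidentification in your execution plan that would derail the algebra. The coefficients $\nu_0,\nu_1,\nu_2$ in Corollary~\ref{cor:DiagTaylorExpansion} are the $k$-Taylor coefficients of $v_1(k,t,\tau)$ (the $\mathcal{O}(\varepsilon)$ piece of $\sigma^2_{\varepsilon t,\varepsilon\tau}$), not of $v_2$: the function $\Upsilon$ and the derivatives $\Lambda_{0,3},\Lambda_{1,1},\Lambda_{1,2},\Lambda_{2,1}$ enter only $v_2$, which contributes at $\mathcal{O}(\varepsilon^2)$ and is absorbed into the remainder. So the ``main obstacle'' you identify never arises. The correct object is
$\widetilde{v}_1(k)-v_1(k)=\tfrac{2\tau v_0(k)^2}{k^2}\,\delta L(u^*(k))$,
where, comparing~\eqref{eq:HestonDiag1stOrder} with its $\widetilde{\mathbb{P}}$-analogue from Lemma~\ref{lemma:sharepicemeasureheston},
$\delta L(u)=\widetilde{L}(u,t,\tau)-L(u,t,\tau)=\rho\xi t\,\Xi(u,t,\tau)+\tfrac{\rho\xi^3 t^2}{4v}\,\Xi(u,t,\tau)^2$,
since only the two terms in $L$ that arise from expanding $\beta_{\varepsilon t}$ and $\E^{-\kappa\varepsilon t}$ carry the substitution $\kappa\mapsto\widetilde\kappa=\kappa-\xi\rho$, while the $\kappa$'s inside $L_0$, $L_1$ and the prefactor $2\kappa\theta/\xi^2$ are untouched. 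For the same reason the proposed ``chain rule on~\eqref{eq:nu0} under $(\kappa,\theta)\mapsto(\kappa-\xi\rho,\kappa\theta/(\kappa-\xi\rho))$'' is ill-posed, since the substitution is not applied uniformly across $L$; one should differentiate $\delta L$, not $\nu_i$. With $\Xi(u^*(k))=\tfrac{k^2}{2\tau v}-\tfrac{\rho\xi k^3}{2\tau v^2}+\mathcal{O}(k^4)$ and $v_0(k)^2=v^2+\rho\xi vk+\mathcal{O}(k^2)$ one then reads off $\widetilde{\nu}_0-\nu_0=\rho\xi vt$, the cancellation $\widetilde{\nu}_1=\nu_1$, and, at one order higher, the stated $\widetilde{\nu}_2-\nu_2$.
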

Its proof is analogous to the proofs of 
Proposition~\ref{Proposition:HestonDiagonal} and Corollary~\ref{cor:DiagTaylorExpansion}, and is therefore omitted.
Note that when $\rho=0$ or $t=0$, $\nu_i=\widetilde{\nu}_i$ ($i=1,2,3$), and the Heston forward smiles Type-I and Type-II are the same.

\begin{figure}[h!tb] 
\centering
\mbox{\subfigure[Small-maturity forward smile explosion.]{\includegraphics[scale=0.7]{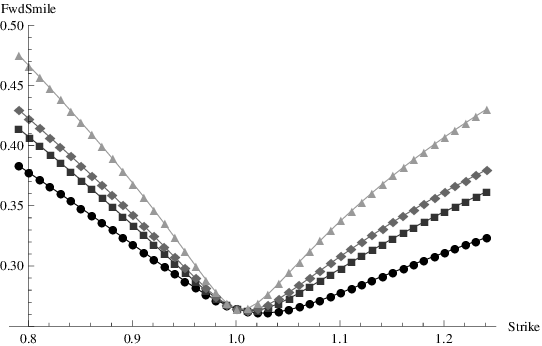}}\quad
\subfigure[Type I vs Type II forward smile.]{\includegraphics[scale=0.7]{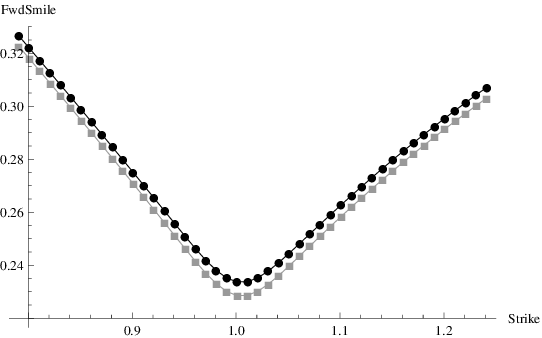}}}
\caption{(a): Forward smiles with forward-start date $t=1/2$ and maturities $\tau=1/6,1/12,1/16,1/32$ given by circles, squares, diamonds and triangles respectively using the Heston parameters 
$(v,\theta,\kappa,\rho,\xi) =(0.07,0.07,1,-0.6,0.5)$ and the asymptotic in Proposition~\ref{Proposition:HestonDiagonal}. 
(b): Type I (circles) vs Type 2 (squares) forward smile with $t=1/2$, $\tau=1/12$ and the Heston parameters $(v,\theta,\kappa,\rho,\xi) = (0.07,0.07,1,-0.2,0.34)$ using Corollaries~\ref{cor:DiagTaylorExpansion} and~\ref{cor:DiagTaylorExpansionTypeII}. }
\label{fig:hestonexplosion}
\end{figure}

\begin{figure}[h!tb] 
\centering
\includegraphics[scale=0.7]{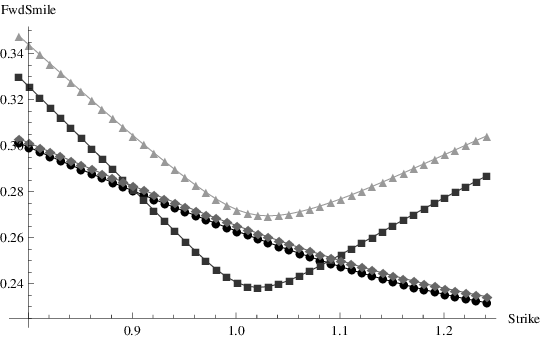}
\caption{
Forward smile vs spot smile with $v=\theta$ and $\theta>v+\xi^2/(4\kappa)$.
Circles ($t=0,\tau=1/12$) and squares ($t=1/2,\tau=1/12$) use the Heston parameters $v=\theta=0.07$,$\kappa=1$, $\rho=-0.6$, $\xi=0.3$. 
Diamonds ($t=0,\tau=1/12$) and triangles ($t=1/2$, $\tau=1/12$) use the same parameters but with 
$\theta=0.1$. Plots use the asymptotic in Proposition~\ref{Proposition:HestonDiagonal}.}
\label{fig:fwdvsspot}
\end{figure}

\subsubsection{Large-maturity Heston forward smile}\label{sec:LargeMatHeston}
Our main result here is Proposition~\ref{Prop:HestonLargeMaturity}, which is an application of  Proposition~\ref{Prop:GeneralBSFwdVolLargeTime} to the Heston forward smile. 
We shall always assume here that $\kappa>\rho\xi$. 
When this condition fails, moments of the stock price process~\eqref{eq:Heston} strictly greater than one cease to exist 
for large enough time,  
and consequently the limiting lmgf is not essentially smooth on its effective domain and Assumption~\ref{assump:Differentiability}(iv) is violated. 
This a standard assumption in the large-maturity implied volatility asymptotics literature~\cite{FJ09,FJM10,JMKR}, but
bears no consequences in markets where the implied volatility skew is downward sloping, such as equity markets, 
where the correlation is negative. 
Define the quantities
\begin{equation}\label{eq:DefUpmU*pm}
\begin{array}{ll}
u_{\pm} := \displaystyle \frac{\xi-2\kappa\rho\pm\eta}{2\xi(1-\rho^2)},
& u_{\pm}^{*} := \displaystyle \frac{\psi\pm\nu}{2 \xi (\E^{\kappa  t}-1)},\\
\eta := \displaystyle \sqrt{\xi^2(1-\rho^2)+(2\kappa-\rho\xi)^2},
& \nu := \displaystyle \sqrt{\psi^2-16\kappa^2\E^{\kappa t}},
\\
\rho_{\pm} := \displaystyle \frac{\E^{-2 \kappa  t} \left(\xi(\E^{2\kappa t}-1) \pm(\E^{\kappa  t}+1) \sqrt{16 \kappa ^2 \E^{2 \kappa  t}+\xi ^2 (1-\E^{\kappa  t})^2}\right)}{8 \kappa},
& \psi := \displaystyle \xi (\E^{\kappa  t}-1)-4\kappa\rho\E^{\kappa  t},
\end{array}
\end{equation}
as well as the interval $\Ddh\subset\mathbb{R}$ by
\begin{equation}\label{eq:DInfinityLargeMaturity}
\Ddh :=\left\{ 
  \begin{array}{l l}
   \left[u_{-},u_{+}^*\right), & \quad \text{if } -1<\rho< \rho_{-}\text{ and }t>0,\\
 \left(u_{-}^*,u_{+}\right], 
& \quad \text{if }\rho_{+}<\rho<\min(1,\kappa/\xi), t>0
\text{ and }\kappa>\rho_{+}\xi,\\
  \left[u_{-},u_{+}\right], & \quad \text{if }\rho_{-}\leq\rho\leq\min(\rho_{+},\kappa/\xi),
    \\
\end{array} \right.
\end{equation}
Details about each case are given in Lemma~\ref{lemma:rhoconditions}.
We define the functions $V$ and $H$ from $\Ddh$ to $\mathbb{R}$ by
\begin{align}
V(u) & := \frac{\kappa\theta}{\xi^2}\left(\kappa-\rho\xi u-d(u)\right)
\qquad\text{and}\qquad
H(u) :=
\frac{V(u)v \E^{-\kappa t}}{\kappa\theta -2 \beta _t V(u)}
-\frac{2\kappa\theta}{\xi ^2}\log\left(\frac{\kappa\theta-2 \beta _t V(u)}{\kappa\theta \left(1-\gamma (u)\right)}\right),
\label{eq:VandH}\\
d(u) & := \left((\kappa-\rho\xi u)^2+u\xi^2(1-u)\right)^{1/2},
\quad
\gamma(u) := \frac{\kappa-\rho\xi u-d(u)}{\kappa-\rho\xi u+d(u)},
\quad\text{and }
\beta_t := \frac{\xi^2}{4\kappa}\left(1-\E^{-\kappa t}\right).\label{eq:DGammaBeta}
\end{align}

From the proof of Proposition~\ref{Prop:HestonLimitingmgfLargeTime}, one can see that
$V$ and $H$ are always well-defined real-valued functions on~$\Ddh$.
Finally we define the functions $q^*:\mathbb{R}\to[u_{-},u_{+}]$ and $V^*:\mathbb{R}\to\mathbb{R}_{+}$ by
\begin{equation}\label{eq:V*q*}
q^*(x):=\frac{\xi-2\kappa\rho+\left(\kappa\theta\rho+x\xi\right)\eta\left(x^2\xi^2+2x\kappa\theta\rho\xi+\kappa^2\theta^2\right)^{-1/2}}{2\xi \left(1-\rho^2\right)}
\qquad\text{and}\qquad 
V^*(x):=q^*(x)x-V\left(q^*(x)\right).
\end{equation}
The following proposition gives the large-maturity forward Heston smile in Case (iii) in~\eqref{eq:DInfinityLargeMaturity},
and its proof is postponed to Section~\ref{sec:ProofsLargeMatHeston}.

\begin{proposition}\label{Prop:HestonLargeMaturity}
If $\rho_{-}\leq\rho\leq\min\left(\rho_{+},\kappa/\xi\right)$, then Corollary~\ref{cor:LargeTimeAsymp} and Proposition~\ref{Prop:GeneralBSFwdVolLargeTime} hold with $\LO=V$, $\Lambda^*=V^*$, $u^*=q^*$, 
$\Lambda_{1}=H$, $\Lambda_{2}=0$ and $\mathcal{D}_{0}=\Ddh=\left[u_{-},u_{+}\right]$.
\end{proposition}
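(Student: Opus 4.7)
\textbf{Proof plan for Proposition~\ref{Prop:HestonLargeMaturity}.}
The plan is to verify Assumption~\ref{assump:Differentiability} for the rescaled lmgf of $(\tau^{-1}X^{(t)}_{\tau})_{\tau>0}$ with the proposed $\Lambda_0=V$, $\Lambda_1=H$, $\Lambda_2=0$ and $\mathcal{D}_0=\Ddh$, and then to read off Corollary~\ref{cor:LargeTimeAsymp} and Proposition~\ref{Prop:GeneralBSFwdVolLargeTime}. The starting point is an explicit closed-form expression for the forward lmgf. Conditioning on $\mathcal{F}_t$ and using the affine structure of~\eqref{eq:Heston},
$$
\mathbb{E}\bigl[\E^{uX^{(t)}_{\tau}}\bigr]
=\mathbb{E}\bigl[\exp\bigl(C(\tau,u)+D(\tau,u)V_t\bigr)\bigr],
$$
where $C,D$ are the standard Heston characteristic exponents. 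Using the known Laplace transform of the CIR variable~$V_t$ (a non-central chi-square), one obtains a closed form for $\log\mathbb{E}[\exp(uX^{(t)}_{\tau})]$ that separates into a piece linear in~$\tau$, a bounded piece, and a residue decaying exponentially in~$\tau$. Writing $\varepsilon=1/\tau$ and multiplying by $\varepsilon$ yields the desired Taylor expansion $\Lambda_\varepsilon(u)=V(u)+\varepsilon H(u)+\mathcal{O}(\varepsilon^2)$ on the interior of~$\Ddh$, which simultaneously identifies $\Lambda_0=V$, $\Lambda_1=H$ and $\Lambda_2\equiv 0$, and settles Assumption~\ref{assump:Differentiability}(i)--(ii) since the closed-form expressions are real-analytic in $(\varepsilon,u)$ on $(0,\varepsilon_0)\times\Ddh^o$.

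Next I would pin down the effective domain. The expression~\eqref{eq:VandH} requires $d(u)\in\mathbb{R}$, which forces $u\in[u_-,u_+]$, and also requires $\kappa\theta-2\beta_t V(u)>0$, which translates into exclusion of the roots $u_\pm^*$ in~\eqref{eq:DefUpmU*pm}. The point of the assumption $\rho_-\leq\rho\leq\min(\rho_+,\kappa/\xi)$ is that---by Lemma~\ref{lemma:rhoconditions}---the roots $u_\pm^*$ then lie outside $[u_-,u_+]$, so that $\Ddh=[u_-,u_+]$ and $V,H$ are well-defined and real-valued on all of~$\Ddh$. The condition $\kappa>\rho\xi$ (contained in $\rho\leq\kappa/\xi$) is precisely what ensures $1\in(u_-,u_+)$, so that both $0$ and $1$ lie in $\Ddh^o$, yielding Assumption~\ref{assump:Differentiability}(iii) and the martingale identity $\LO(1)=V(1)=0$ needed in Proposition~\ref{Prop:GeneralBSFwdVolLargeTime}.

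The main obstacle is Assumption~\ref{assump:Differentiability}(iv): strict convexity and essential smoothness of~$V$ on $\Ddh^o$. Strict convexity follows from a direct computation of $V''$, which inherits positivity from the Heston limiting slope (cf.~\cite{FJ09,FJM10,JMKR}). Steepness is the delicate point: at the boundary $u=u_\pm$ we have $d(u_\pm)=0$, and differentiating $V(u)=(\kappa\theta/\xi^2)(\kappa-\rho\xi u-d(u))$ shows that $V'(u)$ blows up as $u\to u_\pm$ via the $-d'(u)=-\,(\text{rational})/d(u)$ term. This is precisely the condition that would fail if $\rho>\kappa/\xi$ or $\rho<\rho_-$ or $\rho>\rho_+$ (the "moment explosion" regimes), which is why these cases are excluded. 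With steepness in hand, one defines $q^*(x)$ as the unique solution of $V'(q^*(x))=x$ in $(u_-,u_+)$; a direct calculation of $V'$ and algebraic inversion produces the explicit formula~\eqref{eq:V*q*}, and $V^*(x)=q^*(x)x-V(q^*(x))$ by Lemma~\ref{lem:Properties}(i).

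Finally, Assumption~\ref{assump:Differentiability}(v) is checked on the vertical line $p_r+\I\mathbb{R}$ for $p_r\in\Ddh^o\setminus\{0\}$ using the same closed-form expression: one verifies (a) that the $\mathcal{O}(\varepsilon)$ remainder in the expansion of $\Lambda_\varepsilon$ is uniform on compact subsets of such lines, (b) that $p_i\mapsto \Re\,V(p_r+\I p_i)$ is strictly maximised at $p_i=0$ and grows without bound otherwise (this follows from $\Re\,d(p_r+\I p_i)\to\infty$ as $|p_i|\to\infty$), and (c) that the exponential decay of the residue term in $\tau$ yields the required uniform upper bound $M\varepsilon$ for $|p_i|\geq p_i^*$. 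These computations are routine but tedious and can be deferred to an appendix analogous to Appendix~\ref{append:tailverif}. Once (i)--(v) are established, Corollary~\ref{cor:LargeTimeAsymp} and Proposition~\ref{Prop:GeneralBSFwdVolLargeTime} apply verbatim with the stated identifications, completing the proof.
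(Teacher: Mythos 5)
Your proposal is correct and mirrors the paper's proof structure: verify Assumption~\ref{assump:Differentiability}(i)--(v) for $(\tau^{-1}X^{(t)}_\tau)$ via the closed-form forward lmgf~\eqref{eq:LambdaTau}, identify $\Lambda_0=V$, $\Lambda_1=H$, $\Lambda_2=0$ and $\mathcal{D}_0=\Ddh$, establish strict convexity and steepness of $V$ on $(u_-,u_+)$, compute $q^*$ from $V'(q^*(x))=x$, and invoke Corollary~\ref{cor:LargeTimeAsymp} and Proposition~\ref{Prop:GeneralBSFwdVolLargeTime}; the paper packages exactly these steps into Lemma~\ref{lemma:Vproperties}, Lemma~\ref{lemma:rhoconditions}, Proposition~\ref{Prop:HestonLimitingmgfLargeTime} and Lemma~\ref{lemma:HestonmgfExpansionLargeTime}. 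One small clarification: for $\rho_-<\rho<\rho_+$ the numbers $u_\pm^*$ of~\eqref{eq:DefUpmU*pm} are not real (the quantity under the square root defining $\nu$ is negative), so it is not that they ``lie outside $[u_-,u_+]$'' but rather that the equation $V(u)=\kappa\theta/(2\beta_t)$ has no real solution in $(u_-,u_+)$ --- which yields the same conclusion $\Ddh=[u_-,u_+]$ via Lemma~\ref{lemma:rhoconditions}(ii)--(iv).
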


\begin{remark}\label{remark:largemathest} \
\begin{enumerate}[(i)]
\item Note that $V^*$ is nothing else than the Fenchel-Legendre transform of $V$, 
and $q^*$ the corresponding saddlepoint (see~\cite{FJ09} for computational details).
\item In the Heston model there is no $t$-dependence for $v_0^{\infty}$ in~\eqref{eq:LargeMaturityForwardSmile} 
since $V^*$ does not depend on $t$.
Therefore under the conditions of the proposition, the limiting (zeroth order) smile is exactly of SVI form (see~\cite{GJ11}).
\item For Cases (i) and (ii) in~\eqref{eq:DInfinityLargeMaturity} the essential smoothness property in Assumption~\ref{assump:Differentiability}(iv) is not satisfied and a different strategy needs to be employed to derive a sharp large deviations result for large-maturity forward-start options. We leave this analysis for future research.
\item $t=0$ implies that $\rho_{\pm}=\pm1$ and Proposition~\ref{Prop:HestonLargeMaturity} extends the large-maturity asymptotics in~\cite{FJ09} and~\cite{FJM10}.
\item For practical purposes, note that $\rho \in [0,\min(1/2,\kappa/\xi)]$ is always satisfied under the assumptions of the proposition.
\item Even though the function $V^*$ does not depend on~$t$, 
$\rho_{\pm}$ and the function $H$ do (see the at-the-money example below). 
That said, to zeroth order and correlation close to zero, 
the large-maturity forward smile is the same as the large-maturity spot smile. 
This is a very different result compared to the Heston small-maturity forward smile 
(see Remark~\ref{remark:DiagHestonRemark}(iv)), 
where large differences emerge between the forward smile and the spot smile at zeroth order. 
\end{enumerate}
\end{remark}
We now give an example illustrating some of the differences between the Heston large-maturity forward smile 
and the large-maturity spot smile due to first-order differences in the asymptotic~\eqref{eq:LargeMaturityForwardSmile}. 
This ties in with Remark~\ref{remark:largemathest}(vi). 
Specifically we look at the forward at-the-money volatility which, when using Proposition~\ref{Prop:HestonLargeMaturity} 
with $\rho_{-}\leq\rho\leq\text{min}\left(\rho_{+},\kappa/\xi\right)$, satisfies
$\sigma_{t,\tau}^2(0)=v_0^{\infty}(0)+v_1^{\infty}(0,t)/\tau+\mathcal{O}\left(1/\tau^2\right)$,
as $\tau$ tends to infinity, with 
\begin{align*}
v_0^{\infty}(0)&=\frac{4 \theta  \kappa  (\eta -2 \kappa +\xi  \rho )}{\xi ^2 \left(1-\rho ^2\right)},\\
v_1^{\infty}(0,t) & =
\frac{16\kappa v\left(\rho\xi-2\kappa+\eta\right)}{\Delta\xi^2}
+\frac{16\kappa\theta}{\xi ^2} \log\left(\frac{\Delta  \E^{-\kappa t} \left(2 \kappa -\xi \rho+\left(1-2\rho^2\right)\eta\right)}{8\kappa\left(1-\rho^2\right)^2\eta}\right)\\
 & -8 \log\left(\frac{\xi  \left(1-\rho ^2\right)^{3/2}\sqrt{\eta\left(2 \xi\rho -4\kappa+2 \eta\right) }}
{\left(\xi  \left(1-2 \rho ^2\right)-\rho  (\eta-2 \kappa)\right) \left(\rho(\eta-2\kappa )+\xi \right)}\right);
\end{align*}
$\eta$ is defined in~\eqref{eq:DefUpmU*pm} and
$\Delta:=2 \kappa \left(1+\E^{\kappa  t}\left(1-2 \rho ^2\right) \right)-\left(1-\E^{\kappa  t}\right) (\rho\xi+\eta)$.
To get an idea of the $t$-dependence of the at-the-money forward volatility we set $\rho=0$ 
(since Proposition~\ref{Prop:HestonLargeMaturity} is valid for correlations near zero) 
and perform a Taylor expansion of $v_1^{\infty}(0,t)$ around $t=0$: 
$
v_1^{\infty}(0,t)=v_1^{\infty}(0,0)
+\left(\frac{2\theta}{1+\sqrt{1+\xi ^2/4 \kappa^2}}-v\right)t +\mathcal{O}\left(t^2\right).
$
When $v\geq\theta$ then at this order the large $\tau$-maturity forward at-the-money volatility 
is lower than the corresponding large $\tau$-maturity at-the-money implied volatility 
and this difference is increasing in $t$ and in the ratio $\xi/\kappa$. 
This is similar in spirit to Remark~\ref{remark:DiagHestonRemark}(ii) for the small-maturity Heston forward smile.


\subsection{Time-changed exponential L\'evy } \label{sec:ExponentialLevyForwardSmile}

It is well known~\cite[Proposition 11.2]{CT07} that the forward smile 
in exponential L\'evy models is time-homogeneous in the sense that $\sigma_{t,\tau}$ does not depend on $t$, 
(by stationarity of the increments). 
This is not necessarily true in time-changed exponential L\'evy models as we shall now see.
Let~$N$ be a  L\'evy process with lmgf given by
$\log\mathbb{E}\left(\E^{uN_t}\right) = t\phi(u)$
for $t\geq 0$ and $u\in\mathcal{K}_{\phi}:=\left\{u\in\mathbb{R}:|\phi(u)|<\infty\right\}$.
We consider models where $X:=(N_{V_t})_{t\geq0}$ pathwise
and the time-change is given by $V_t:=\int_{0}^{t}v_s\D s$ with $v$ being 
a strictly positive process independent of $N$. 
We shall consider the two following examples:
\begin{align}
\D v_t & = \kappa\left(\theta-v_t\right)\D t+\xi\sqrt{v_t}\D B_t, \label{eq:fellerdiff}\\ 
\label{eq:nonGaussOU}
\D v_t & = -\lambda v_t \D t+\D J_t,
\end{align}
with $v_0=v>0$ and $\kappa, \xi, \theta, \lambda>0$. 
Here $B$ is a standard Brownian motion and $J$ is a compound Poisson subordinator with exponential jump size distribution and L\'evy exponent $l(u):=\lambda \delta u/(\alpha-u)$ for all $u<\alpha$ with $\delta>0$ and $\alpha>0$. 
In~\eqref{eq:fellerdiff}, $v$ is a Feller diffusion and in~\eqref{eq:nonGaussOU}, it is a $\Gamma$-OU process. 
We now define the functions $\widehat{V}$ and $\widehat{H}$ from $\widehat{\mathcal{K}}_{\infty}$ to $\mathbb{R}$,
and the functions $\widetilde{V}$ and $\widetilde{H}$ from $\widetilde{\mathcal{K}}_{\infty}$ to $\mathbb{R}$ by
\begin{align}
& \widehat{V}(u)  := \frac{\kappa\theta}{\xi^2}\left(\kappa-\sqrt{\kappa^2-2\phi(u)\xi^2}\right),
\qquad
\widehat{H}(u)  :=
\frac{\widehat{V}(u)v \E^{-\kappa t}}{\kappa\theta -2 \beta _t \widehat{V}(u)}
-\frac{2\kappa\theta}{\xi ^2}\log\left(\frac{\kappa\theta-2 \beta _t \widehat{V}(u)}{\kappa\theta \left(1-\gamma(\phi(u))\right)}\right),\label{eq:FellerLargeTime}\\
& \widetilde{V}(u)  := \frac{\phi(u)\lambda \delta}{\alpha\lambda-\phi(u)}, 
\qquad
\widetilde{H}(u) :=
\frac{\lambda \alpha\delta}{\alpha\lambda-\phi(u)}\log\left(1-\frac{\phi(u)}{\alpha \lambda}\right)+\frac{\phi(u) v \E^{-\lambda t}}{\lambda}+d\log\left(\frac{\phi(u)-\alpha\lambda\E^{\lambda t}}{\E^{t\lambda}(\phi(u)-\alpha\lambda)}\right),\label{eq:GammaOULargeTime}
\end{align}
where we set
\begin{equation}\label{eq:domains}
\widehat{\mathcal{K}}_{\infty}:=\left\{u:\phi(u)\leq\kappa^2/(2\xi^2)\right\}, 
\qquad\text{and}\qquad
\widetilde{\mathcal{K}}_{\infty}:=\left\{u:\phi(u)<\alpha\lambda\right\};
\end{equation}
$\phi$ is the L\'evy exponent of~$N$ and $\beta_t$ and $\gamma$ are defined in~\eqref{eq:dgammabetatimechange}. 
The following proposition---proved in Section~\ref{sec:ProofsExpLevy}---is the main result of the section.

\begin{proposition}\label{prop:fwdsmiletimechange}
Suppose that $\phi$ is essentially smooth (Assumption~\ref{assump:Differentiability}(iv)), 
strictly convex and of class $\mathcal{C}^{\infty}$ on $\mathcal{K}^o_{\phi}$ with $\{0,1\}\subset\mathcal{K}^o_{\phi}$ and $\phi(1)=0$. 
Then Corollary~\ref{cor:LargeTimeAsymp} and Proposition~\ref{Prop:GeneralBSFwdVolLargeTime} hold:
\begin{enumerate}[(i)]
\item when $v$ follows~\eqref{eq:fellerdiff},
with $\LO=\widehat{V}$, $\Lambda_{1}=\widehat{H}$, $\Lambda_{2}=0$ 
and $\mathcal{D}_{0}=\widehat{\mathcal{K}}_{\infty}$;
\item when $v$ follows~\eqref{eq:nonGaussOU}, 
with $\LO=\widetilde{V}$, $\Lambda_{1}=\widetilde{H}$, $\Lambda_{2}=0$ 
and $\mathcal{D}_{0}=\widetilde{\mathcal{K}}_{\infty}$;
\item when $v_t\equiv 1$, 
with $\LO=\phi$,
$\Lambda_{1}=0$, $\Lambda_{2}=0$ and $\mathcal{D}_{0}=\mathcal{K}_{\phi}$.
\end{enumerate}
\end{proposition}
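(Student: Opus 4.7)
The plan is to reduce each of the three cases to a conditional-Laplace calculation, obtain explicit expressions for $\Lambda_\varepsilon$, expand in $\varepsilon = 1/\tau$, and verify Assumption~\ref{assump:Differentiability}. Since $N$ is independent of $v$ and $X_\tau^{(t)} = N_{V_{t+\tau}} - N_{V_t}$ with $N$ a L\'evy process, conditioning on the time-change and the L\'evy--Khintchine formula give
\begin{equation*}
\mathbb{E}\left[\exp\left(uX_\tau^{(t)}\right)\right] = \mathbb{E}\left[\exp\bigl(\phi(u)(V_{t+\tau}-V_t)\bigr)\right], \qquad u\in\mathcal{K}_\phi.
\end{equation*}
Case (iii) is immediate: $v\equiv 1$ gives $V_{t+\tau}-V_t=\tau$, whence $\Lambda_\varepsilon(u)\equiv\phi(u)$, and every clause of Assumption~\ref{assump:Differentiability} together with the extra hypotheses $1\in\mathcal{K}_\phi^o$, $\phi(1)=0$ of Proposition~\ref{Prop:GeneralBSFwdVolLargeTime} are inherited verbatim from the assumptions on $\phi$.

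For cases (i) and (ii), the idea is to close the conditional expectation and then Taylor-expand. By the Markov property and time-homogeneity of $v$, the conditional law of $V_{t+\tau}-V_t$ given $\mathcal{F}_t$ equals the law of $\int_0^\tau v_s\,\D s$ started from $v_t$. For the CIR dynamics~\eqref{eq:fellerdiff} and the $\Gamma$-OU dynamics~\eqref{eq:nonGaussOU}, the conditional Laplace transform $x\mapsto\mathbb{E}[\exp(w\int_0^\tau v_s\,\D s)\mid v_0=x]$ is affine-exponential in $x$ with coefficients given in closed form by Riccati (respectively linear) ODEs; integrating against the known time-$t$ law of $v_t$---again a closed-form Laplace transform---yields an explicit expression for $\Lambda_\varepsilon(u)$. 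Expanding as $\varepsilon\downarrow 0$ extracts $\Lambda_0(u)=\widehat V(u)$ (resp.\ $\widetilde V(u)$) from the Riccati fixed point, the $\mathcal{O}(\varepsilon)$ term $\Lambda_1(u)=\widehat H(u)$ (resp.\ $\widetilde H(u)$) from the finite-$\tau$ Riccati correction together with the time-$t$ Laplace transform of $v_t$ evaluated at the terminal slope; the exponentially fast convergence of the Riccati solution to its fixed point gives $\Lambda_2\equiv 0$ to the order required.

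It remains to verify Assumption~\ref{assump:Differentiability} under these identifications. Items (i)--(iii) follow from the explicit formula together with the $\mathcal{C}^\infty$ smoothness, strict convexity and $\phi(0)=0$ assumptions: analytic dependence of the Riccati solution on its source propagates both regularity and the Taylor expansion in $\varepsilon$, while $\phi(0)=0$ lies strictly inside the natural domain of $g$ in each case, so $0\in\mathcal{D}_0^o$. Clause (v) follows from the closed-form expressions, which show that $\Lambda_\varepsilon-\Lambda_0$ is $\varepsilon$ times a function bounded uniformly on the required vertical strips. Finally, $1\in\mathcal{D}_0^o$ with $\Lambda_0(1)=0$ is immediate from $\phi(1)=0$ and the interior location of $0$ in the domain of $g$, delivering the martingale hypothesis of Proposition~\ref{Prop:GeneralBSFwdVolLargeTime}.

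The principal obstacle is clause (iv), essential smoothness of $\Lambda_0$. Writing $\Lambda_0=g\circ\phi$ with $g(w)=(\kappa\theta/\xi^2)(\kappa-\sqrt{\kappa^2-2\xi^2 w})$ in case (i) and $g(w)=\lambda\delta w/(\alpha\lambda-w)$ in case (ii), strict convexity is clear since $g$ is strictly increasing and strictly convex on its effective domain and $\phi$ is strictly convex. For steepness, one splits each boundary point $u^\partial$ of $\mathcal{D}_0$: either $u^\partial\in\mathcal{K}_\phi^o$, in which case $\phi(u^\partial)$ reaches the singularity of $g$ (the branch point $\kappa^2/(2\xi^2)$ in case (i), the pole $\alpha\lambda$ in case (ii)), so $g'(\phi(u_n))\uparrow\infty$ while $\phi'(u^\partial)\neq 0$ by strict convexity; or $u^\partial\in\partial\mathcal{K}_\phi$, in which case the essential smoothness of $\phi$ itself forces $|\phi'(u_n)|\uparrow\infty$. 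In either scenario $\Lambda_0'(u_n)=g'(\phi(u_n))\,\phi'(u_n)\to\infty$, completing the verification and hence the application of Corollary~\ref{cor:LargeTimeAsymp} and Proposition~\ref{Prop:GeneralBSFwdVolLargeTime}.
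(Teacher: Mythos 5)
This matches the paper's own proof strategy: an explicit forward lmgf via the tower property and independence of $N$ and $v$, closed-form Riccati/linear coefficients, expansion in $\varepsilon=\tau^{-1}$, and verification of Assumption~\ref{assump:Differentiability}. Your essential-smoothness argument for $\LO=g\circ\phi$ (splitting boundary points of $\mathcal{D}_0$ between the singularity of $g$ and the boundary of $\mathcal{K}_{\phi}$) is spelled out more carefully than the paper's one-line assertion, though you pass more quickly over the identification $\mathcal{D}_0=\widehat{\mathcal{K}}_{\infty}$ (resp.\ $\widetilde{\mathcal{K}}_{\infty}$), which the paper establishes via the Andersen--Piterbarg moment-explosion criterion together with the uniform bound $B(\phi(u),\tau)<1/(2\beta_t)$.
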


\begin{remark}\
\begin{enumerate}[(i)]
\item The uncorrelated Heston model~\eqref{eq:Heston} can be represented as 
$N_t:=-t/2+B_t$ time-changed by an integrated Feller diffusion~\eqref{eq:fellerdiff}. 
With $\phi(u)\equiv u(u-1)/2$ and $\mathcal{K}_{\phi}=\mathbb{R}$,
Proposition~\ref{prop:fwdsmiletimechange}(i) agrees with Proposition~\ref{Prop:HestonLargeMaturity}.
\item The zeroth order large-maturity forward smile is the same as its corresponding zeroth 
order large-maturity spot smile and differences only emerge at first order. 
It seems plausible that this will always hold 
if there exists a stationary distribution for $v$ and if $v$ is independent of the L\'evy process~$N$;
\item Case (iii) in the proposition corresponds to the standard exponential L\'evy case (without time-change).
\end{enumerate}
\end{remark}

We now use Proposition~\ref{prop:fwdsmiletimechange} to highlight the first-order differences 
in the large-maturity forward smile~\eqref{eq:LargeMaturityForwardSmile} and the corresponding spot smile. 
If $v$ follows~\eqref{eq:fellerdiff} then a Taylor expansion of $v^{\infty}_1$ in~\eqref{eq:vLargeTime} around $t=0$ gives
$$
v^{\infty}_1(t,k)=v^{\infty}_1(0,k)+ \frac{8 v_0^{\infty}(k)^2}{4 k^2-v_0^{\infty}(k)^2} \widehat{V}(u^*(k)) \left(\frac{\xi ^2 v \widehat{V}(u^*(k))}{2 \theta ^2 \kappa ^2}+1-\frac{v}{\theta }\right)  t +\mathcal{O}(t^2),
\quad\text{for all }k\in\RR\setminus\{\widehat{V}'(0), \widehat{V}'(1)\}.
$$
Using simple properties of $v_0^{\infty}$ and $\widehat{V}$ 
we see that the large-maturity forward smile is lower than the corresponding spot smile 
for $k\in(\widehat{V}'(0),\widehat{V}'(1))$ (which always include the at-the-money) if $v\geq\theta$. 
The forward smile is higher than the corresponding spot smile for $k\in\mathbb{R}\backslash(\widehat{V}'(0),\widehat{V}'(1))$ (OTM options) if $v\leq\theta$, and these differences are increasing in $\xi/\kappa$ and $t$. This effect is illustrated in Figure~\ref{fig:timechangedfeller} and $k\in(\widehat{V}'(0),\widehat{V}'(1))$ corresponds to strikes in the region $(0.98,1.02)$ in the figure.

If $v$ follows~\eqref{eq:nonGaussOU} then a simple Taylor expansion of $v^{\infty}_1(\cdot,k)$ 
in~\eqref{eq:vLargeTime} around $t=0$ gives
\begin{equation*}
v^{\infty}_1(t,k)
=v^{\infty}_1(0,k)+ \frac{8 v_0^{\infty}(k)^2}{4 k^2-v_0^{\infty}(k)^2}
\frac{\phi (u^*(k)) \left[\lambda  (\delta-\alpha  v)+v \phi (u^*(k))\right]}{\alpha  \lambda -\phi (u^*(k))} t +\mathcal{O}(t^2),
\quad\text{for all }k\in\RR\setminus\{\widetilde{V}'(0),\widetilde{V}'(1)\}.
\end{equation*}
Similarly we deduce that the large-maturity forward smile is lower than the corresponding spot smile for $k\in(\widetilde{V}'(0),\widetilde{V}'(1))$ if $v\geq \delta/\alpha$. 
The forward smile is higher than the corresponding spot smile for $k\in\mathbb{R}\backslash(\widetilde{V}'(0),\widetilde{V}'(1))$ (OTM options) if $v\leq \delta/\alpha$, and these differences are increasing in $t$.

If $v$ follows~\eqref{eq:fellerdiff}(\eqref{eq:nonGaussOU}) then the stationary distribution is a gamma distribution with mean $\theta$ ($\delta/\alpha$), see~\cite[page 475 and page 487]{CT07}. 
The above results seem to indicate that the differences in level between the large-maturity forward smile 
and the corresponding spot smile depend on the relative values of $v_0$ and the mean of the stationary distribution of the process $v$. This is also similar to Remark~\ref{remark:DiagHestonRemark}(ii) and the analysis below Remark~\ref{remark:largemathest} for the Heston forward smile.
These observations are also independent of the choice of $\phi$ indicating that the fundamental quantity driving the non-stationarity of the large-maturity forward smile over the corresponding spot implied volatility smile is the choice of time-change.

In the Variance-Gamma model~\cite{MadanVG}, 
$\phi(u)\equiv\mu u+C\log\left(\frac{GM}{\left(M-u\right)\left(G+u\right)}\right)$,
for $u\in(-G,M)$,
with $C>0$, $G>0$, $M>1$ and $\mu:=-C\log\left(\frac{GM}{\left(M-1\right)\left(G+1\right)}\right)$
ensures that $(\E^{X_t})_{t\geq0}$ is a true martingale ($\phi(1)=0$). 
Clearly
$\phi$ is essentially smooth, strictly convex and infinitely differentiable on $(-G,M)$ with $\{0,1\}\subset(-G,M)$; therefore Proposition~\ref{prop:fwdsmiletimechange} applies. 
For Proposition~\ref{prop:fwdsmiletimechange}(iii), 
the solution to $\phi'(u^*(k))=k$ is $u^*(\mu)=(M-G)/2$ and
$$
u_{\pm}^*(k)=\frac{-2 C-(G-M) (k-\mu )\pm\sqrt{4 C^2+(G+M)^2 (k-\mu )^2}}{2 (k-\mu )}\quad\text{for all }k\neq\mu.
$$
The sign condition $\left(M-u\right)\left(G+u\right)>0$ imposes $-2 C\pm\sqrt{4 C^2+(G+M)^2 (k-\mu )^2}>0$ for all $k\neq\mu$.
Hence $u_{+}^*$ (continuous on the whole real line) is the only valid solution and the rate function is then given in closed-form as
$\Lambda^*(k)=k u_{+}^*(k)-\phi(u_{+}^*(k))$ for all real $k$.

\begin{figure}[h!tb] 
\centering
\mbox{\subfigure[Feller time-change: forward smile vs spot smile $v>\theta$.]{\includegraphics[scale=0.7]{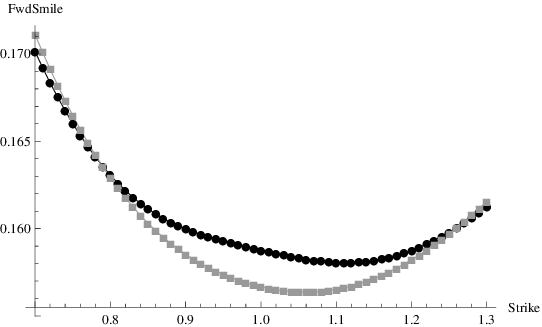}}\quad
\subfigure[Feller time-change: forward smile vs spot smile $v<\theta$.]{\includegraphics[scale=0.7]{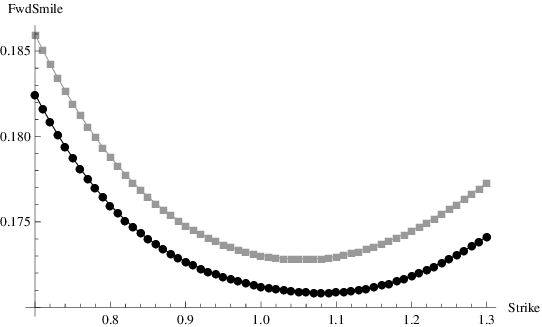}}}
\caption{Circles represent $t=0$ and $\tau=2$ and squares represent $t=1/2$ and $\tau=2$ 
using a Variance-Gamma model time-changed by a Feller diffusion and the asymptotic in Proposition~\ref{prop:fwdsmiletimechange}. 
In (a) the parameters are $C=58.12$, $G=50.5$, $M=69.37$, $\kappa=1.23$, $\theta=0.9$, $\xi=1.6$, $v=1$ and 
(b) uses the same parameters but with $\theta=1.1$.}
\label{fig:timechangedfeller}
\end{figure}

\section{Numerics}\label{sec:Numerics}

We compare here the true forward smile in various models and the asymptotics developed in Propositions~\ref{Prop:GeneralBSFwdVolShortTime} and~\ref{Prop:GeneralBSFwdVolLargeTime}. 
We calculate forward-start option prices using the inverse Fourier transform representation in~\cite[Theorem 5.1]{L04F}
and a global adaptive Gauss-Kronrod quadrature scheme. 
We then compute the forward smile $\sigma_{t,\tau}$ and
compare it to the zeroth, first and second order asymptotics given in Propositions~\ref{Prop:GeneralBSFwdVolShortTime} and~\ref{Prop:GeneralBSFwdVolLargeTime} for various models.
In Figure~\ref{fig:HestDiagComp} we compare the Heston diagonal small-maturity asymptotic in Proposition~\ref{Proposition:HestonDiagonal} with the true forward smile. 
Figure~\ref{fig:HestLargeMatComp} tests the accuracy of the Heston large-maturity asymptotic from Proposition~\ref{Prop:HestonLargeMaturity}. 
In order to use this proposition we require $\rho_{-}\leq\rho\leq\min\left(\rho_{+},\kappa/\xi\right)$ with $\rho_{\pm}$ defined in~\eqref{eq:DefUpmU*pm}. 
For the parameter choice in the figure we have $\rho_{-}=-0.65$ and the condition is satisfied. 
Finally in Figure~\ref{fig:GammaOULargeMatComp} we consider the Variance Gamma model 
time-changed by a $\Gamma$-OU process using Proposition~\ref{prop:fwdsmiletimechange}. 
Results are in line with expectations and the higher the order of the asymptotic the closer we match the true forward smile.

\begin{figure}[h!tb] 
\centering
\mbox{\subfigure[Heston diagonal small-maturity vs Fourier inversion.]{\includegraphics[scale=0.7]{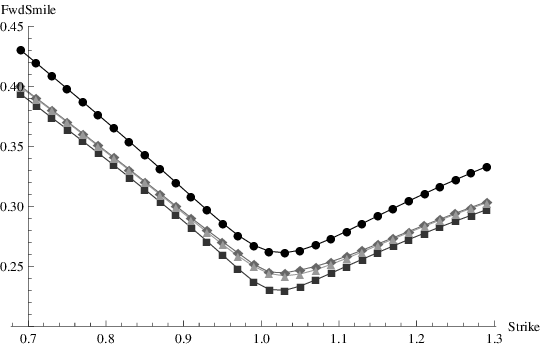}}\quad
\subfigure[Errors]{\includegraphics[scale=0.7]{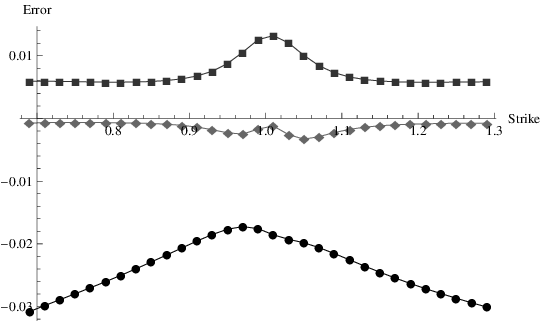}}}
\caption{In (a) circles, squares and diamonds represent the zeroth, first and second order asymptotics respectively in Proposition~\ref{Proposition:HestonDiagonal} and triangles represent the true forward smile using Fourier inversion. In (b) we plot the differences between the true forward smile and the asymptotic. 
Here, $t=1/2$, $\tau=1/12$,  $v=0.07$, $\theta=0.07$, $\kappa=1$, $\xi=0.34$, $\rho=-0.8$.}
\label{fig:HestDiagComp}
\end{figure}

\begin{figure}[h!tb] 
\centering
\mbox{\subfigure[Heston Large-Maturity vs Fourier Inversion.]{\includegraphics[scale=0.7]{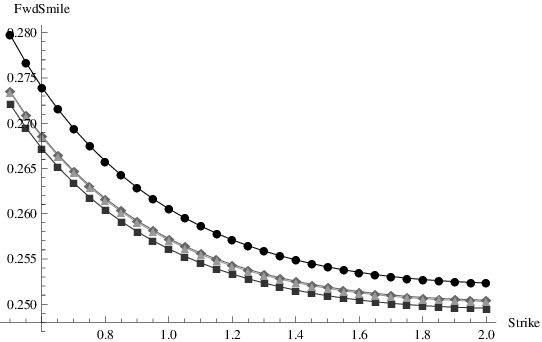}}\quad
\subfigure[Errors]{\includegraphics[scale=0.7]{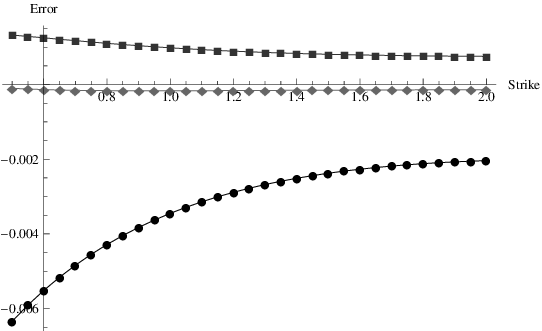}}}
\caption{In (a) circles, squares and diamonds represent the zeroth, first and second order asymptotics respectively in Proposition~\ref{Prop:HestonLargeMaturity} and triangles represent the true forward smile using Fourier inversion. In (b) we plot the differences between the true forward smile and the asymptotic. 
Here, $t=1$, $\tau=5$, $v=0.07$, $\theta=0.07$, $\kappa=1.5$, $\xi=0.34$, $\rho=-0.25$.}
\label{fig:HestLargeMatComp}
\end{figure}

\begin{figure}[h!tb] 
\centering
\mbox{\subfigure[$\Gamma$-OU time-change large-maturity / Fourier inversion.]{\includegraphics[scale=0.7]{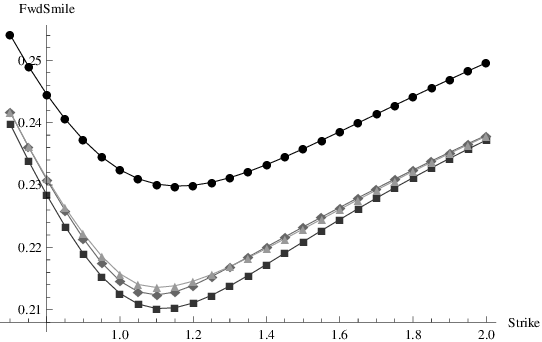}}\quad
\subfigure[Errors]{\includegraphics[scale=0.7]{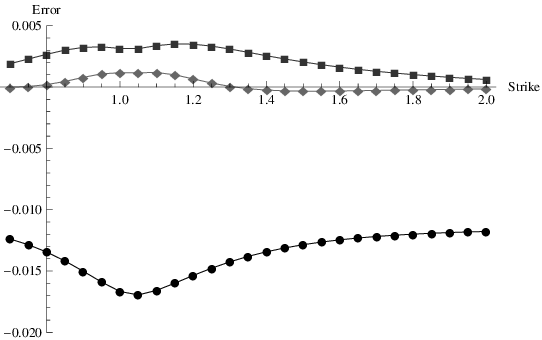}}}
\caption{In (a) circles, squares and diamonds represent the zeroth, first and second order asymptotics respectively in Proposition~\ref{prop:fwdsmiletimechange} and triangles represent the true forward smile using Fourier inversion for a variance gamma model time-changed by a $\Gamma$-OU process. In (b) we plot the differences between the true forward smile and the asymptotic. 
We use $t=1$ and $\tau=3$ 
with the parameters $C=6.5$, $G=11.1$, $M=33.4$, $v=1$, $\alpha=0.6$, $d=0.6$, $\lambda=1.8$.}
\label{fig:GammaOULargeMatComp}
\end{figure}

\newpage
\section{Proofs}\label{sec:Proofs}
\subsection{Proofs of Section~\ref{sec:GeneralResults}}\label{sec:proofMainProp}
\subsubsection{Proof of Theorem~\ref{theorem:GeneralOptionAsymp}}
Our proof relies on several steps and is based on so-called sharp large deviations tools. 
We first ---as in classical large deviations theory---define an asymptotic measure-change
allowing for weak convergence of a rescaled version of $(Y_{\varepsilon})_{\varepsilon>0}$.
In Lemma~\ref{lemma:CharactExp} we derive the asymptotics of the characteristic function of this rescaled process under this new measure.
The limit is a Gaussian characteristic function making all forthcoming computations analytically tractable.
We then write the option price as an expectation of the rescaled process under the new  measure
(see~\eqref{eq:optionPricesInverse}),
and prove an inverse Fourier transform representation (Lemma~\ref{lem:optpricerep0}) for sufficiently small $\varepsilon$.
Splitting the integration domain (Equation~\eqref{eq:IntExp}) of this inverse Fourier transform in two 
(compact interval and tails), 
(a) we integrate term by term the compact part, 
and (b) we show that Assumption~\ref{assump:Differentiability}(v) implies that the tail part is exponentially small (Lemma~\ref{lemma:tailestimatesteepcase}).
We now start the analysis and define such a change of measure by
\begin{align}\label{eq:MeasureChange}
\frac{\D\mathbb{Q}_{k,\varepsilon}}{\D\mathbb{P}}
=\exp\left(\frac{u^*(k)Y_{\varepsilon}}{\varepsilon}-\frac{\Lambda_{\varepsilon}\left(u^*(k)\right)}{\varepsilon}\right),
\end{align}
with $u^*(k)$ defined in~\eqref{eq:u*definition}. 
By Lemma~\ref{lem:Properties}(i), $u^*(k)\in\mathcal{D}_{0}^o$ for all $k\in\mathbb{R}$
and so $|\Lambda_{\varepsilon}\left(u^*(k)\right)|$ is finite for 
$\varepsilon$ small enough since $\mathcal{D}_{0}=\lim_{\varepsilon\downarrow 0} \{u\in\mathbb{R}:|\Lambda_{\varepsilon}\left(u \right)|<\infty\}$.  
Also $\D\mathbb{Q}_{k,\varepsilon}/\D\mathbb{P}$ is almost surely strictly positive and hence $\mathbb{E}\left(\D\mathbb{Q}_{k,\varepsilon}/\D\mathbb{P}\right)=1$. 
Therefore ~\eqref{eq:MeasureChange} is a valid measure change for all $k\in\mathbb{R}$. 
We define the random variable  
\begin{align}\label{eq:zvarepsilon}
Z_{k,\varepsilon}:=(Y_{\varepsilon}-k)/\sqrt{\varepsilon}
\end{align}
and set the characteristic function 
$\Phi_{Z_{k,\varepsilon}}:\mathbb{R}\to\mathbb{C}$ of $Z_{k,\varepsilon}$ in the $\mathbb{Q}_{k,\varepsilon}$-measure as follows
\begin{equation}\label{eq:PhiZ}
\Phi_{Z_{k,\varepsilon}}(u)=\mathbb{E}^{\mathbb{Q}_{k,\varepsilon}}\left(\E^{\I u Z_{k,\varepsilon}}\right).
\end{equation}
Recall from Section~\ref{sec:GeneralResults} that $\Lambda_{i}:=\Lambda_{i}(u^*(k))$ and $\Lambda_{i,l}:=\left.\partial_u^l \Lambda_{i}(u)\right|_{u=u^*(k)}$;
we first start with the following important technical lemma.

\begin{lemma}\label{lemma:CharactExp}
The expansion 
$\log\Phi_{Z_{k,\varepsilon}}(u)=-\frac{1}{2}\Lambda_{0,2} u^2 +\eta_1(u) \sqrt{\varepsilon }+\eta_2(u)\varepsilon +\eta_3(u) \varepsilon ^{3/2}+ \mathcal{O}\left(\varepsilon^2\right)$
holds as $\varepsilon\downarrow0$,
with the functions $\eta_i$, $i=1,2,3$ defined in~\eqref{eq:etadef}.
\end{lemma}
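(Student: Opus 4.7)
The plan is to rewrite $\Phi_{Z_{k,\varepsilon}}(u)$ as a deterministic function of $\Lambda_{\varepsilon}$ evaluated at a perturbed argument, then do a double Taylor expansion (in $\varepsilon$ and in the perturbation $\I u\sqrt\varepsilon$) and collect terms by powers of $\sqrt\varepsilon$. The key algebraic cancellation will be the divergent $\I uk/\sqrt\varepsilon$ term, which disappears by virtue of the saddlepoint relation $\Lambda_{0,1}(u^*(k))=k$ from Lemma~\ref{lem:Properties}(i).

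\textbf{Step 1 (deterministic representation).} From the definition~\eqref{eq:MeasureChange} of $\mathbb{Q}_{k,\varepsilon}$ and~\eqref{eq:zvarepsilon}--\eqref{eq:PhiZ},
\[
\Phi_{Z_{k,\varepsilon}}(u)
= \E^{-\I uk/\sqrt\varepsilon-\Lambda_{\varepsilon}(u^*(k))/\varepsilon}\,
\mathbb{E}\!\left[\exp\!\left(\frac{u^*(k)+\I u\sqrt\varepsilon}{\varepsilon}\,Y_{\varepsilon}\right)\right]
= \exp\!\left(\frac{\Lambda_{\varepsilon}(u^*(k)+\I u\sqrt\varepsilon)-\Lambda_{\varepsilon}(u^*(k))}{\varepsilon}-\frac{\I uk}{\sqrt\varepsilon}\right),
\]
which is well-defined for $\varepsilon$ small enough since $u^*(k)\in\mathcal{D}_0^o$ by Lemma~\ref{lem:Properties}(i) and $\mathcal{D}_0=\lim_{\varepsilon\downarrow0}\mathcal{D}_\varepsilon$.

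\textbf{Step 2 (expansion in $\varepsilon$).} By Assumption~\ref{assump:Differentiability}(i),
\[
\Lambda_{\varepsilon}(u^*(k)+\I u\sqrt\varepsilon)-\Lambda_{\varepsilon}(u^*(k))
= \sum_{i=0}^{2}\varepsilon^{i}\bigl[\Lambda_{i}(u^*(k)+\I u\sqrt\varepsilon)-\Lambda_{i}(u^*(k))\bigr]+\mathcal{O}(\varepsilon^{3}).
\]

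\textbf{Step 3 (Taylor expansion in the argument).} By Assumption~\ref{assump:Differentiability}(ii) each $\Lambda_i$ is smooth on $\mathcal{D}_0^o$, so I Taylor expand each bracket to a sufficiently high order in $\I u\sqrt\varepsilon$. The $\Lambda_0$-bracket needs terms up to $(\I u\sqrt\varepsilon)^5$ (since it will be divided by~$\varepsilon$), the $\varepsilon\Lambda_1$-bracket up to order $(\I u\sqrt\varepsilon)^3$, and the $\varepsilon^2\Lambda_2$-bracket only the linear term. Using $\Lambda_{0,1}=k$ from~\eqref{eq:u*definition}, the leading term of the $\Lambda_0$-part divided by $\varepsilon$ is exactly $\I uk/\sqrt\varepsilon$, which cancels the explicit subtracted term in Step~1.

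\textbf{Step 4 (collection).} After cancellation, grouping the remaining contributions by powers of $\sqrt\varepsilon$ yields
\[
\log\Phi_{Z_{k,\varepsilon}}(u)
= -\tfrac{1}{2}\Lambda_{0,2}u^{2}
+\sqrt\varepsilon\,\eta_1(u)
+\varepsilon\,\eta_2(u)
+\varepsilon^{3/2}\,\eta_3(u)
+\mathcal{O}(\varepsilon^{2}),
\]
with $\eta_1,\eta_2,\eta_3$ identified as polynomials in $\I u$ with coefficients built from the $\Lambda_{i,l}(u^*(k))$; this is precisely the definition of the $\eta_i$ announced in~\eqref{eq:etadef}. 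There is no real obstacle here beyond careful bookkeeping: the only delicate point is to make sure the Taylor remainder bounds are uniform enough in the imaginary direction, which follows from the smoothness in Assumption~\ref{assump:Differentiability}(ii) on an open neighbourhood of $u^*(k)\in\mathcal{D}_0^o$ and the fact that $u$ is a fixed real parameter.
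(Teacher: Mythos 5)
Your Step 1 matches the paper exactly, and the formal bookkeeping in Steps 3--4 reproduces the correct $\eta_i$. The gap is in Step 2, and it concerns the \emph{order} in which you perform the two expansions. You first invoke Assumption~\ref{assump:Differentiability}(i) at the complex, $\varepsilon$-dependent argument $u^*(k)+\I u\sqrt{\varepsilon}$, and only afterwards Taylor-expand the limit functions $\Lambda_i$ in the increment. But Assumption~\ref{assump:Differentiability}(i) is stated as a pointwise asymptotic for each fixed \emph{real} $u\in\mathcal{D}_0^o$; nothing in it licenses applying the expansion at a moving complex argument, nor does it give the uniform control over the remainder that you would need as the argument drifts with $\varepsilon$. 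Your closing remark that uniformity ``follows from the smoothness in Assumption~\ref{assump:Differentiability}(ii)'' overreaches: Assumption (ii) gives $\mathcal{C}^\infty$ regularity in the real variables $(\varepsilon,u)\in(0,\varepsilon_0)\times\mathcal{D}_0^o$, not analytic control in the imaginary direction, and it does not by itself turn the pointwise statement (i) into one valid off the real line with a uniform $\mathcal{O}(\varepsilon^3)$.

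The paper reverses the two expansions precisely to avoid this: it first uses the analyticity of the lmgf $\Lambda_\varepsilon$ (Luk\'acs~\cite{L70}) to write a convergent power series in the increment $\I u\sqrt{\varepsilon}$ around the \emph{fixed real point} $u^*(k)$, with coefficients $\Lambda_\varepsilon^{(n)}(u^*(k))$; it then invokes a result of Murdock~\cite[Theorem 1.8.5]{M91}, together with Assumption~\ref{assump:Differentiability}(ii), to justify differentiating the asymptotic expansion of Assumption (i) term by term, yielding $\Lambda_\varepsilon^{(n)}(u^*(k))=\Lambda_{0,n}+\varepsilon\Lambda_{1,n}+\varepsilon^2\Lambda_{2,n}+\mathcal{O}(\varepsilon^3)$ at that fixed real point. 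Both steps are carried out where the hypotheses actually apply. To repair your argument you would need either to show that (i) extends to a uniform expansion on a complex neighbourhood of $u^*(k)$ (e.g.\ via analytic continuation together with a Cauchy-integral argument), or simply to adopt the paper's order: series in the argument first, then $\varepsilon$-expansion of the derivative coefficients at $u^*(k)$.
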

\begin{remark}\label{rem:weakconv}
By L\'evy's Convergence Theorem~\cite[Page 185, Theorem 18.1]{W03}, $Z_{k,\varepsilon}$ defined in~\eqref{eq:zvarepsilon} 
converges weakly to a normal random variable with mean 0 and variance $\Lambda_{0,2}$ in the $\mathbb{Q}_{k,\varepsilon}$-measure 
as $\varepsilon$ tends to zero.
\end{remark}
\begin{proof}
Since $\Lambda_{\varepsilon}$ is analytic~\cite[Theorem 7.1.1]{L70} on the set $\{z\in\mathbb{C}:\Re(z)\in\mathcal{D}_{0}^o\}$ for $\varepsilon$ small enough,
$\Lambda_{\varepsilon}\left(\I u\sqrt{\varepsilon} +u^*(k)\right)-\Lambda_{\varepsilon}\left(u^*(k)\right)
=\sum_{n=1}^{\infty}\frac{1}{n!}\Lambda_{\varepsilon}^{(n)}(u^*(k))\left(\I u\sqrt{\varepsilon}\right)^n$
holds as $\varepsilon\downarrow 0$.
By~\cite[Theorem 1.8.5.]{M91} the asymptotic for $\Lambda_{\varepsilon}$ in~\ref{eq:LambdaExpansion} can be differentiated with respect to $u$ due to Assumption~\ref{assump:Differentiability}(ii).
This allows us to write $\Lambda_{\varepsilon}\left(\I u\sqrt{\varepsilon} +u^*(k)\right)-\Lambda_{\varepsilon}\left(u^*(k)\right)
=\sum_{n=1}^{\infty}\frac{1}{n!}\left(\Lambda_{0,n}+\varepsilon \Lambda_{1,n} +\varepsilon^2 \Lambda_{2,n}+\mathcal{O}\left(\varepsilon^3\right)\right)\left(\I u\sqrt{\varepsilon}\right)^n$,
and hence
\begin{align*}
\log\Phi_{Z_{k,\varepsilon}}(u)
 & = \log\mathbb{E}^{\mathbb{P}}\left(\frac{\D\mathbb{Q}_{k,\varepsilon}}{\D\mathbb{P}}\E^{\I u Z_{k,\varepsilon}}\right)
=
\log\mathbb{E}^{\mathbb{P}}\left[\exp\left(\frac{u^*(k) Y_{\varepsilon}}{\varepsilon}
-\frac{\Lambda_{\varepsilon}(u^*(k))}{\varepsilon}\right)
\exp\left({\I u\sqrt{\varepsilon}\left(\frac{Y_{\varepsilon}}{\varepsilon}\right)-\frac{\I k u}{\sqrt{\varepsilon}}}\right)\right] \\ \nonumber
 & = {-\frac{1}{\varepsilon}\Lambda_{\varepsilon}\left(u^*(k)\right)-\frac{\I uk}{\sqrt{\varepsilon}}}+\log\mathbb{E}^{\mathbb{P}}\left[\exp\left({\left(\frac{Y_{\varepsilon}}{\varepsilon}\right)\left(\I u\sqrt{\varepsilon}+u^*(k)\right)}\right)\right]  \\
 & = -\frac{\I uk}{\sqrt{\varepsilon}}+\frac{1}{\varepsilon}\left(\Lambda_{\varepsilon}\left(\I u\sqrt{\varepsilon} +u^*(k)\right)-\Lambda_{\varepsilon}\left(u^*(k)\right)\right) \\
 & = -\frac{\I uk}{\sqrt{\varepsilon}}+\frac{1}{\varepsilon}\sum_{n=1}^{\infty}\frac{1}{n!}\left(\Lambda_{0,n}+\varepsilon \Lambda_{1,n} +\varepsilon^2 \Lambda_{2,n}+\mathcal{O}\left(\varepsilon^3\right)\right)\left(\I u\sqrt{\varepsilon}\right)^n.
\end{align*}



Writing out the first few terms of this expression and using~\eqref{eq:u*definition} we find that
\begin{align*}
\log\Phi_{Z_{k,\varepsilon}}(u)
 & = -\frac{\I uk}{\sqrt{\varepsilon}}+\frac{\I \Lambda_{0,1} u}{\sqrt{\varepsilon}}-\frac{1}{2}\Lambda_{0,2} u^2
 +\frac{1}{\varepsilon}\left(\sum_{n=3}^{\infty}\frac{1}{n!}\Lambda_{0,n}\left(\I u\sqrt{\varepsilon}\right)^n
+\sum_{n=1}^{\infty}\frac{1}{n!}
\left(\varepsilon \Lambda_{1,n} +\varepsilon^2 \Lambda_{2,n}+\mathcal{O}\left(\varepsilon^3\right)\right)
\left(\I u\sqrt{\varepsilon}\right)^n\right) \\
 & = -\frac{1}{2}\Lambda_{0,2} u^2 
+\sum_{n=1}^{\infty}\frac{1}{n!}\left(\left(\frac{n!}{\left(n+2\right)!}\Lambda_{0,{n+2}}\left(\I u\right)^{2}+ \Lambda_{1,n}\right) +\varepsilon \Lambda_{2,n}+\mathcal{O}\left(\varepsilon^2\right)\right)\left(\I u\sqrt{\varepsilon}\right)^n \\
 & = -\frac{1}{2}\Lambda_{0,2} u^2 +\eta_1(u) \sqrt{\varepsilon }+\eta_2(u)\varepsilon +\eta_3(u) \varepsilon ^{3/2}+ \mathcal{O}\left(\varepsilon^2\right), 
\end{align*}
where we define the functions
\begin{equation}\label{eq:etadef}
\eta_1(u) := \I u \Lambda _{1,1}-\frac{\I u^3}{6} \Lambda _{0,3},\quad
\eta_2(u) := -\frac{u^2}{2} \Lambda _{1,2}+\frac{u^4}{24} \Lambda _{0,4},\quad
\eta_3(u) := \I u \Lambda _{2,1} -\frac{\I u^3}{6} \Lambda _{1,3}+ \frac{\I u^5}{120} \Lambda _{0,5}.
\end{equation}
\end{proof}

With these preliminary results, we can now move on to the actual proof of Theorem~\ref{theorem:GeneralOptionAsymp}.
For $j=1,2,3$, let us define the functions~$g_j:\mathbb{R}_+^2\to\mathbb{R}_+$ by
$$
g_j(x,y):=\left\{ 
  \begin{array}{ll}
(x-y)^+, \quad & \text{if } j=1, \\
(y-x)^+,\quad & \text{if } j=2,\\
 \min(x,y),\quad & \text{if } j=3.
 \end{array} \right.
$$
Using the definition of the $\mathbb{Q}_{k,\varepsilon}$-measure in~\eqref{eq:MeasureChange} the option prices 
in Theorem~\ref{theorem:GeneralOptionAsymp} can be written as
\begin{align}
\mathbb{E}\left[g_j\left(\E^{Y_{\varepsilon}f(\varepsilon)},\E^{kf(\varepsilon)}\right)\right]
 & = \E^{\frac{1}{\varepsilon}\Lambda_{\varepsilon}\left(u^*(k)\right)}\mathbb{E}^{\mathbb{Q}_{k,\varepsilon}}\left[\E^{-\frac{u^*(k)}{\varepsilon}Y_{\varepsilon}}g_j\left(\E^{Y_{\varepsilon}f(\varepsilon)},\E^{kf(\varepsilon)}\right)\right]
\label{eq:optionPricesInverse} \\
 & = \E^{-\frac{1}{\varepsilon}\left[ku^*(k)-\Lambda_{\varepsilon}\left(u^*(k)\right)\right]}\mathbb{E}^{\mathbb{Q}_{k,\varepsilon}}\left[\E^{-\frac{u^*(k)}{\varepsilon}\left(Y_{\varepsilon}-k\right)}g_j\left(\E^{Y_{\varepsilon}f(\varepsilon)},\E^{kf(\varepsilon)}\right)\right]. \nonumber
\end{align}
By the expansion in Assumption~\ref{assump:Differentiability}(i) and Equality~\eqref{eq:LambdaStarRep2} we immediately have
\begin{align}\label{eq:RateFuncRep}
\exp\left({-\frac{1}{\varepsilon}\left(ku^*(k)-\Lambda_{\varepsilon}\left(u^*(k)\right)\right)}\right) 
=\exp\left(-\frac{1}{\varepsilon}
\Lambda^{*}(k)+\Lambda_{1}+\Lambda_{2}\,\varepsilon+\mathcal{O}\left(\varepsilon^2\right)\right).
\end{align}
From the definition of the random variable $Z_{k,\varepsilon}$ in~\eqref{eq:zvarepsilon} we obtain
$$ 
\mathbb{E}^{\mathbb{Q}_{k,\varepsilon}}\left[\E^{-\frac{u^*(k)}{\varepsilon}\left(Y_{\varepsilon}-k\right)}g_j\left(\E^{Y_{\varepsilon}f(\varepsilon)},\E^{kf(\varepsilon)}\right)\right]
=\E^{kf(\varepsilon)}\mathbb{E}^{\mathbb{Q}_{k,\varepsilon}}\left[\tilde{g}_j(Z_{k,\varepsilon})\right],
$$
where for $j=1,2,3$, we define the modified payoff functions~$\tilde{g}_j:\mathbb{R}\to\mathbb{R}_+$ by
$
\label{eq:gjtilde}
\tilde{g}_j(z):=\E^{-u^*(k) z/{\sqrt{\varepsilon}}}g_j(\E^{z \sqrt{\varepsilon}f(\varepsilon)},1).
$
Assuming (for now) that $\tilde{g}_j\in L^1(\RR)$, we have for any $u\in\RR$,
\begin{align*}
\left(\mathcal{F}\tilde{g}_{j}\right)(u)
:=\int_{-\infty}^{\infty}\tilde{g}_j(z)\E^{\I u z}\D z
=\int_{-\infty}^{\infty}\exp\left(-\frac{u^*(k)z}{\sqrt{\varepsilon}}\right)
g_j\left(\E^{z\sqrt{\varepsilon}f(\varepsilon)},1\right)\E^{\I u z}\D z,
\quad\text{for }j=1,2,3.
\end{align*}
For ease of notation define the function $C_{\varepsilon,k}:\mathbb{R}\to\mathbb{C}$ by
\begin{equation}\label{eq:Cdef}
C_{\varepsilon,k}(u):= \frac{\varepsilon ^{3/2} f(\varepsilon )}{\left(u^*(k)-\I u \sqrt{\varepsilon }\right) 
\left(u^*(k)-\varepsilon  f(\varepsilon )-\I u \sqrt{\varepsilon }\right)}.
\end{equation}
For $j=1$ we can write
$$
\label{eq:stripsofreg}
\int_{-\infty}^{\infty}\tilde{g}_1(z)\E^{\I u z}\D z
  = \left[\frac{\exp\Big(z\left(\sqrt{\varepsilon}f(\varepsilon)-u^*(k)/\sqrt{\varepsilon}+\I u\right)\Big)}{\sqrt{\varepsilon}f(\varepsilon)-u^*(k)/\sqrt{\varepsilon}+\I u}\right]_{0}^{\infty}
-\left[\frac{\exp\Big(z\left(-u^*(k)/\sqrt{\varepsilon}+\I u\right)\Big)}{-u^*(k)/\sqrt{\varepsilon}+\I u}\right]_{0}^{\infty}
 = C_{\varepsilon,k}(u),
$$
which is valid for $u^*(k)>\varepsilon f(\varepsilon)$. 
For $\varepsilon$ sufficiently small and by the definition of $f$ in ~\eqref{eq:LDP Rescaling} 
this holds for $u^*(k)>c$. 
For $j=2$ we can write
$$
\int_{-\infty}^{\infty}\tilde{g}_2(z)\E^{\I u z}\D z
  = \left[\frac{\exp\Big(z\left(-u^*(k)/\sqrt{\varepsilon}+\I u\right)\Big)}{-u^*(k)/\sqrt{\varepsilon}+\I u}\right]_{-\infty}^{0}
-\left[\frac{\exp\Big(z\left(\sqrt{\varepsilon}f(\varepsilon)-u^*(k)/\sqrt{\varepsilon}+\I u\right)\Big)}{\sqrt{\varepsilon}f(\varepsilon)-u^*(k)/\sqrt{\varepsilon}+\I u}\right]_{-\infty}^{0} 
 =C_{\varepsilon,k}(u),
$$
which is valid for $u^*(k)<0$ as $\varepsilon$ tends to zero. Finally, for $j=3$ we have
\begin{align*}
\int_{-\infty}^{\infty}\tilde{g}_3(z)\E^{\I u z}\D z
 & = \int_{-\infty}^{0}\E^{-\frac{u^*(k)}{\sqrt{\varepsilon}}z}g_3\left(\E^{z\sqrt{\varepsilon}f(\varepsilon)},1\right)\E^{\I u z}\D z+\int_{0}^{\infty}\E^{-\frac{u^*(k)}{\sqrt{\varepsilon}}z}g_3\left(\E^{z\sqrt{\varepsilon}f(\varepsilon)},1\right)\E^{\I u z}\D z \\
 & = \left[\frac{\exp\Big(z\left(\sqrt{\varepsilon}f(\varepsilon)-u^*(k)/\sqrt{\varepsilon}+\I u\right)\Big)}
{\sqrt{\varepsilon}f(\varepsilon)-u^*(k)/\sqrt{\varepsilon}+\I u}\right]_{-\infty}^{0}
+\left[\frac{\exp\Big(z\left(-u^*(k)/\sqrt{\varepsilon}+\I u\right)\Big)}{-u^*(k)/\sqrt{\varepsilon}+\I u}\right]_{0}^{\infty} 
  = -C_{\varepsilon,k}(u),
\end{align*}
which is valid for $0<u^*(k)<\varepsilon f(\varepsilon)$. 
For $\varepsilon$ sufficiently small and by the assumption on $f$ in ~\eqref{eq:LDP Rescaling} 
this is true for $0<u^*(k)<c$. 
In this context $u^*(k)$ comes out naturally in the analysis as a classical dampening factor. 
Note that in order for these strips of regularity to exist we require that $\{0,c\}\subset\mathcal{D}_{0}^o$, 
as assumed in the theorem. 
By the strict convexity 
and essential smoothness property in Assumption~\ref{assump:Differentiability}(iv) we have
\begin{equation}\label{eq:altdomrep}
\begin{array}{rll}
0<u^*(k)<c \quad & \text{if and only if}\quad & \Lambda_{0,1}(0)<k<\Lambda_{0,1}(c), \\
u^*(k)<0 \quad & \text{if and only if}\quad & k<\Lambda_{0,1}(0),\\
u^*(k)>c \quad & \text{if and only if}\quad & k>\Lambda_{0,1}(c).
\end{array}
\end{equation}
The following technical lemma allows us to write the transformed option price as an inverse Fourier transform.
\begin{lemma}\label{lem:optpricerep0}
There exists $\varepsilon^*_1>0$ such that for all $\varepsilon<\varepsilon^*_1$ and all $k\in\mathbb{R}\backslash\{\Lambda_{0,1}(0), \Lambda_{0,1}(c)\}$, we have
($\bar{a}$ denoting the complex conjugate of $a\in\mathbb{C}$)
\begin{align}\label{eq:Pars}
\mathbb{E}^{\mathbb{Q}_{k,\varepsilon}}\left[\tilde{g}_j(Z_{k,\varepsilon})\right]
= \begin{dcases*}
       \frac{1}{2\pi}\int_{\RR} \Phi_{Z_{k,\varepsilon}}(u)\overline{C_{\varepsilon,k}(u)} \D u,  & if $j=1,u^*(k)>c$,\\
       \frac{1}{2\pi}\int_{\RR}\Phi_{Z_{k,\varepsilon}}(u)\overline{C_{\varepsilon,k}(u)} \D u,  & if $j=2, u^*(k)<0$, \\
  -\frac{1}{2\pi}\int_{\RR}\Phi_{Z_{k,\varepsilon}}(u)\overline{C_{\varepsilon,k}(u)} \D u,  & if $j=3, 0<u^*(k)<c. $
        \end{dcases*}
\end{align}
\end{lemma}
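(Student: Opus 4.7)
The plan is to invert the explicit Fourier transforms $\mathcal{F}\tilde{g}_j$ computed just above the lemma and then apply Fubini to rewrite the expectation against $Z_{k,\varepsilon}$ as an integral against its characteristic function. Those preliminary computations already supply $\mathcal{F}\tilde{g}_1=\mathcal{F}\tilde{g}_2=C_{\varepsilon,k}$ and $\mathcal{F}\tilde{g}_3=-C_{\varepsilon,k}$, valid respectively when $u^*(k)>\varepsilon f(\varepsilon)$, $u^*(k)<0$ and $0<u^*(k)<\varepsilon f(\varepsilon)$. Since $\varepsilon f(\varepsilon)\to c$ by~\eqref{eq:LDP Rescaling}, I would first choose $\varepsilon^*_1>0$ small enough so that, for each admissible $k$, the pertinent $\varepsilon$-dependent sign condition reduces to its $\varepsilon$-free counterpart $u^*(k)>c$, $u^*(k)<0$ or $0<u^*(k)<c$ listed in~\eqref{eq:Pars} (the bijection between the regions for $u^*(k)$ and $k$ being~\eqref{eq:altdomrep}).

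Next, from~\eqref{eq:Cdef}, the function $C_{\varepsilon,k}$ is continuous on $\RR$ and decays like $|u|^{-2}$ at infinity, so $C_{\varepsilon,k}\in L^1(\RR)$. Combined with the fact that $\tilde{g}_j\in L^1(\RR)$---automatic in each strip since the relevant exponent $\sqrt{\varepsilon}f(\varepsilon)-u^*(k)/\sqrt{\varepsilon}$ or $-u^*(k)/\sqrt{\varepsilon}$ has the correct sign at the pertinent tail---the Fourier inversion theorem, together with continuity of $\tilde{g}_j$, gives
$$
\tilde{g}_j(z)=\frac{1}{2\pi}\int_{\RR}\mathcal{F}\tilde{g}_j(u)\,\E^{-\I u z}\,\D u,\qquad z\in\RR.
$$
Substituting this representation into $\mathbb{E}^{\mathbb{Q}_{k,\varepsilon}}[\tilde{g}_j(Z_{k,\varepsilon})]$ and swapping the order of integration---justified by Fubini via the bound $\int_\Omega\int_\RR|\mathcal{F}\tilde{g}_j(u)|\,\D u\,\D\mathbb{Q}_{k,\varepsilon}=\|C_{\varepsilon,k}\|_{L^1}<\infty$---produces
$$
\mathbb{E}^{\mathbb{Q}_{k,\varepsilon}}\bigl[\tilde{g}_j(Z_{k,\varepsilon})\bigr]=\frac{1}{2\pi}\int_\RR\mathcal{F}\tilde{g}_j(u)\,\Phi_{Z_{k,\varepsilon}}(-u)\,\D u=\frac{1}{2\pi}\int_\RR\mathcal{F}\tilde{g}_j(u)\,\overline{\Phi_{Z_{k,\varepsilon}}(u)}\,\D u,
$$
the last equality because $Z_{k,\varepsilon}$ is real-valued. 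Since the left-hand side is real, conjugating inside the integral yields the equivalent form $(2\pi)^{-1}\int \overline{\mathcal{F}\tilde{g}_j(u)}\,\Phi_{Z_{k,\varepsilon}}(u)\,\D u$, and the three cases of~\eqref{eq:Pars} follow immediately by plugging in $\mathcal{F}\tilde{g}_j=\pm C_{\varepsilon,k}$.

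I do not anticipate any serious obstacle at this stage: Fubini and Fourier inversion both apply purely from the $L^1$-nature of $C_{\varepsilon,k}$, without any tail control on $\Phi_{Z_{k,\varepsilon}}$. The finer decay estimates on the characteristic function encoded in Assumption~\ref{assump:Differentiability}(v) are \emph{not} needed in Lemma~\ref{lem:optpricerep0}; they will enter only in the next step of the proof of Theorem~\ref{theorem:GeneralOptionAsymp}, where the integral~\eqref{eq:Pars} is split into a compact interval (on which Lemma~\ref{lemma:CharactExp} supplies the Taylor expansion) and tails that are shown to be exponentially negligible using that assumption.
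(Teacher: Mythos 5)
Your proof is correct, and it follows a genuinely different (and arguably more elementary) route than the paper's. The paper writes $\mathbb{E}^{\mathbb{Q}_{k,\varepsilon}}[\tilde{g}_j(Z_{k,\varepsilon})]=(q_j\ast p)(k/\sqrt{\varepsilon})$ with $q_j(z)=\tilde{g}_j(-z)$ and $p$ the $\mathbb{Q}_{k,\varepsilon}$-density of $Y_\varepsilon/\sqrt{\varepsilon}$, applies the convolution theorem $\mathcal{F}(q_j\ast p)=\mathcal{F}q_j\cdot\mathcal{F}p$, and then uses Rudin's inversion theorem on that product; to do so it needs the $L^1$-membership of $\Phi_{Z_{k,\varepsilon}}\overline{C_{\varepsilon,k}}$, which it records as Lemma~\ref{lem:L1lem} and proves via the tail machinery of Lemma~\ref{lemma:tailestimatesteepcase} and Assumption~\ref{assump:Differentiability}(v). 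You instead invert $\mathcal{F}\tilde{g}_j$ directly (valid since $\tilde g_j$ and $\mathcal{F}\tilde g_j=\pm C_{\varepsilon,k}$ are both in $L^1(\RR)$ and $\tilde g_j$ is continuous), substitute into the expectation, and interchange via Fubini, whose hypothesis $\int_\Omega\int_\RR|\mathcal{F}\tilde g_j(u)|\,\D u\,\D\mathbb{Q}_{k,\varepsilon}=\|C_{\varepsilon,k}\|_{L^1}<\infty$ uses only the $\mathcal{O}(|u|^{-2})$ decay of $C_{\varepsilon,k}$ and $|\E^{-\I u Z_{k,\varepsilon}}|=1$. This buys you two things: you never need the (silently used, and not part of Assumption~\ref{assump:Differentiability}) existence of a density for $Y_\varepsilon$ under $\mathbb{Q}_{k,\varepsilon}$, and you avoid Lemma~\ref{lem:L1lem} entirely, so---as you correctly observe---Assumption~\ref{assump:Differentiability}(v) plays no role in establishing~\eqref{eq:Pars}; it only enters later when the integral is split and the tail is shown to be exponentially small. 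One minor simplification: the step from $\int\mathcal{F}\tilde g_j(u)\Phi_{Z_{k,\varepsilon}}(-u)\,\D u$ to $\int\overline{\mathcal{F}\tilde g_j(u)}\,\Phi_{Z_{k,\varepsilon}}(u)\,\D u$ follows directly from the substitution $u\mapsto -u$ together with $\mathcal{F}\tilde g_j(-u)=\overline{\mathcal{F}\tilde g_j(u)}$ (as $\tilde g_j$ is real-valued), without invoking the reality of the left-hand side.
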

We note in passing that 
\begin{equation}\label{eq:Cconj}
\overline{C_{\varepsilon,k}(u)} =\frac{\varepsilon ^{3/2} f(\varepsilon )}{\left(u^*(k)+\I u \sqrt{\varepsilon }\right) \left(u^*(k)-\varepsilon  f(\varepsilon )+\I u \sqrt{\varepsilon }\right)}.
\end{equation}
We now consider the integral appearing in Lemma~\ref{lem:optpricerep0}.
For $\varepsilon>0$ small enough, we can split the integral as
\begin{align}\label{eq:IntExp}
\int_{\RR} \Phi_{Z_{k,\varepsilon}}(u) \overline{C_{\varepsilon,k}(u)} \D u 
& = \int_{|u|< 1/\sqrt{\varepsilon}} \Phi_{Z_{k,\varepsilon}}(u) \overline{C_{\varepsilon,k}(u)} \D u 
 + \int_{|u|\geq 1/\sqrt{\varepsilon}} \Phi_{Z_{k,\varepsilon}}(u) \overline{C_{\varepsilon,k}(u)} \D u \\
& = \int_{|u|< 1/\sqrt{\varepsilon}} \exp\left(-\frac{\Lambda_{0,2}u^2}{2}\right) H(\varepsilon,u) 
\E^{\mathcal{O}(\varepsilon^2)}\D u
 + \mathcal{O}\left(\E^{-\beta/\varepsilon}\right),\nonumber
\end{align}
for some $\beta>0$ by Lemma~\ref{lemma:tailestimatesteepcase}, 
and using also Lemma~\ref{lemma:CharactExp} for the first integral.
The function $H:\mathbb{R}_{+}\times\mathbb{R}\to\mathbb{C}$ is defined as
$H(\varepsilon,u)
:=
\exp\left[\eta_1(u) \sqrt{\varepsilon }+\eta_2(u)\varepsilon +\eta_3(u) \varepsilon ^{3/2}\right]
\overline{C_{\varepsilon,k}(u)}$,
with $\eta_i$ ($i=1,2,3$) defined in~\eqref{eq:etadef}.
A Taylor expansion of $H$ around $\varepsilon=0$ for $c=0$ yields
$$
H(\varepsilon,u)
 = \frac{f(\varepsilon)\varepsilon^{3/2}}{u^*(k)^2}\left(1+\bar{h}_1(u,0)\sqrt{\varepsilon}+\bar{h}_2(u,0)\varepsilon+\bar{h}_3(u,0)\varepsilon^{3/2}+\frac{\varepsilon  f(\varepsilon )}{u^*(k)}+\left(\frac{\eta _1(u)}{u^*(k)}-\frac{3 \I u}{u^*(k)^2}\right)\varepsilon ^{3/2} f(\varepsilon) + \mathcal{O}\left(\varepsilon^2\right)\right),
$$
where we define the following functions:
\begin{align} \label{eq:hdefs}
h_1(u,c) & := \frac{\I u}{u^*(k)-c}\left(\frac{c}{u^*(k)}-2\right),\qquad 
h_2(u,c):=-\frac{u^2 \left(c^2-3 c u^*(k)+3 u^*(k)^2\right)}{u^*(k)^2 \left(u^*(k)-c\right)^2}, \\ \nonumber
h_3(u,c) & := \frac{\I u^3 \left(4 u^*(k)^3-c^3+4 c^2 u^*(k)-6 c u^*(k)^2\right)}{u^*(k)^3 \left(u^*(k)-c\right)^3}, \qquad
\bar{h}_1(u,c) := \eta _1(u)+h_1(u,c), \\ \nonumber
 \bar{h}_2(u,c)
 & := \frac{\eta _1^2(u)}{2}+\eta _2(u)+h_2(u,c)+\eta_1(u)h_1(u,c),
\\ \nonumber
\bar{h}_3(u,c) & := h_2(u,c)  \eta_1(u)  +h_1(u,c) \left(\frac{\eta _1^2(u)}{2}+\eta _2(u)\right)
+ \frac{\eta _1^3(u)}{6}+\eta _2(u) \eta _1(u)+\eta _3(u) +h_3(u,c),
\end{align}
with the $\eta_i$ for $i=1,2,3$, defined in~\eqref{eq:etadef}. 
A (tedious) Taylor expansion of $H$ around $\varepsilon=0$ for $c>0$ yields
\begin{align*}
H(\varepsilon,u) & = \frac{c \sqrt{\varepsilon }}{u^*(k) \left(u^*(k)-c\right)}
\left\{1+\bar{h}_1(u,c)\sqrt{\varepsilon}+\bar{h}_2(u,c)\varepsilon+\bar{h}_3(u,c)\varepsilon^{3/2}+\frac{u^*(k) (\varepsilon  f(\varepsilon )-c)}{c \left(u^*(k)-c\right)}\right. \\
&\left.+\frac{u^*(k) \sqrt{\varepsilon}(\varepsilon  f(\varepsilon )-c) }{c \left(u^*(k)-c\right)}
\left(\eta _1(u)-\frac{2 \I u}{u^*(k)-c}\right)+\mathcal{O}\left(\varepsilon^2\right)\right\}.
\end{align*}
We will shortly be integrating $H$ against a zero-mean Gaussian characteristic function over $\RR$ 
and as such all odd powers of $u$ will have a null contribution. 
In particular note that the polynomials
\begin{align*}
\eta_1,\qquad
\bar{h}_1,\qquad 
\bar{h}_3, \qquad 
\left(\frac{\eta _1(u)}{u^*(k)}-\frac{3 \I u}{\left(u^*(k)\right)^2}\right)\varepsilon ^{3/2} f(\varepsilon) 
\qquad \text{and}\qquad
\frac{u^*(k) \sqrt{\varepsilon } (\varepsilon  f(\varepsilon )-c) }{c \left(u^*(k)-c\right)}\left(\eta _1(u)-\frac{2 \I u}{u^*(k)-c}\right)
\end{align*}
are odd functions of $u$ and hence have zero contribution. 
The major quantity is $\bar{h}_2$, which we can rewrite as
$\bar{h}_2(u,c)=\bar{h}_{2,1}(c)u^2+\bar{h}_{2,2}(c)u^4-\frac{1}{72}\Lambda _{0,3}^2u^6$,
where
$$
\bar{h}_{2,1}(c)
 := -\frac{h_1(u,c) \Lambda_{1,1}}{\I}-\frac{\Lambda _{1,1}^2+\Lambda _{1,2}}{2}+h_2(1,c)
\qquad\text{and}\qquad
\bar{h}_{2,2}(c)
 := \frac{h_1(u,c) \Lambda_{0,3}}{6\I} + \frac{\Lambda _{1,1} \Lambda _{0,3}}{6}+\frac{\Lambda _{0,4}}{24}.
$$
Let
$$\phi_\varepsilon(c)\equiv \frac{c \sqrt{\varepsilon }\ind_{\{c>0\}}+\varepsilon^{3/2} f(\varepsilon )\ind_{\{c=0\}}}{ u^*(k) \left(u^*(k)-c\right)}.$$
Using simple properties of moments of a Gaussian random variable we finally compute the following
\begin{align*}
&\int_{|u|<1/{\sqrt{\varepsilon}}} \exp\left(-\frac{\Lambda_{0,2}u^2}{2}\right) H(\varepsilon,u)\E^{\mathcal{O}(\varepsilon^2)} \D u 
= 
\int_{|u|<1/{\sqrt{\varepsilon}}} \exp\left(-\frac{\Lambda_{0,2}u^2}{2}\right) H(\varepsilon,u)(1+\mathcal{O}(\varepsilon^2)) \D u
\\ \nonumber
&=\phi_\varepsilon(c) \left[
\int_{|u|<1/{\sqrt{\varepsilon}}}\E^{-\frac{1}{2}\Lambda_{0,2} u^2}
\left(1+\bar{h}_{2}(u,c)
+\frac{u^*(k) (\varepsilon  f(\varepsilon )-c)}{c \left(u^*(k)-c\right)}\ind_{\{c>0\}}+\frac{\varepsilon  f(\varepsilon )}{u^*(k)}\ind_{\{c=0\}}
\right) \D u +\mathcal{O}(\varepsilon ^2) \right]\\ \nonumber
&=\phi_\varepsilon(c) \left[
\int_{\RR}\E^{-\frac{1}{2}\Lambda_{0,2} u^2}
\left(1+\bar{h}_{2}(u,c)
+\frac{u^*(k) (\varepsilon  f(\varepsilon )-c)}{c \left(u^*(k)-c\right)}\ind_{\{c>0\}}+\frac{\varepsilon  f(\varepsilon )}{u^*(k)}\ind_{\{c=0\}}
\right) \D u +\mathcal{O}(\varepsilon ^2) \right]\\ \nonumber
&=
\phi_\varepsilon(c)\sqrt{\frac{2\pi}{\Lambda_{0,2}}}
\left(1+\frac{\bar{h}_{2,1}(c)}{\Lambda_{0,2}}+\frac{3\bar{h}_{2,2}(c)}{\Lambda_{0,2}^2}-\frac{5\Lambda _{0,3}^2}{24\Lambda_{0,2}^3}+\frac{u^*(k) (\varepsilon  f(\varepsilon )-c)}{c \left(u^*(k)-c\right)}\ind_{\{c>0\}}+\frac{\varepsilon  f(\varepsilon )}{u^*(k)}\ind_{\{c=0\}}+\mathcal{O}(\varepsilon ^2)\right).
\end{align*}
The third line follows from the Laplace method: in particular the tail estimate ($|u|>1/\sqrt{\varepsilon}$) for an integral over the Gaussian characteristic function is exponentially small,
and hence is absorbed in the $\mathcal{O}(\varepsilon^2)$ term. 
Combining this with~\eqref{eq:IntExp}, Lemma~\ref{lem:optpricerep0} and ~\eqref{eq:RateFuncRep} with the property~\eqref{eq:altdomrep}, the theorem follows.

\subsubsection{Proof of the forward implied volatility expansions (Propositions~\ref{Prop:GeneralBSFwdVolShortTime},~\ref{Prop:GeneralBSFwdVolLargeTime})}
Gao and Lee~\cite{GL11} have obtained representations for asymptotic implied volatility for small and large-maturity regimes in terms of the assumed asymptotic behaviour of certain option prices, outlining the general procedure for transforming option price asymptotics into implied volatility asymptotics. 
The same methodology can be followed to transform our  forward-start option asymptotics (Corollary~\ref{cor:ShortTimeAsymp} and Corollary~\ref{cor:LargeTimeAsymp}) into  forward smile asymptotics. 
In the proofs of Proposition~\ref{Prop:GeneralBSFwdVolShortTime} and~Proposition~\ref{Prop:GeneralBSFwdVolLargeTime} 
we hence assume for brevity the existence of an ansatz for the forward smile asymptotic and solve for the coefficients. 
We refer the reader to~\cite{GL11} for the complete methodology.

\begin{proof}[Proof of Proposition~\ref{Prop:GeneralBSFwdVolShortTime}]
 Using $\Lambda_{0,1}(0)=0$ and substituting the ansatz 
$\sigma^2_{\varepsilon t,\varepsilon \tau}(k)=v_0(k,t,\tau)+v_1(k,t,\tau) \varepsilon +v_2(k,t,\tau) \varepsilon ^2+\mathcal{O}\left(\varepsilon^3\right)$
into Corollary~\ref{Cor:BSOptionSmallTime}, we get that forward-start option prices have the asymptotics
\begin{align*}
&\mathbb{E}\left(\E^{X^{\left(\varepsilon t\right)}_{\varepsilon\tau}}-\E^k\right)^+\ind_{\{k>0\}} +
\mathbb{E}\left(\E^k-\E^{X^{\left(\varepsilon t\right)}_{\varepsilon\tau}}\right)^+\ind_{\{k<0\}} \\
&=\exp\left({-\frac{k^2}{2 \tau  v_0(k,t,\tau) \varepsilon }+\frac{k^2 v_1(k,t,\tau)}{2 \tau  v_0(k,t,\tau)^2}+\frac{k}{2}}\right)
\frac{ \left(v_0(k,t,\tau) \varepsilon \tau\right)^{3/2}}{k^2\sqrt{2\pi }}\left(1+\gamma(k,t,\tau) \varepsilon +\mathcal{O}\left(\varepsilon^2\right)\right) , 
\end{align*}
for all $k\ne 0$, where we set
$$\gamma(k,t,\tau):=-\tau v_0(k,t,\tau)\left(\frac{3}{k^2}+\frac{1}{8}\right)+\frac{k^2v_2(k,t,\tau)}{2\tau v_0(k,t,\tau)^2}-\frac{k^2 v_1(k,t,\tau)^2}{2\tau v_0(k,t,\tau)^3}+\frac{3v_1(k,t,\tau)}{2v_0(k,t,\tau)}.$$ 
The result follows after equating orders with the general formula in Corollary~\ref{cor:ShortTimeAsymp}.
\end{proof}

\begin{proof}[Proof of Proposition~\ref{Prop:GeneralBSFwdVolLargeTime}]
Substituting the ansatz 
\begin{equation} \label{eq:largematansatz}
\sigma_{t,\tau}^2(k)=v_0^{\infty}(k,t)+v_1^{\infty}(k,t)/\tau+v_2^{\infty}(k,t)/\tau^2
+\mathcal{O}\left(1/\tau^3\right),
\end{equation}
into Corollary~\ref{Cor:BSOptionLargeTime} 
we obtain the following asymptotic expansions for forward-start options:
\begin{align*}
&\mathbb{E}\left(\E^{X_{\tau}^{(t)}}-\E^{k\tau}\right)^+\ind_{\textrm{A}}
-\mathbb{E}\left(\E^{X_{\tau}^{(t)}}\wedge\E^{k\tau}\right)\ind_{\textrm{B}}
+\mathbb{E}\left(\E^{k\tau}-\E^{X_{\tau}^{(t)}}\right)^+\ind_{\textrm{C}} \\
&=\exp \left(-\tau  \left(\frac{k^2}{2 v_0(k,t)}-\frac{k}{2}+\frac{v_0(k,t)}{8}\right)
+\frac{v_1(k,t)k^2}{2v_0(k,t)^2}-\frac{v_1(k,t)}{8}\right) \\ 
&\frac{4 \tau^{-1/2} v_0(k,t)^{3/2}}{\left(4 k^2-v_0(k,t)^2\right)\sqrt{2\pi}}
\left(1+\frac{\gamma^{\infty}(k,t)}{\tau}+\mathcal{O}\left(\frac{1}{\tau^2}\right)\right),
\end{align*}
for all $k\in\mathbb{R}\setminus\{\Lambda_{0,1}(0),\Lambda_{0,1}(1)\}$, where
\begin{equation}\label{eq:largematdoms}
\textrm{A}:=\left\{k>\frac{1}{2}\sigma_{t,\tau}^2(k)\right\},\qquad 
\textrm{B}:=\left\{-\frac{1}{2}\sigma_{t,\tau}^2(k)<k<\frac{1}{2}\sigma_{t,\tau}^2(k)\right\},\qquad 
\textrm{C}:=\left\{k<-\frac{1}{2}\sigma_{t,\tau}^2(k)\right\},
\end{equation}
$$
\gamma^{\infty}(k,t) 
 := \frac{\left(12 k^2+v_0^2(k,t)\right) \left(4 k^2 v_1(k,t)-v_0^2(k,t)
\left(v_1(k,t)+8\right)\right)}
{2 v_0(k,t) \left(v_0^2(k,t)-4k^2\right)^2}
-\frac{v_1^2(k,t)k^2}{2 v_0^3(k,t)}+\frac{v_2(k,t)k^2}{2 v_0^2(k,t)}-\frac{v_2(k,t)}{8}.
$$
We obtain the expressions for $v_1^{\infty}$ and $v_2^{\infty}$ 
by equating orders with the formula in Corollary~\ref{cor:LargeTimeAsymp}. 
Choosing the correct root for the zeroth order term $v_0^{\infty}$ is now classical in this literature (this is an argument by contradiction), 
and we refer the reader to~\cite{FJM10} for details.
\end{proof}

\subsection{Proofs of Section~\ref{sec:HestonForwardSmile}}
We now let $(X_t)_{t\geq0}$ be the Heston process satisfying the SDE~\eqref{eq:Heston}. 
The tower property for expectations yields the forward lmgf:
\begin{equation}\label{eq:LambdaTau}
\log\mathbb{E}\left(\E^{u X_{\tau}^{(t)}}\right)
=
A(u,\tau)+\frac{B(u,\tau)}{1-2\beta_t B(u,\tau)}v\E^{-\kappa t}
-\frac{2\kappa\theta}{\xi^2}\log\left(1-2\beta_t B(u,\tau)\right),
\end{equation}
defined for all $u$ such that the rhs exists and where 
\begin{equation}\label{eq:ABfunctions}
\begin{array}{rl}
A(u,\tau) & := \displaystyle
\frac{\kappa\theta}{\xi^2}\left(\left(\kappa-\rho\xi u- d(u)\right)\tau-2\log\left(\frac{1-\gamma(u)\exp\left(-d(u)\tau\right)}{1-\gamma(u)}\right)\right),\\
B(u,\tau) & := \displaystyle
\frac{\kappa-\rho\xi u-d(u)}{\xi^2}\frac{1-\exp\left(-d(u)\tau\right)}{1-\gamma(u)\exp\left(-d(u)\tau\right)},
\end{array}
\end{equation}
and $d$, $\gamma$ and $\beta$ were introduced in~\eqref{eq:DGammaBeta}.
In the next two subsections we develop the tools needed to apply Propositions~\ref{Prop:GeneralBSFwdVolShortTime} and~\ref{Prop:GeneralBSFwdVolLargeTime} to the Heston model. 

\subsubsection{Proofs of Section~\ref{sec:DiagSmallMatHeston}} \label{sec:proofdiagsmallmat}

We consider here the Heston diagonal small-maturity process 
$(X^{\left(\varepsilon t\right)}_{\varepsilon\tau})_{\varepsilon>0}$ 
with $X$ defined in~\eqref{eq:Heston} and $(X_{\tau}^{(t)})_{\tau>0}$ in~\eqref{eq:XtTauDef}. 
The forward rescaled lmgf $\Lambda_{\varepsilon}$ in~\eqref{eq:Renorm-mgf} 
is easily determined from~\eqref{eq:LambdaTau}.

In this subsection, we prove Proposition~\ref{Proposition:HestonDiagonal}.
For clarity, the proof is divided into the following steps:
\begin{enumerate}[(i)]
\item In Lemma~\ref{lemma:hestonsmallmatdomain} we show that $\mathcal{D}_{0}=\mathcal{K}_{t,\tau}$ and $0\in\mathcal{D}_{0}^o$;
\item  In Lemma~\ref{lemma:smalltimehestonexp} we show that the Heston diagonal small-maturity process has an expansion of the form given in Assumption~\ref{assump:Differentiability} with $\LO=\Xi$ and $\Lambda_{1}=L$,
where $\Xi$ and $L$ are defined in~\eqref{eq:HestonDiagZeroOrder} and~\eqref{eq:HestonDiag1stOrder};
\item In Lemma~\ref{lemma:Smalltimehestonessentsmooth} we show that $\Xi$ 
is strictly convex and essentially smooth on $\mathcal{D}_{0}^o$, i.e. Assumption~\ref{assump:Differentiability}(iv);
\item The map $(\varepsilon,u)\mapsto \Lambda_{\varepsilon}(u)$
is of class $\mathcal{C}^{\infty}$ on $\mathbb{R}_{+}^*\times\mathcal{D}_{0}^{o}$, $\Lambda_{0,1}(0)=0$ and Assumption~\ref{assump:Differentiability}(v) is also satisfied.
\end{enumerate}

\begin{lemma}\label{lemma:hestonsmallmatdomain}
For the Heston diagonal small-maturity process we have $\mathcal{D}_{0}=\mathcal{K}_{t,\tau}$ 
and $0\in\mathcal{D}_{0}^o$ with $\mathcal{K}_{t,\tau}$ defined in~\eqref{eq:HestonDiagZeroOrder} 
and  $\mathcal{D}_{0}$ defined in Assumption~\ref{assump:Differentiability}.
\end{lemma}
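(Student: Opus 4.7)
The plan is to analyze the explicit formula~\eqref{eq:LambdaTau} for the forward log-moment generating function of $X^{(\varepsilon t)}_{\varepsilon\tau}$ at argument $u/\varepsilon$ and multiply by $\varepsilon$ to obtain $\Lambda_\varepsilon(u)$. Finiteness is controlled by two conditions: (a) the rational expression $B(u/\varepsilon,\varepsilon\tau)$ and the logarithm in $A$ must stay away from their poles, and (b) $1 - 2\beta_{\varepsilon t} B(u/\varepsilon,\varepsilon\tau) > 0$, so that integration against the CIR-like conditional distribution of $V_{\varepsilon t}$ converges. I will show that for fixed $u$, both conditions hold for all sufficiently small $\varepsilon$ precisely when $u \in \mathcal{K}_{t,\tau}$.

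The main computation is in the small-$\varepsilon$ asymptotics. Since $\varepsilon^{2}d(u/\varepsilon)^{2} = -\xi^{2}\bar{\rho}^{2}u^{2} + \mathcal{O}(\varepsilon)$, choosing the branch with positive imaginary part yields $\varepsilon d(u/\varepsilon) \to \I\xi\bar{\rho}u$, so $d(u/\varepsilon)\varepsilon\tau \to \I\xi\bar{\rho}\tau u$ and $\gamma(u/\varepsilon) \to (\rho + \I\bar{\rho})^{2}$. A direct manipulation with the identity $\cot x = \I(\E^{\I x}+\E^{-\I x})/(\E^{\I x}-\E^{-\I x})$ then gives
\[
v\,\varepsilon\,B(u/\varepsilon,\varepsilon\tau) \longrightarrow \frac{uv}{\xi(\bar{\rho}\cot(\xi\bar{\rho}\tau u/2) - \rho)} = \Xi(u,0,\tau).
\]
Since $\beta_{\varepsilon t} \sim \xi^{2}t\varepsilon/4$ and $\varepsilon A(u/\varepsilon,\varepsilon\tau) \to 0$ (the first term of $A$ picks up an extra factor of $\varepsilon\tau$, and the logarithm remains bounded because $(\rho+\I\bar{\rho})^{2}\neq 1$ when $|\rho|<1$), it follows that $2\beta_{\varepsilon t} B(u/\varepsilon,\varepsilon\tau) \to \xi^{2}t\,\Xi(u,0,\tau)/(2v)$. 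Condition (b) therefore holds for all sufficiently small $\varepsilon$ if and only if $\Xi(u,0,\tau) < 2v/(\xi^{2}t)$, which is exactly $u \in \mathcal{K}_{t,\tau}$; condition (a) is automatic once this strict inequality holds, since the isolated singularities of $u\mapsto \Xi(u,0,\tau)$ are already excluded. This yields $\mathcal{D}_{0} = \mathcal{K}_{t,\tau}$.

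For the interior claim, the expansion $\cot(\xi\bar{\rho}\tau u/2) \sim 2/(\xi\bar{\rho}\tau u)$ as $u \to 0$ yields $\Xi(u,0,\tau)\to 0$, so $\Xi(0,0,\tau) = 0 < 2v/(\xi^{2}t)$. Continuity of $u \mapsto \Xi(u,0,\tau)$ at the origin then provides an open neighborhood of $0$ contained in $\mathcal{K}_{t,\tau}$, giving $0 \in \mathcal{D}_{0}^{o}$. The main obstacle is fixing the branch of $d(\cdot)$ and executing the complex-arithmetic simplification that collapses $\varepsilon B(u/\varepsilon,\varepsilon\tau)$ onto the real $\cot$-form appearing in $\Xi$; once that identity is in place, both claims of the lemma follow at once.
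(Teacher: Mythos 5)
Your overall decomposition---split finiteness into (a) poles of the explicit $A$, $B$ formulas and (b) $1-2\beta_{\varepsilon t}B>0$---is the right shape, the asymptotic $v\varepsilon B(u/\varepsilon,\varepsilon\tau)\to\Xi(u,0,\tau)$ is correct, and your argument for $0\in\mathcal{D}_0^o$ is fine. However, there is a genuine gap in the way you dismiss condition (a). The closed-form expressions $A(u/\varepsilon,\varepsilon\tau)$ and $B(u/\varepsilon,\varepsilon\tau)$ can be finite real numbers even when $\mathbb{E}[\exp(uX^{(\varepsilon t)}_{\varepsilon\tau}/\varepsilon)]$ has already exploded: moment explosion is persistent, so once $\varepsilon\tau$ exceeds the first explosion time $\tau_H^*(u/\varepsilon)$ the CGF is $+\infty$ for all later maturities, whereas the formula for $B$ merely passes through a simple pole in $\tau$ and re-emerges finite on the other side. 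Consequently ``$B$ stays away from its poles'' is strictly weaker than ``the Heston CGF is finite'', and it is the latter that defines $\mathcal{D}_\varepsilon=\{u:|\Lambda_\varepsilon(u)|<\infty\}$.

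Concretely, since $\bar\rho\cot(\bar\rho\xi\tau u/2)=\rho$ has infinitely many solutions, $\Xi(\cdot,0,\tau)$ has poles not only at $u_\pm$ but at further points, and on the intervals beyond $u_\pm$ the function $\Xi(\cdot,0,\tau)$ is finite (and typically negative). For such $u$ your conditions $\Xi(u,0,\tau)<\infty$ and $\Xi(u,0,\tau)<2v/(\xi^2t)$ are both satisfied, yet $u\notin\mathcal{D}_0$ because $\tau_H^*(u/\varepsilon)<\varepsilon\tau$ for small $\varepsilon$. What is missing is exactly the step the paper performs: invoke the Andersen--Piterbarg explosion-time formula, compute its asymptotics as $\varepsilon\downarrow0$ (yielding $\tau_H^*(u/\varepsilon)\sim\varepsilon\times\text{const}(u)$), and conclude that $\mathcal{K}_{\varepsilon\tau}^H$ shrinks to the open interval $(u_-,u_+)$. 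Only after intersecting with this interval---and observing that $\Xi(\cdot,0,\tau)$ is nonnegative on $(u_-,u_+)$ with $\Xi\to+\infty$ at the endpoints, so that $\{\Xi<2v/(\xi^2t)\}\subseteq(u_-,u_+)$ for $t>0$---does the identification $\mathcal{D}_0=\mathcal{K}_{t,\tau}$ follow. You would need to supply that explosion-time step (or an equivalent convexity/positivity argument for the CGF itself) to close the proof.
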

\begin{proof}
For any $t>0$, the random variable $V_t$ in~\eqref{eq:Heston} is distributed as 
$\beta_t$ times a non-central chi-square random variable 
with $q=4\kappa\theta/\xi^2>0$ degrees of freedom 
and non-centrality parameter $\lambda=v \E^{-\kappa t}/\beta_t>0$. 
It follows that the corresponding lmgf is given by
\begin{equation}\label{eq:HestonVarianceMGF}
\Lambda_{t}^{V}(u)
:=\mathbb{E}\left(\E^{uV_t}\right)
=\exp\left({\frac{\lambda \beta_tu}{1-2\beta_tu}}\right)\left(1-2\beta_tu\right)^{-q/2},
\qquad\text{for all }u<\frac{1}{2\beta_t}.
\end{equation}
The re-normalised Heston forward lmgf $\Lambda_\varepsilon$ is then computed as
$$
\E^{\Lambda_{\varepsilon}(u)/\varepsilon}
=\mathbb{E}\left[\E^{\frac{u}{\varepsilon}\left(X_{\varepsilon t+\varepsilon\tau}-X_{\varepsilon t}\right)}\right]
= \mathbb{E}\left[\mathbb{E}\left(\E^{\frac{u}{\varepsilon}\left(X_{\varepsilon t+\varepsilon\tau}
-X_{\varepsilon t}\right)}|\mathcal{F}_{\varepsilon t}\right)\right]
=\mathbb{E}\left(\E^{A\left(\frac{u}{\varepsilon},\varepsilon\tau\right)
+B\left(\frac{u}{\varepsilon},\varepsilon\tau\right)V_{\varepsilon t}}\right)
=\E^{A\left(\frac{u}{\varepsilon},\varepsilon\tau\right)}
\Lambda_{\varepsilon t}^{V}\left(B\left(u/\varepsilon,\varepsilon\tau\right)\right),\nonumber
$$
which agrees with~\eqref{eq:LambdaTau}.
This only makes sense in some effective domain~$\mathcal{K}_{\varepsilon t, \varepsilon\tau}\subset\mathbb{R}$.
The lmgf for $V_{\varepsilon t}$ is well-defined in
$\mathcal{K}_{\varepsilon t}^{V}
:=\{u\in\mathbb{R}:B\left(u/\varepsilon,\varepsilon\tau\right)<\frac{1}{2\beta_{\varepsilon t}}\}$, 
and hence 
$\mathcal{K}_{\varepsilon t,\varepsilon \tau}
=\mathcal{K}_{\varepsilon t}^{V} \cap \mathcal{K}_{\varepsilon \tau}^{H}$, 
where
$\mathcal{K}_{\varepsilon \tau}^{H}$ is the effective domain of the (spot) Heston lmgf.
Consider first $\mathcal{K}_{\varepsilon \tau}^{H}$ for small $\varepsilon$. 
From~\cite[Proposition 3.1]{AP07} if $\xi ^2 (u/\varepsilon-1) u/\varepsilon>(\kappa -\xi  \rho  u/\varepsilon)^2$ 
then the explosion time~$\tau_H^*(u):=\sup\{t\geq 0:\mathbb{E}(\E^{u X_{t}})<\infty\}$ of the Heston lmgf is
$$\tau^*_H\left(\frac{u}{\varepsilon}\right)
=\frac{2}{\sqrt{\xi ^2 (u/\varepsilon-1) u/\varepsilon-(\kappa-\rho\xi  u/\varepsilon)^2}}
\left(\pi\ind_{\{\rho\xi u/\varepsilon-\kappa<0\}}
+\arctan\left(\frac{\sqrt{\xi ^2 (u/\varepsilon-1) u/\varepsilon-(\kappa-\rho\xi  u/\varepsilon)^2}}{\rho\xi u/\varepsilon-\kappa }\right)\right).$$
Recall the following Taylor series expansions, for $x$ close to zero:
\begin{equation*}
\left.
\begin{array}{rll}
\displaystyle\arctan\left(\frac{1}{\rho\xi u/x-\kappa }\sqrt{\xi^2 \left(\frac{u}{x}-1\right)
 \frac{u}{x}-\left(\kappa -\xi  \rho  \frac{u}{x}\right)^2}\right)
 & =\displaystyle\sgn(u)\arctan\left(\frac{\bar{\rho}}{  \rho  }\right)+\mathcal{O}\left(x\right),
\quad & \text{if } \rho\ne 0,\\
\displaystyle\arctan\left(-\frac{1}{\kappa}\sqrt{\xi ^2 \left(\frac{u}{x}-1\right) \frac{u}{x}-\kappa^2}\right)
 & =\displaystyle-\frac{\pi}{2}+\mathcal{O}(x),
\quad & \text{if } \rho = 0.
\end{array}
\right.
\end{equation*}
As $\varepsilon$ tends to zero 
$\xi ^2 (u/\varepsilon-1) u/\varepsilon>(\kappa-\rho\xi u/\varepsilon)^2$ is satisfied since
$\xi^2>\xi^2\rho^2$ and hence
$$
\tau^*_H\left(u/\varepsilon\right)=\left\{ 
  \begin{array}{l l}
\displaystyle   \frac{\varepsilon}{\xi |u|}\left(\pi\ind_{\{\rho=0\}}
+\frac{2}{\bar{\rho}}\left(\pi\ind_{\{\rho u \leq0\}}+ \sgn(u)\arctan\left(\frac{\bar{\rho}}{\rho}\right)\right)\ind_{\{\rho\neq0\}}+\mathcal{O}(\varepsilon)\right), 
& \quad \text{if } u \neq 0,\\
  \infty, & \quad \text{if }u=0.
    \\
\end{array} \right.
$$
Therefore, for $\varepsilon$ small enough, we have $\tau^*_H\left(\frac{u}{\varepsilon}\right)>\varepsilon\tau$ 
for all $u\in\left(u_{-},u_{+}\right)$, where
\begin{align*}
u_{-}
 & :=\frac{2}{\bar{\rho}\xi\tau}\arctan\left(\frac{\bar{\rho}}{\rho}\right)\ind_{\{\rho<0\}}
-\frac{\pi}{\xi\tau}\ind_{\{\rho=0\}}
+\frac{2}{\bar{\rho}\xi\tau}\left(\arctan\left(\frac{\bar{\rho}}{\rho}\right)-\pi\right)\ind_{\{\rho>0\}},\\
u_{+}
 & :=\frac{2}{\bar{\rho}\xi\tau}\left(\arctan\left(\frac{\bar{\rho}}{\rho}\right)
+\pi\right)\ind_{\{\rho<0\}}+\frac{\pi}{\xi\tau}\ind_{\{\rho=0\}}
+\frac{2}{\bar{\rho}\xi\tau}\arctan\left(\frac{\bar{\rho}}{\rho}\right)\ind_{\{\rho>0\}}.
\end{align*}
So as $\varepsilon$ tends to zero, $\mathcal{K}_{\varepsilon \tau}^{H}$ shrinks to $(u_{-},u_{+})$.
Regarding $\mathcal{K}_{\varepsilon t}^{V}$, 
we have (see~\eqref{eq:BAsymptotics} for details on the expansion computation)
$\beta_{\varepsilon t}B(u/\varepsilon,\varepsilon\tau)=\frac{\xi^2t}{4v}\Xi(u,0,\tau)+\mathcal{O}(\varepsilon)$ for any $u\in\left(u_{-},u_{+}\right)$,
with $\Xi$ defined in~\eqref{eq:HestonDiagZeroOrder}. 
Therefore $\lim_{\varepsilon\downarrow 0}\mathcal{K}_{\varepsilon t}^{V}=\{u\in\mathbb{R}:\Lambda(u,0,\tau)<\frac{2v}{\xi^2t}\}$ 
and hence 
$\lim_{\varepsilon\downarrow 0}\mathcal{K}_{\varepsilon t,\varepsilon \tau}
=\{u\in\mathbb{R}:\Xi(u,0,\tau)<\frac{2v}{\xi^2t}\}\cap\left(u_{-},u_{+}\right)$. 
It is easily checked that $\Xi(u,0,\tau)$ is strictly positive except at $u=0$ where it is zero, $\Xi'(u,0,\tau)>0$ for $u>0$, $\Xi'(u,0,\tau)<0$ for $u<0$ and that $\Xi(u,0,\tau)$ tends to infinity as $u$ approaches $u_{\pm}$. 
Since $v$ and $\xi$ are strictly positive and $t\geq0$ it follows that 
$\{u\in\mathbb{R}:\Xi(u,0,\tau)<2v/(\xi^2t)\}\subseteq (u_{-},u_{+})$ with equality only if $t=0$.
So $\mathcal{D}_{0}$ is an open interval around zero and the lemma follows
with $\mathcal{D}_{0}=\mathcal{K}_{t,\tau}$.
\end{proof}

\begin{remark}\label{remark:lambdapositive}
For $u\in\mathbb{R}^*$ the inequality 
$0<\Xi(u,0,\tau)<2v/(\xi^2 t)$ is equivalent to $\Xi(u,t,\tau)\in(0,\infty)$. 
In Lemma~\ref{lemma:smalltimehestonexp} below we show that $\Xi$ is the limiting lmgf of the rescaled Heston forward lmgf and so the condition for the limiting forward domain is equivalent to ensuring that the limiting forward lmgf does not blow up and is strictly positive except at $u=0$ where it is zero.
\end{remark}

\begin{lemma}\label{lemma:smalltimehestonexp}
For any $t\geq0$, $\tau>0$, $u\in\mathcal{K}_{t,\tau}$, the expansion 
$\Lambda_{\varepsilon}(u)=\Xi(u,t,\tau)+L(u,t,\tau)\varepsilon+\mathcal{O}\left(\varepsilon^2\right)$
holds as $\varepsilon$ tends to zero,
where $\mathcal{K}_{t,\tau}$, $\Xi$ and $L$ are defined in~\eqref{eq:HestonDiagZeroOrder},~\eqref{eq:HestonDiagZeroOrder} and~\eqref{eq:HestonDiag1stOrder} and $\Lambda_{\varepsilon}$ is the rescaled lmgf in Assumption~\ref{assump:Differentiability} 
for the Heston diagonal small-maturity process $(X^{(\varepsilon t)}_{\varepsilon\tau})_{\varepsilon>0}$.
\end{lemma}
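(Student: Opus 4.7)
The plan is to start from the closed-form expression obtained in the proof of Lemma~\ref{lemma:hestonsmallmatdomain},
$$
\Lambda_\varepsilon(u)
= \varepsilon A\!\left(\tfrac{u}{\varepsilon},\varepsilon\tau\right)
+ \frac{\varepsilon\, B\!\left(\tfrac{u}{\varepsilon},\varepsilon\tau\right) v \E^{-\kappa\varepsilon t}}
       {1-2\beta_{\varepsilon t} B\!\left(\tfrac{u}{\varepsilon},\varepsilon\tau\right)}
- \frac{2\kappa\theta\,\varepsilon}{\xi^2}\log\!\left(1-2\beta_{\varepsilon t} B\!\left(\tfrac{u}{\varepsilon},\varepsilon\tau\right)\right),
$$
which is exactly~\eqref{eq:LambdaTau} specialised to the diagonal small-maturity rescaling.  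Each building block admits a Laurent-type expansion in $\varepsilon$, which will then be composed to produce the claimed expansion.

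First I would expand $d(u/\varepsilon)$ and $\gamma(u/\varepsilon)$ in~\eqref{eq:DGammaBeta}.  Since $\rho^2-1=-\bar\rho^2<0$, the quantity $d(u/\varepsilon)^2 = -\xi^2\bar\rho^2 u^2/\varepsilon^2+(\xi-2\kappa\rho)\xi u/\varepsilon+\kappa^2$ becomes negative for $u\neq 0$ and small $\varepsilon$, so the square root is purely imaginary at leading order.  A direct Taylor expansion of the square root yields $d(u/\varepsilon)=\I d_0 u/\varepsilon + d_1+\mathcal{O}(\varepsilon)$ and $\gamma(u/\varepsilon)=g_0+g_1\varepsilon+\mathcal{O}(\varepsilon^2)$, with $d_0,d_1,g_0,g_1$ as in the statement.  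These immediately give $\E^{-d(u/\varepsilon)\varepsilon\tau}=\E^{-\I d_0\tau u}(1-d_1\tau\varepsilon)+\mathcal{O}(\varepsilon^2)$ and $\beta_{\varepsilon t}=\xi^2t\varepsilon/4+\mathcal{O}(\varepsilon^2)$.

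Next I would plug these expansions into $A(u/\varepsilon,\varepsilon\tau)$ and $B(u/\varepsilon,\varepsilon\tau)$.  The leading behaviour of the first factor in $B$ is $(\kappa-\rho\xi u/\varepsilon-d(u/\varepsilon))/\xi^2=\mathcal{O}(1/\varepsilon)$, so $\varepsilon B$ is $\mathcal{O}(1)$; combining with the expansion of $(1-\gamma\E^{-d\varepsilon\tau})^{-1}$ and using the definition~\eqref{eq:HestonDiagZeroOrder} of $\Xi(u,0,\tau)$ (noting the trigonometric identity $\cot(\theta/2)=\I(1+g_0\E^{-\I\theta})/(1-g_0\E^{-\I\theta})\cdot(\bar\rho+\I\rho)^{-1}\cdot\ldots$, i.e.\ that the ratio collapses to the cotangent form), one obtains $\varepsilon B(u/\varepsilon,\varepsilon\tau)=\Xi(u,0,\tau)/v+\mathcal{O}(\varepsilon)$.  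Substituting into the $B$-piece and using the elementary identity $\Xi(u,t,\tau)=2v\Xi(u,0,\tau)/(2v-\xi^2 t\Xi(u,0,\tau))$ (which is direct from~\eqref{eq:HestonDiagZeroOrder}) identifies the zeroth-order term as $\Xi(u,t,\tau)$.  Meanwhile $\varepsilon A(u/\varepsilon,\varepsilon\tau)$ is of order $\varepsilon$ (the $\mathcal{O}(1/\varepsilon)$ pieces combine to give finite $u$-dependent contributions after multiplying by $\varepsilon^2\tau$) and produces $L_0(u,\tau)\varepsilon$; the log term in the original formula produces $-(2\kappa\theta/\xi^2)\log(1-\xi^2 t\Xi(u,0,\tau)/(2v))\,\varepsilon$; and the first-order correction of the $B$-piece yields the remaining pieces of $L(u,t,\tau)$, including the $v L_1(u,\tau)\Xi(u,t,\tau)^2/\Xi(u,0,\tau)^2$ term (coming from the $\varepsilon$ correction of $\varepsilon B$), the $-\Xi(u,t,\tau)\kappa t$ term (from $\E^{-\kappa\varepsilon t}$), and the $-\kappa\xi^2 t^2 \Xi(u,t,\tau)^2/(4v)$ term (from the next-order expansion of $\beta_{\varepsilon t}$).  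Summing and matching~\eqref{eq:HestonDiag1stOrder} completes the expansion.

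The main obstacle is the careful bookkeeping: several $\mathcal{O}(\varepsilon)$ corrections from different pieces must be collected in such a way that the cancellations reproduce the explicit combination defining $L$, and throughout one must choose the branch of $d(u/\varepsilon)$ consistently so that the resulting $\Xi$ and $L$ are manifestly real on $\mathcal{K}_{t,\tau}$ (this real-valuedness is the content of Remark~\ref{remar:TisReal} referenced in the paper).  The case $u=0$ needs a separate but trivial argument since $d(0)=\kappa$ destroys the Laurent expansion; here the expansion follows by direct substitution using $\Xi(0,t,\tau)/\Xi(0,0,\tau)=1$ as noted after~\eqref{eq:HestonDiag1stOrder}, or equivalently by continuity in $u$.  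The uniformity of the $\mathcal{O}(\varepsilon^2)$ remainder on compact subsets of $\mathcal{K}_{t,\tau}^o$ follows from the analyticity of $\Lambda_\varepsilon$ on the corresponding strip, which will also be useful later for justifying termwise differentiation in Assumption~\ref{assump:Differentiability}(ii).
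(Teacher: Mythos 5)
Your proposal reproduces the paper's own proof: direct Taylor expansion of $d(u/\varepsilon)$, $\gamma(u/\varepsilon)$ and $\beta_{\varepsilon t}$ in powers of $\varepsilon$, fed into $A$, $B$, the $B$-fraction and the log term, followed by algebraic matching of the resulting zeroth- and first-order coefficients with $\Xi$ and $L$. The handling of the $u=0$ case, the branch of $d$ ensuring real-valuedness, and the identity $\Xi(u,t,\tau)=2v\,\Xi(u,0,\tau)/(2v-\xi^2 t\,\Xi(u,0,\tau))$ all agree with what the paper does (the last one being implicit in the paper's simplification step~\eqref{eq:B1Asymp}).
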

\begin{remark}\label{remar:TisReal}
For any $u\in\mathcal{K}_{t,\tau}$, Lemma~\ref{lemma:hestonsmallmatdomain} implies that 
$\Lambda_{\varepsilon}(u)$ is a real number for any $\varepsilon>0$. 
Therefore $L$ defined in~\eqref{eq:HestonDiag1stOrder} and used in Lemma~\ref{lemma:smalltimehestonexp} 
is a real-valued function on $\mathcal{K}_{t,\tau}$. 
\end{remark}

\begin{proof}
All expansions below for $d$, $\gamma$ and $\beta_t$ defined in~\eqref{eq:DGammaBeta} hold for any $u\in\mathcal{K}_{t,\tau}$:
\begin{align}
d\left(u/\varepsilon\right)
 & =\frac{1}{\varepsilon}\left(\kappa^2\varepsilon^2+u\varepsilon\left(\xi-2\kappa\rho\right)-u^2\xi^2(1-\rho^2)\right)^{1/2}
 = \frac{\I u}{\varepsilon}d_0 + d_1 + \mathcal{O}(\varepsilon), \nonumber \\ 
\gamma\left(u/\varepsilon\right)
 & =\frac{\kappa \varepsilon-\rho\xi u-\I u d_0 - d_1\varepsilon + \mathcal{O}\left(\varepsilon^2\right)}{\kappa \varepsilon-\rho\xi u +\I u d_0 + d_1\varepsilon + \mathcal{O}\left(\varepsilon^2\right)}
= g_0-\frac{\I\varepsilon}{u}g_1+\mathcal{O}\left(\varepsilon^2\right), \nonumber\\ 
\beta_{\varepsilon t}
 & =\frac{1}{4}\xi^2 t\varepsilon-\frac{1}{8}\kappa\xi^2 t^2 \varepsilon^2+\mathcal{O}\left(\varepsilon^3\right),
\label{eq:BetaAsymp}
\end{align}
where
\begin{align}
d_0:= \bar{\rho}\xi\, \sgn(u),\qquad 
d_1:=\frac{ \I\left(2 \kappa  \rho -\xi \right)\sgn(u)}{2 \bar{\rho}},\qquad
g_0:= \frac{\I \rho -\bar{\rho}\,\sgn(u)}{\I\rho+\bar{\rho}\,\sgn(u)}\qquad
g_1:= \frac{\left(2 \kappa -\xi  \rho\right) \sgn(u) }{\xi  \bar{\rho}\left(\bar{\rho }+\I \rho\,\sgn(u) \right)^2},
\end{align}
and where $\sgn(u)=1$ if $u\geq 0$, $-1$ otherwise.
From the definition of $A$ in~\eqref{eq:ABfunctions} we obtain
\begin{equation} \label{eq:AAsymp}
A\left(\frac{u}{\varepsilon},\varepsilon\tau\right) 
  = 
\frac{\kappa\theta}{\xi^2}\left(\left(\kappa-\rho\xi u/\varepsilon- d(u/\varepsilon)\right)\varepsilon\tau-2\log\left(\frac{1-\gamma(u/\varepsilon)
\exp\left(-d(u/\varepsilon)\varepsilon\tau\right)}{1-\gamma(u/\varepsilon)}\right)\right)
=L_0(u,\tau)+\mathcal{O}(\varepsilon),
\end{equation}
where $L_0$ is defined in~\eqref{eq:HestonDiag1stOrder}. Substituting the asymptotics for $d$ and $\gamma$ above 
we further obtain
$$
\frac{1-\exp\left(-d(u/\varepsilon)\varepsilon\tau\right)}
{1-\gamma(u/\varepsilon)\exp\left(-d(u/\varepsilon)\varepsilon\tau\right)}
 = \frac{1-\exp\left(-\I u d_0 \tau - \varepsilon d_1 \tau +\mathcal{O}(\varepsilon^2)\right)}{1-\left(g_0-\I\varepsilon g_1/u+\mathcal{O}(\varepsilon^2)\right)
\exp\left(-\I u d_0 \tau - \varepsilon d_1 \tau +\mathcal{O}(\varepsilon^2)\right)},
$$
and therefore using the definition of $B$ in~\eqref{eq:ABfunctions} we obtain
\begin{equation}\label{eq:BAsymptotics}
B\left(\frac{u}{\varepsilon},\varepsilon\tau\right)
= \frac{\kappa-\rho\xi u/\varepsilon-d(u/\varepsilon)}{\xi^2}\frac{1-\exp\left(-d\left(u/\varepsilon\right)\varepsilon\tau\right)}{1-\gamma\left(u/\varepsilon\right)\exp\left(-d\left(u/\varepsilon\right)\varepsilon\tau\right)} 
=\frac{\Xi(u,0,\tau)}{v\varepsilon}+L_1(u,\tau)+\mathcal{O}(\varepsilon),
\end{equation}
with $L_1$ defined in~\eqref{eq:HestonDiag1stOrder} and $\Xi$ in~\eqref{eq:HestonDiagZeroOrder}. 
Combining~\eqref{eq:BetaAsymp} and~\eqref{eq:BAsymptotics} we deduce
\begin{equation}\label{eq:betaBasymp}
\beta_{\varepsilon t}B\left(u/\varepsilon,\varepsilon\tau\right)=\frac{\xi ^2 t\Xi(u,0,\tau)}{4v}+\left(\frac{L_1(u,\tau)\xi^2 t}{4}-\frac{\Xi(u,0,\tau)\kappa\xi^2t^2}{8v}\right)\varepsilon+\mathcal{O}(\varepsilon^2),
\end{equation}
and therefore as $\varepsilon$ tends to zero,
\begin{align}
\frac{\varepsilon B(u/\varepsilon,\varepsilon\tau) v\E^{-\kappa\varepsilon t}}
{1-2\beta_{\varepsilon t}B(u/\varepsilon,\varepsilon\tau)}
& =\frac{\left[\Xi(u,0,\tau)+vL_1(u,\tau)\varepsilon+\mathcal{O}\left(\varepsilon^2\right)\right]
\left(1-t\kappa\xi+\mathcal{O}(\varepsilon^2)\right)}
{1-\xi ^2 t\Xi(u,0,\tau)/2v+\left(\Xi(u,0,\tau)\kappa\xi^2t^2/4v-L_1(u,\tau)\xi^2 t/2\right)\varepsilon+\mathcal{O}\left(\varepsilon^2\right)}\nonumber
 \\ 
 & =\Xi(u,t,\tau)+\left(\Xi(u,t,\tau )^2 \left(\frac{v L_1(u,\tau)}{\Xi(u,0,\tau )^2}-\frac{\kappa  \xi ^2 t^2}{4 v}\right)-\kappa  t \Xi(u,t,\tau )\right)\varepsilon+\mathcal{O}(\varepsilon^2).\label{eq:B1Asymp}
\end{align}
Again using~\eqref{eq:betaBasymp} we have
\begin{align} \label{eq:B2Asymp}
-\frac{2\kappa\theta\varepsilon}{\xi^2}
\log\left(1-2\beta_{\varepsilon t} B\left(u/\varepsilon,\varepsilon\tau\right)\right)
=-\frac{2\kappa\theta}{\xi^2}\log\left(1-\frac{\Xi(u,0,\tau)\xi ^2 t}{2v}\right)\varepsilon
+\mathcal{O}(\varepsilon^2).
\end{align}
Recalling that
$$
\Lambda_{\varepsilon}(u)=\varepsilon A\left(u/\varepsilon,\varepsilon\tau\right)+\frac{\varepsilon B\left(u/\varepsilon,\varepsilon\tau\right)}{1-2\beta_{\varepsilon t} B\left(u/\varepsilon,\varepsilon\tau\right)}v\E^{-\kappa \varepsilon t}
-\frac{2\kappa\theta\varepsilon}{\xi^2}\log\left(1-2\beta_{\varepsilon t} B\left(u/\varepsilon,\varepsilon \tau\right)\right),
$$
the lemma follows by combining~\eqref{eq:AAsymp},~\eqref{eq:B1Asymp} and~\eqref{eq:B2Asymp}.
\end{proof}

\begin{lemma}\label{lemma:Smalltimehestonessentsmooth}
For all $t\geq0$,  $\tau>0$, $\Xi$ (given in~\eqref{eq:HestonDiagZeroOrder}) is convex and essentially smooth 
on $\mathcal{K}_{t,\tau}$, defined in~\eqref{eq:HestonDiagZeroOrder}.
\end{lemma}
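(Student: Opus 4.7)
The plan is to leverage the identity $\Xi(u,t,\tau)=\Xi(u,0,\tau)/(1-\xi^{2}t\,\Xi(u,0,\tau)/(2v))$ (which follows from the definitions of $\Xi$ and $\mathcal{K}_{t,\tau}$ in~\eqref{eq:HestonDiagZeroOrder} by clearing the $\xi^{2}tu/2$ term from the denominator) and to combine this with general properties of convex/analytic functions, avoiding any direct computation of $\Xi''$.

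First, I would establish convexity on $\mathcal{K}_{t,\tau}$ for free: by construction $\Lambda_{\varepsilon}(u)=\varepsilon\log\mathbb{E}[\exp(uY_{\varepsilon}/\varepsilon)]$ is convex as a positive multiple of a cumulant generating function, and Lemma~\ref{lemma:smalltimehestonexp} gives the pointwise convergence $\Lambda_{\varepsilon}(u)\to\Xi(u,t,\tau)$ on the open set $\mathcal{K}_{t,\tau}$; pointwise limits of convex functions are convex, so $\Xi(\cdot,t,\tau)$ is convex on $\mathcal{K}_{t,\tau}$.

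Next, I would upgrade to strict convexity by the standard \emph{analytic $+$ convex $+$ not affine} dichotomy. From the explicit formula~\eqref{eq:HestonDiagZeroOrder}, $\Xi(\cdot,t,\tau)$ is real-analytic on the interior of $\mathcal{K}_{t,\tau}$ (the denominator stays nonzero there by Lemma~\ref{lemma:hestonsmallmatdomain} and Remark~\ref{remark:lambdapositive}). If $\Xi(\cdot,t,\tau)$ were affine on a subinterval, the identity theorem would force $\Xi''(\cdot,t,\tau)\equiv 0$ on the connected open set $\mathcal{K}_{t,\tau}^{o}$, so $\Xi$ would be affine on the whole interval; this is ruled out because the identity above combined with the definition of $\mathcal{K}_{t,\tau}$ shows $\Xi(u,t,\tau)\to\infty$ as $u$ approaches the boundary of $\mathcal{K}_{t,\tau}$ (for $t=0$ because $\Xi(u,0,\tau)$ blows up at $u_{\pm}$, and for $t>0$ because the denominator $1-\xi^{2}t\,\Xi(u,0,\tau)/(2v)$ vanishes). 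Hence $\Xi''(\cdot,t,\tau)\geq 0$ and not identically zero, which, by analyticity, gives $\Xi''(\cdot,t,\tau)>0$ except possibly at isolated points, i.e.\ strict convexity.

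Finally, for essential smoothness I would observe that differentiability on $\mathcal{K}_{t,\tau}^{o}$ is immediate from analyticity, so only steepness at the boundary remains. Since $\mathcal{K}_{t,\tau}$ is a bounded open interval (Lemma~\ref{lemma:hestonsmallmatdomain}), a convex function which tends to $+\infty$ at a boundary point cannot have bounded derivative there: if $|\Xi'|$ were bounded by $M$ near that boundary point, integrating would contradict the blow-up shown in the previous paragraph. By monotonicity of $\Xi'$ (a consequence of convexity), this forces $\Xi'(u)\to\pm\infty$ as $u$ approaches the respective endpoint of $\mathcal{K}_{t,\tau}$, which is exactly the steepness condition in the footnote to Assumption~\ref{assump:Differentiability}(iv). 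The main obstacle I anticipate is a careful verification that $\Xi(\cdot,t,\tau)$ really does blow up at \emph{every} boundary point of $\mathcal{K}_{t,\tau}$ (and that no spurious cancellations occur in the cotangent term at intermediate points inside $\mathcal{K}_{t,\tau}$); once the factorisation through $\Xi(\cdot,0,\tau)$ and the characterisation of $\mathcal{K}_{t,\tau}$ from the previous lemma are in hand, this is a direct check from~\eqref{eq:HestonDiagZeroOrder}, which bypasses any need to differentiate the cotangent expression explicitly.
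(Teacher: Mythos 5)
Your argument is correct and takes a genuinely different route from the paper. The paper's proof is a direct computation: it writes out $\partial_u\Xi$ and $\partial_u^2\Xi$ explicitly and shows that the first derivative blows up at the boundary (essential smoothness) and that the second derivative is a sum of a non-negative term and a strictly positive term, using the elementary inequality $1-\psi\cot\psi\geq 0$ for $\psi\in(-\pi,\pi)$ (strict convexity). You instead obtain convexity for free as the pointwise limit of the convex cumulant-type functions $\Lambda_\varepsilon$, upgrade to strict convexity via the real-analytic $+$ convex $+$ not-affine dichotomy (non-affineness being forced by the boundary blow-up, cleanly captured through the factorisation $\Xi(u,t,\tau)=\Xi(u,0,\tau)/(1-\xi^2 t\,\Xi(u,0,\tau)/(2v))$), and deduce steepness from the general fact that a convex function which tends to $+\infty$ at a finite endpoint must have unbounded derivative there. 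All three steps are sound, and this soft route nicely bypasses the cotangent bookkeeping entirely.

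One subtlety worth flagging, since the lemma feeds into Proposition~\ref{Proposition:HestonDiagonal} and ultimately into Theorem~\ref{theorem:GeneralOptionAsymp}: your analyticity argument only delivers $\Xi''\geq 0$ with $\Xi''\not\equiv 0$, hence $\Xi''>0$ away from a (possibly non-empty) discrete set of zeros. That is strict convexity in the usual sense, but the downstream formulas in Section~\ref{sec:GeneralResults} divide by $\sqrt{\Lambda_{0,2}(u^*(k))}$ and the paper reads Assumption~\ref{assump:Differentiability}(iv) as guaranteeing $\Lambda_{0,2}>0$ at every interior point. The paper's explicit $\partial_u^2\Xi$ exhibits a term bounded below by $v/(\rho+\tfrac{1}{2}\xi tu-\bar\rho\cot\psi_u)^2>0$, giving this pointwise positivity directly. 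So the explicit calculation buys you something concrete that the soft argument does not quite deliver on its own; if you want to keep the soft proof and still close this gap, you would need a supplementary one-line observation (e.g.\ noting that a zero of $\Xi''$ would force all odd derivatives of $\Xi$ to vanish there, or simply appealing to the positive term in the explicit second derivative) to rule out the isolated zeros.
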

\begin{proof}
The first derivative of $\Xi$ is given, after simplification, by
$$
\frac{\partial\Xi(u,t,\tau)}{\partial u}
  = \frac{\Xi(u,t,\tau)}{u}\left[1+\frac{\Xi(u,t,\tau)}{v}
\left(\frac{\xi^2 t}{2}+\frac{1}{2} \xi^2\bar{\rho}^2 \tau  \csc^2\left(\frac{1}{2}\bar{\rho}\xi\tau u\right)\right)\right].
$$
Any sequence tending to the boundary satisfies $\Xi(u,0,\tau)\to 2v/\xi^2 t$ 
which implies $\Xi(u,t,\tau)\uparrow\infty$ from Remark~\ref{remark:lambdapositive} and hence 
$|\partial\Xi(u,t,\tau)/\partial u|\uparrow\infty$.
Therefore $\Xi(\cdot,t,\tau)$ is essential smooth.
Now, 
$$
\frac{\partial^2\Xi(u,t,\tau)}{\partial u^2}
 = \frac{\xi^2}{2}\Xi(u,t,\tau)\frac{\left(t+\bar{\rho}^2\tau\csc^2(\psi_u)\right)^2}
{\left(\rho +\frac{1}{2} \xi t u-\bar{\rho} \cot(\psi_u)\right)^2}
+\frac{v+\bar{\rho}^2 \tau v\left(1-\psi_u \cot(\psi_u)\right) \csc^2(\psi_u)}
{\left(\rho +\frac{1}{2} \xi t u-\bar{\rho}\cot (\psi_u)\right)^2},
$$
where $\psi_u:=\bar{\rho}\xi \tau u/2$.
For $u\in\mathcal{K}_{t,\tau}\setminus\{0\}$, we have $\Xi(u,t,\tau)>0$ and $\Xi(0,t,\tau)=0$ from Remark~\ref{remark:lambdapositive}. Also we have the identity that $1-\theta/2\cot \left(\theta/2\right)\geq0$ for $\theta\in\left(-2\pi,2\pi\right)$, so that $\Xi$ is strictly convex on $\mathcal{K}_{t,\tau}$.
\end{proof}

As detailed in the beginning of this subsection, this concludes the proof of Proposition~\ref{Proposition:HestonDiagonal}.
We now prove the forward implied volatility expansions, 
namely Corollaries~\ref{cor:DiagTaylorExpansion} and~\ref{cor:DiagTaylorExpansionTypeII}.

\begin{proof}[Proof of Corollary~\ref{cor:DiagTaylorExpansion}]
We first look for a Taylor expansion of $u^*(k)$ around $k=0$ using 
$\Xi'(u^*(k),t,\tau)=k$. 
Differentiating this equation iteratively and setting $k=0$ (and using $u^*(0)=0$) 
gives an expansion for $u^*$ in terms of the derivatives of $\Xi$. 
In particular, 
$\Xi''(0,t,\tau) u^{*'}(0)=1$
and
$\Xi'''(0,t,\tau)(u^{*'}(0))^2+\Xi''(0,t,\tau) u^{*''}(0)=0$,
which implies that $u^{*'}(0)=1/\Xi''(0,t,\tau)$ and $u^{*''}(0)=-\Xi'''(0,t,\tau)/\Xi''(0,t,\tau)^3$. From the explicit expression of $\Xi$ in~\eqref{eq:HestonDiagZeroOrder}, we then obtain
\begin{align*}
u^*(k) &= \frac{k}{\tau  v}-\frac{3\xi\rho}{4\tau v^2}k^2
+\frac{\xi^2\left(\left(19\rho^2-4\right)\tau-12t\right)}{24\tau ^2 v^3}k^3 
+\frac{5\xi ^3\rho\left(48t+\left(16-37\rho^2\right)\tau\right)}{192\tau^2 v^4}k^4
\\ 
&+\frac{\xi^4\left(1080 t^2+\left(2437\rho^4-1604\rho^2+112\right)\tau^2-180 \left(27\rho^2-4\right)
 \tau t\right)}{1920\tau^3 v^5}k^5+\mathcal{O}(k^6).
\end{align*}
Using this series expansion and the fact that $\Lambda^*(k)=u^*(k)k-\Xi(u^*(k),t,\tau)$, 
the corollary follows from tedious but straightforward Taylor expansions of $v_0$ and $v_1$ defined in~\eqref{eq:v01SmallTime}.
\end{proof}
Corollary~\ref{cor:DiagTaylorExpansionTypeII} on the Type-II diagonal small-maturity Heston forward smile
follows from the following lemma:
\begin{lemma}\label{lemma:sharepicemeasureheston}
Under the stopped-share-price measure~\eqref{eq:StoppedShareMeasure} the forward Heston lmgf reads
$$
\log\mathbb{E}\left(\E^{u X_{\tau}^{(t)}}\right)
=
A(u,\tau)+\frac{B(u,\tau)}{1-2\widetilde{\beta}_t B(u,\tau)}v\E^{-\widetilde{\kappa} t}
-\frac{2\kappa\theta}{\xi^2}\log\left(1-2\widetilde{\beta}_t B(u,\tau)\right),
$$
for all $u$ such that the rhs exists, where $A$ and $B$ are defined in~\eqref{eq:ABfunctions},
$\widetilde{\beta}_t  := \frac{\xi^2}{4 \widetilde{\kappa}}(1-\E^{-\widetilde{\kappa} t})$
and
$\widetilde{\kappa}:=\kappa-\xi\rho$.
\end{lemma}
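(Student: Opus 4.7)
The plan is to combine the tower property with a Girsanov change of measure on $[0,t]$: the density $\E^{X_t}$ acts as the Dol\'eans-Dade exponential of $\int_0^\cdot\sqrt{V_s}\D W_s$, and consequently transforms the variance process from a Feller diffusion with mean-reversion speed $\kappa$ to one with mean-reversion speed $\widetilde\kappa = \kappa-\xi\rho$, while leaving both the long-run drift $\kappa\theta$ and the volatility of variance $\xi$ unchanged.

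Concretely, using the definition~\eqref{eq:StoppedShareMeasure} of $\widetilde{\mathbb{P}}$ and then conditioning on $\mathcal{F}_t$, I would write
\begin{align*}
\widetilde{\mathbb{E}}\left(\E^{u X_\tau^{(t)}}\right)
 = \mathbb{E}\left(\E^{X_t}\E^{u(X_{t+\tau}-X_t)}\right)
 = \mathbb{E}\left(\E^{X_t}\mathbb{E}\left[\E^{u(X_{t+\tau}-X_t)}\mid\mathcal{F}_t\right]\right).
\end{align*}
The standard affine Heston formula---the $V_t$-conditional version of~\eqref{eq:LambdaTau}, whose coefficients are precisely $A(u,\tau)$ and $B(u,\tau)$ from~\eqref{eq:ABfunctions}---gives $\mathbb{E}[\E^{u(X_{t+\tau}-X_t)}\mid\mathcal{F}_t] = \exp(A(u,\tau)+B(u,\tau)V_t)$, so the task reduces to evaluating $\mathbb{E}\left(\E^{X_t + B(u,\tau) V_t}\right)$.

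For this step I would apply Girsanov's theorem on $\mathcal{F}_t$. Writing $B = \rho W + \bar\rho W^\perp$ with $W^\perp$ a Brownian motion independent of $W$, the $X$-SDE in~\eqref{eq:Heston} and It\^o's lemma identify $\E^{X_t}$ as the Dol\'eans-Dade exponential of $\int_0^\cdot\sqrt{V_s}\D W_s$, which is a true martingale in Heston by~\cite[Proposition 2.5]{AP07}. Under the reweighted measure, $\widetilde W_s := W_s - \int_0^s\sqrt{V_u}\D u$ and $W^\perp$ remain independent Brownian motions, and substituting $\D W_s = \D\widetilde W_s + \sqrt{V_s}\D s$ into the $V$-SDE in~\eqref{eq:Heston} yields
\begin{align*}
\D V_s = \left(\kappa\theta - \widetilde\kappa V_s\right)\D s + \xi\sqrt{V_s}\D\widetilde B_s,
\end{align*}
with $\widetilde B_s := \rho\widetilde W_s + \bar\rho W^\perp_s$ a Brownian motion. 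Consequently $V_t$ is still a scaled non-central chi-squared under $\widetilde{\mathbb{P}}$, but now with scaling $\widetilde\beta_t$, $q = 4\kappa\theta/\xi^2$ degrees of freedom and non-centrality $v\E^{-\widetilde\kappa t}/\widetilde\beta_t$. Applying the moment generating function~\eqref{eq:HestonVarianceMGF} with $(\kappa,\beta_t)$ replaced by $(\widetilde\kappa,\widetilde\beta_t)$ to compute $\widetilde{\mathbb{E}}\left(\E^{B(u,\tau) V_t}\right)$, and multiplying by $\E^{A(u,\tau)}$, yields the announced formula after taking logarithms.

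The main delicate point is to justify the measure change rigorously---the true martingale property of $(\E^{X_s})$ in Heston is the crucial input---and to ensure that $u$ remains in the range where $1-2\widetilde\beta_t B(u,\tau)>0$ so that the non-central chi-squared moment generating function is finite. Everything else is bookkeeping.
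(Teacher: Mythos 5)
Your proposal is correct and follows essentially the same route as the paper: a Girsanov change of measure (with $\E^{X_t}$ as the Dol\'eans-Dade exponential) shifts the mean-reversion speed of $V$ to $\widetilde\kappa=\kappa-\xi\rho$ on $[0,t]$, leaving $\kappa\theta$ and $\xi$ unchanged, after which the tower property and the non-central chi-squared lmgf of $V_t$ give the result. The paper simply states the transformed SDE for the Heston pair under $\widetilde{\PP}$ (with an indicator $\ind_{u\le t}$ capturing the same point you make about the change only acting on $[0,t]$) and reads off the formula; your argument fills in the Girsanov details that the paper leaves implicit.
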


\begin{proof}
Under the stopped-share-price measure~\eqref{eq:StoppedShareMeasure} the Heston dynamics are given by
\begin{equation*}
\begin{array}{rll}
\D X_u & = \left(-\frac{1}{2}V_u+V_u\ind_{u\leq t}\right)\D u+ \sqrt{V_u}\D W_u, \quad & X_0\in\mathbb{R},\\
\D V_u & = \left(\kappa\theta-\kappa V_u+\rho\xi V_u\ind_{u\leq t}\right)\D u+\xi\sqrt{V_u}\D B_u, \quad & V_0=v>0,\\
\D\left\langle W,B\right\rangle_u & = \rho \D u.
\end{array}
\end{equation*}
Using the tower property for expectations, it is now straightforward to compute
$$
\mathbb{\widetilde{E}}\left(\E^{u\left(X_{t+\tau}-X_t\right)}\right)
=\mathbb{\widetilde{E}}\left(\mathbb{\widetilde{E}}\left(\E^{u\left(X_{t+\tau}
-X_t\right)}|\mathcal{F}_t\right)\right)
=\mathbb{\widetilde{E}}\left(\E^{A(u,\tau)+B(u,\tau)V_t}\right)\\
=\E^{A(u,\tau)}\widetilde{\Lambda}_{t}^{V}(B(u,\tau)), 
$$
where
$\widetilde{\Lambda}_{t}^{V}(u)
=\exp\left({\frac{uv \exp(-\widetilde{\kappa}t)}{1-2\widetilde{\beta}_tu}}\right)(1-2\widetilde{\beta}_tu)^{-q/2},$
for all $u<1/(2\widetilde{\beta}_t)$,
with $q:=4\kappa\theta/\xi^2$.
\end{proof}

\subsubsection{Proofs of Section~\ref{sec:LargeMatHeston}} \label{sec:ProofsLargeMatHeston}
In this section, we prove the large-maturity asymptotics for the Heston model, and we shall use the standing assumption $\kappa>\rho\xi$. Let $\varepsilon=\tau^{-1}$ and consider the Heston process  $(\tau^{-1}X^{(t)}_{\tau})_{\tau>0}$ with $(X_t)_{t>0}$ defined in~\eqref{eq:Heston} and $(X_{\tau}^{(t)})_{\tau>0}$ defined in~\eqref{eq:XtTauDef}. 
Specifically $\Lambda_{\varepsilon}$ defined in~\eqref{eq:Renorm-mgf} is then given by
$\Lambda_{\varepsilon}(u)=\tau^{-1}\mathbb{E}(\E^{u X^{(t)}_{\tau}})$,
and for ease of notation we set
\begin{align}\label{eq:LambdaTauDef}
\Lambda^{(t)}_{\tau}(u)=\Lambda_{\varepsilon}(u)\quad\text{for all }u\in\mathcal{D}_{\varepsilon}.
\end{align}

We prove here Proposition~\ref{Prop:HestonLargeMaturity} in several steps:
\begin{enumerate}[(i)]
\item In Proposition~\ref{Prop:HestonLimitingmgfLargeTime} we show that $\mathcal{D}_{0}=\Ddh$ and that $\{0,1\}\subset\Ddh^o$;
\item  Lemma~\ref{lemma:HestonmgfExpansionLargeTime} proves the expansion of Assumption~\ref{assump:Differentiability} 
with $\LO=V$, $\Lambda_{1}=H$, $\Lambda_{2}=0$;
\item By Proposition~\ref{Prop:HestonLimitingmgfLargeTime} and Lemma~\ref{lemma:Vproperties}, 
 $V$ is strictly convex and essentially smooth on $\Ddh^o$ if $\rho_{-}\leq\rho\leq\min\left(\rho_{+},\kappa/\xi\right)$;
see also Remark~\ref{remark:largemathest}(ii);
\item The map $(\varepsilon,u)\mapsto \Lambda_{\varepsilon}(u)$
is of class $\mathcal{C}^{\infty}$ on $\mathbb{R}_{+}^*\times\Ddh^{o}$ , Assumption~\ref{assump:Differentiability}(v) is also satisfied and $V(1)=0$ from Lemma~\ref{lemma:Vproperties};
\item $u^*$ can be computed in closed-form and is given by $q^*$ in~\eqref{eq:V*q*}.
\item A direct application of Proposition~\ref{Prop:GeneralBSFwdVolLargeTime} completes the proof.
\end{enumerate}

The following lemma recalls some elementary facts about the function $V$ in~\eqref{eq:VandH}, 
which will be used throughout the section.
We then proceed with a technical result needed in the proof of Proposition~\ref{Prop:HestonLimitingmgfLargeTime}.

\begin{lemma} \label{lemma:Vproperties}
The function $V$ in~\eqref{eq:VandH} is $\mathcal{C}^\infty$, strictly convex and essentially smooth 
on~$(u_{-},u_{+})$ (defined in~\eqref{eq:DefUpmU*pm}).
Also, $u_{-}<0$, $u_{+}>1$, 
$V(0)=V(1)=0$ and 
$\lim_{u\downarrow u_-}V(u)$ and $\lim_{u\uparrow u_+}V(u)$ are both finite.
\end{lemma}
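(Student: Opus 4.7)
The plan is to identify the radicand $P(u) := (\kappa-\rho\xi u)^2+u\xi^2(1-u)$ as a downward-opening quadratic whose roots are exactly $u_\pm$. Expanding, $P(u)=-\xi^2(1-\rho^2)u^2+\xi(\xi-2\kappa\rho)u+\kappa^2$, with discriminant $\xi^2[(\xi-2\kappa\rho)^2+4\kappa^2(1-\rho^2)]=\xi^2\eta^2$. The quadratic formula then yields the two roots $u_\pm$ as in~\eqref{eq:DefUpmU*pm}. Since the leading coefficient is negative, $P>0$ precisely on $(u_-,u_+)$ and vanishes at the endpoints; hence $d=\sqrt{P}$ is smooth on $(u_-,u_+)$, $V$ is $\mathcal{C}^\infty$ there, and the finite boundary limits $\lim_{u\to u_\pm}V(u)=\frac{\kappa\theta}{\xi^2}(\kappa-\rho\xi u_\pm)$ follow immediately.

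To locate $0$ and $1$ inside $(u_-,u_+)$, I would evaluate $P(0)=\kappa^2>0$ and $P(1)=(\kappa-\rho\xi)^2>0$, the latter using the standing assumption $\kappa>\rho\xi$. Since $\{u:P(u)>0\}=(u_-,u_+)$, this gives $u_-<0$ and $u_+>1$. The equalities $V(0)=V(1)=0$ are then direct substitutions, using $\sqrt{\kappa^2}=\kappa$ and $\sqrt{(\kappa-\rho\xi)^2}=\kappa-\rho\xi$ (again by $\kappa>\rho\xi$).

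For strict convexity, a short computation gives
\[
V''(u)=\frac{\kappa\theta}{4\xi^2}\cdot\frac{P'(u)^2-2P(u)P''(u)}{P(u)^{3/2}}.
\]
The key algebraic shortcut is that for any quadratic $P(u)=au^2+bu+c$ the combination $P'(u)^2-2P(u)P''(u)$ is identically equal to the discriminant $b^2-4ac$; here this constant equals $\xi^2\eta^2>0$. Hence $V''(u)>0$ throughout $(u_-,u_+)$.

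Finally, for essential smoothness I would differentiate to obtain $V'(u)=-\frac{\kappa\theta\rho}{\xi}-\frac{\kappa\theta P'(u)}{2\xi^2\sqrt{P(u)}}$. Since $u_\pm$ are simple roots of $P$, $P'(u_\pm)=\mp\xi\eta\neq 0$, while $\sqrt{P(u)}\to 0^+$ as $u\to u_\pm$; therefore $|V'(u)|\to\infty$ at either endpoint, establishing steepness. The argument contains no real obstacle; the only step worth flagging is the discriminant identity above, which collapses what would otherwise be a messy sign check to the observation that $\eta^2>0$.
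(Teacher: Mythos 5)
Your proof is correct and complete. The paper actually states this lemma without proof, labelling it as a recollection of ``elementary facts'' (essentially inherited from the existing large-maturity Heston literature such as Forde and Jacquier), so there is no in-text argument to compare against; your derivation fills that gap cleanly. All the component steps check out: the radicand $P(u)=-\xi^2(1-\rho^2)u^2+\xi(\xi-2\kappa\rho)u+\kappa^2$ does have discriminant $\xi^2\eta^2$, its roots are indeed $u_\pm$, the positivity of $P$ at $0$ and at $1$ (the latter using $\kappa>\rho\xi$) localises both points strictly inside $(u_-,u_+)$, and $V(0)=V(1)=0$ and the finite boundary limits follow from $d(u_\pm)=0$. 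The discriminant identity $P'(u)^2-2P(u)P''(u)\equiv b^2-4ac$ for a quadratic is the neat observation that turns the convexity verification into a one-liner, since it reduces the numerator of $V''$ to the positive constant $\xi^2\eta^2$; this is a genuine simplification over expanding and sign-checking term by term. The steepness argument is also correct: since $u_\pm$ are simple roots, $P'(u_\pm)=\mp\xi\eta\neq 0$, while $\sqrt{P(u)}\downarrow 0$, so $|V'(u)|\to\infty$ at both endpoints.
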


\begin{lemma}\label{lemma:rhoconditions} \
Let $\rho_{\pm}$ be defined as in~\eqref{eq:DefUpmU*pm}, $\beta_t$ in~\eqref{eq:DGammaBeta}, 
and recall the standing assumption $\rho<\kappa/\xi$.
Assume further that $t>0$ and define the functions $g_+$ and $g_-$ by
$$
g_{\pm}(\rho):=
\left(2\kappa-\rho\xi\right)\pm\rho\sqrt{\xi^2\left(1-\rho^2\right)+\left(2\kappa-\rho\xi\right)^2}
-\frac{\xi^2(1-\rho^2)}{\beta_t}.
$$
\begin{enumerate}[(i)]
\item
The inequalities $\rho_{-}\in (-1,0)$ and $\rho_{+}>1/2$ always hold; 
if $\kappa/\xi>\rho_{+}$ and $t\ne 0$, then $\rho_{+}<1$;
finally $\rho_{+}=1$ (and $\rho_{-}=-1$) if and only if $t=0$;
\item
the inequality
$g_+(\rho)>0$
holds if and only if
$\rho_{+}<1$ and $\rho\in(\rho_{+},1)$;
\item
the inequality $g_-(\rho)>0$ holds if and only if $\rho\in(-1,\rho_{-})$;
\item let $u^*_{\pm}$ be as in~\eqref{eq:DefUpmU*pm} and $t>0$.
Then $u_{+}^*>1$ if $\rho\leq\rho_{-}$, and 
$u_{-}^*<0$ if $\rho\geq\rho_{+}$. 
\end{enumerate}
\end{lemma}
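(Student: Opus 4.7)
The proof is purely algebraic and I plan to treat the four parts separately, introducing two auxiliary quadratics. Setting $a:=\E^{\kappa t}\geq 1$, the polynomial $P(u):=(a-1)^2\xi^2 u^2-(a-1)\xi\psi u+4\kappa^2 a$ has $u_{\pm}^{*}$ as its two roots by the quadratic formula, and the polynomial $Q(\rho):=-4\kappa a^2\rho^2+\xi(a^2-1)\rho+\kappa(a+1)^2$ has $\rho_{\pm}$ as its roots (its discriminant equals $(a+1)^2[16\kappa^2 a^2+\xi^2(a-1)^2]$, matching~\eqref{eq:DefUpmU*pm}). These two identifications form the backbone of the argument.

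For part~(i), the sign $\rho_-<0$ is immediate since $(a+1)\sqrt{16\kappa^2 a^2+\xi^2(a-1)^2}>(a+1)\xi(a-1)=\xi(a^2-1)$ for $a>1$, and $\rho_->-1$, $\rho_+>1/2$ follow by isolating the square root and squaring, reducing to elementary polynomial inequalities in $a$. For the equivalence $\rho_+=1\Leftrightarrow t=0$, I would square the defining equation and factor it as $(a-1)[\xi(a+1)-\kappa(3a+1)]=0$, so that (for $t>0$) the unique $\xi>0$ with $\rho_+=1$ is $\xi_*:=\kappa(3a+1)/(a+1)$. Continuity of $\rho_+$ in $\xi$ together with $\rho_+(0)<1$ then gives $\rho_+<1$ on $(0,\xi_*)$ and $\rho_+\geq 1$ on $[\xi_*,\infty)$; in the latter range $\kappa/\xi\leq(a+1)/(3a+1)<1\leq\rho_+$, contradicting the hypothesis $\kappa/\xi>\rho_+$.

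For parts~(ii) and~(iii), write $g_\pm(\rho)=h(\rho)\pm\rho\,s(\rho)$ with $h(\rho):=(2\kappa-\rho\xi)-4\kappa(1-\rho^2)a/(a-1)$ and $s(\rho):=\sqrt{\xi^2(1-\rho^2)+(2\kappa-\rho\xi)^2}>0$. Expanding $h^2-\rho^2 s^2$ and using $(2\kappa-\rho\xi)^2-\rho^2\xi^2=4\kappa(\kappa-\rho\xi)$ yields
\begin{equation*}
g_+(\rho)\,g_-(\rho)=\frac{4\kappa(1-\rho^2)}{(a-1)^2}\,Q(\rho)=-\frac{16\kappa^2 a^2(1-\rho^2)(\rho-\rho_+)(\rho-\rho_-)}{(a-1)^2}.
\end{equation*}
Combined with $g_+-g_-=2\rho\,s(\rho)$ and the reference value $g_\pm(0)=h(0)=-2\kappa(a+1)/(a-1)<0$, a sign analysis on $\rho\in(-1,1)$ using the ordering $-1<\rho_-<0<\rho_+$ from part~(i) immediately yields $g_+(\rho)>0\Leftrightarrow\rho\in(\rho_+,1)$ (non-empty iff $\rho_+<1$) and $g_-(\rho)>0\Leftrightarrow\rho\in(-1,\rho_-)$.

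For part~(iv), compute $P(0)=4\kappa^2 a>0$ and $P(1)=4\kappa a[\kappa+\rho\xi(a-1)]$. Since $P$ opens upward, $u_-^{*}<0$ is equivalent to both roots of $P$ being negative, i.e., the sum $\psi/[(a-1)\xi]$ being negative, i.e., $\rho>\xi(a-1)/(4\kappa a)$; squaring shows $\rho_+>\xi(a-1)/(4\kappa a)$ and $\rho_+\geq\rho_2:=\xi(a-1)/(4\kappa a)+1/\sqrt{a}$ (the latter ensures $\nu$ real at $\rho_+$, hence on $[\rho_+,\infty)$), giving $u_-^{*}<0$ whenever $\rho\geq\rho_+$. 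For $u_+^{*}>1$, the identities $(u_+^{*}-1)(u_-^{*}-1)=P(1)/[(a-1)^2\xi^2]$ and $(u_+^{*}-1)+(u_-^{*}-1)=-1-4\kappa\rho a/[\xi(a-1)]$ show that $u_+^{*}>1\Leftrightarrow\rho<\rho^{c}:=-\min\{\kappa/[\xi(a-1)],\,\xi(a-1)/(4\kappa a)\}$. The main obstacle is then the inequality $\rho_-\leq\rho^{c}$, which I would split into the two sub-cases $2\kappa\sqrt{a}\leq\xi(a-1)$ and $2\kappa\sqrt{a}>\xi(a-1)$; in each, rationalising the square root in $\rho_-$ and squaring reduces the inequality to a polynomial comparison in $a$ ultimately relying on $(a+1)^2\geq 2a+1$. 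Combined with $\rho_-\leq\rho_1:=\xi(a-1)/(4\kappa a)-1/\sqrt{a}$ (which ensures $\nu$ real on $(-1,\rho_-]$ and which follows from the factorisation $4a(a-1)^2(2\kappa\sqrt{a}+\xi)^2\geq 0$ after squaring), this completes the proof.
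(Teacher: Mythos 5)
Your proposal is correct and, on parts (ii)--(iii), takes a genuinely cleaner route than the paper's. The paper squares the equation $g_+(\rho)=0$, factors the resulting degree-four polynomial as $4\kappa(\rho^2-1)\cdot(-Q(\rho))$ with roots $\{\pm1,\rho_\pm\}$, and then does a fairly delicate case analysis involving $g_+(\pm 1)$, the sign of $g_+'(1)$, and an argument by contradiction. Your factorisation of the \emph{product} $g_+(\rho)\,g_-(\rho)=4\kappa(1-\rho^2)(a-1)^{-2}Q(\rho)$ (which I verified, using $(2\kappa-\rho\xi)^2-\rho^2\xi^2=4\kappa(\kappa-\rho\xi)$), combined with the trivial sign of $g_+-g_-=2\rho\,s(\rho)$ and the single reference value $g_\pm(0)<0$, determines the sign pattern of both $g_+$ and $g_-$ simultaneously and handles (ii) and (iii) in one stroke. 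This is shorter, symmetric in the two parts, and sidesteps the derivative casework entirely. The only point to be careful about, which you do not spell out but which is fine, is that on $(\rho_-,\rho_+)$ neither $g_+$ nor $g_-$ can change sign because the product $g_+g_-$ does not vanish there (its zeros $\{\rho_\pm,\pm1\}$ lie on the boundary or outside).

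Parts (i) and (iv) follow essentially the same path as the paper, with two improvements in presentation. In (i), your factorisation $(a-1)[\kappa(3a+1)-\xi(a+1)]=0$ of the squared equation $\rho_+=1$ makes transparent something the paper's statement elides: for $t>0$ and $\xi=\xi_*:=\kappa(3a+1)/(a+1)$ one does get $\rho_+=1$, so the clause ``$\rho_+=1$ iff $t=0$'' only holds together with the hypothesis $\kappa/\xi>\rho_+$ (which, as you note, forces $\xi<\xi_*$). In (iv), working with the auxiliary quadratic $P$ via Vieta's formulas and evaluating $P(0)$, $P(1)$ is the same device the paper uses implicitly; your thresholds $\rho_1,\rho_2$ coincide with the paper's $\chi_\mp$, and the final factorisations $4a(a-1)^2(2\kappa\sqrt{a}\mp\xi)^2\geq 0$ are the tidy form of the paper's square-root completions. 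The dichotomy $2\kappa\sqrt{a}\lessgtr\xi(a-1)$ you propose and the paper's comparison $(a+1)^2\geq 2a+1$ are equivalent ways of ensuring that, whichever of the two expressions $\kappa/[\xi(a-1)]$ and $\xi(a-1)/(4\kappa a)$ is smaller, $\rho_-$ still lies to its left. In short: part (ii)/(iii) is a nicer argument than the one in the paper; parts (i) and (iv) are the paper's argument with cleaner book-keeping.
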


\begin{proof}
We first prove Lemma~\ref{lemma:rhoconditions}(i).
The double inequality $-1<\rho_{-}< 0$ is equivalent to 
$$
\frac{\xi -(8 \kappa +\xi ) \E^{2 \kappa  t}}{\E^{\kappa  t}+1}
<-\sqrt{16 \kappa ^2 \E^{2 \kappa  t}+\xi ^2 \left(1-\E^{\kappa  t}\right)^2}
<\xi \left(1-\E^{\kappa  t}\right).
$$
The upper bound clearly holds, and the lower bound  follows from the identity
$$
\sqrt{16 \kappa ^2 \E^{2 \kappa  t}+\xi ^2 \left(1-\E^{\kappa  t}\right)^2}=\sqrt{\frac{\left(\xi -(8 \kappa +\xi ) \E^{2 \kappa  t}\right)^2}{\left(\E^{\kappa  t}+1\right)^2}-\frac{16 \kappa  \E^{2 \kappa  t} \left(\E^{\kappa  t}-1\right) \left(\kappa
   +\xi +\xi  \E^{\kappa  t}+3 \kappa  \E^{\kappa  t}\right)}{\left(\E^{\kappa  t}+1\right)^2}}.
$$
We now prove that $\rho_{+}>1/2$.
From~\eqref{eq:DefUpmU*pm} this is equivalent to
$
\sqrt{16 \kappa ^2 \E^{2 \kappa  t}+\xi ^2 \left(1-\E^{\kappa  t}\right)^2}
>\frac{4 \xi +(\kappa -4 \xi ) \E^{2 \kappa  t}}{4 \left(\E^{\kappa t}+1\right)}.
$
The result follows by rearranging the left-hand side as 
$$
\sqrt{16 \kappa ^2 \E^{2 \kappa  t}+\xi ^2 \left(1-\E^{\kappa  t}\right)^2}
=
\sqrt{\frac{\left(4 \xi +(\kappa -4 \xi ) \E^{2 \kappa  t}\right)^2}{16 \left(\E^{\kappa  t}+1\right)^2}
+\frac{\kappa  \E^{2 \kappa  t} \left(8 \xi  \left(\E^{2 \kappa t}-1\right)+\kappa  \left(512 \E^{\kappa  t}+255 \E^{2 \kappa  t}+256\right)\right)}
{16 \left(\E^{\kappa  t}+1\right)^2}}.
$$

Assume now $\kappa/\xi>\rho_{+}$.
The inequality $\rho_+< 1$ is equivalent to 
$\sqrt{16 \kappa ^2 \E^{2 \kappa  t}+\xi ^2 \left(1-\E^{\kappa  t}\right)^2}
<\frac{\xi +(8 \kappa -\xi ) \E^{2 \kappa  t}}{\E^{\kappa  t}+1}$,
or
\begin{equation}\label{eq:rho+<=1}
\sqrt{\frac{\left(\xi +(8 \kappa -\xi ) \E^{2 \kappa  t}\right)^2}{\left(\E^{\kappa  t}+1\right)^2}-\frac{16 \kappa  \E^{2 \kappa  t} \left(\E^{\kappa  t}-1\right) \left(\kappa -\xi 
   \left(\E^{\kappa  t}+1\right)+3 \kappa  \E^{\kappa  t}\right)}{\left(\E^{\kappa  t}+1\right)^2}}
<\frac{\xi +(8 \kappa -\xi ) \E^{2 \kappa  t}}{\E^{\kappa  t}+1}.
\end{equation}
This statement is true if $\kappa -\xi\left(\E^{\kappa t}+1\right)+3 \kappa\E^{\kappa  t}>0$
and if the rhs is positive, which follow from the obvious inequalities $\frac{\E^{\kappa  t}+1}{3 \E^{\kappa  t}+1}<1/2<\kappa/\xi$.

We now prove Lemma~\ref{lemma:rhoconditions}(ii). 
The equation $g_+(\rho)=0$ implies (by squaring and rearranging the terms):
$$ 
4 \kappa  (\rho^2 -1)\left(4 \kappa \E^{2 \kappa  t}\rho ^2  +\xi (1 -\E^{2 \kappa  t}) \rho  -\kappa(1 +2   \E^{\kappa  t}+  \E^{2\kappa  t})\right)=0.
$$
The roots of this equation are $\pm 1$ and $\rho_{\pm}$ defined in~\eqref{eq:DefUpmU*pm}. 
The two possible positive roots are $\{\rho_+,1\}$ and the two possible negative ones are $\{\rho_-,-1\}$.
Clearly $g_+(-1)=0$.
Straightforward computations show that $g_+'(-1)<0$ and $g_+'(0)>0$.
Since $g_+$ is continuous on $(-1,0)$ with $g_+(0)<0$, it cannot have a single root in this interval,
and $\rho_-\in(-1,0)$ (by Lemma~\ref{lemma:rhoconditions}(i)) is hence not a valid root.
Consider now $\rho \in (0,1]$. 
From Lemma~\ref{lemma:rhoconditions}(i) the only possible roots are $1$ and $\rho_{+}$. 
Now  $g_+(1)=2\kappa-\xi+|2\kappa-\xi|$.
If $\kappa/\xi>1/2$ then $g_+(1)>0$ and hence $\rho_+$ is the unique root of $g_+$ in $(0,1)$.
Assume now that $\kappa/\xi \leq 1/2$, which implies $g_+(1)=0$.
Either $g'_+(1)\geq 0$ or $g'_+(1)<0$.
Since $g_+(0)<0$, the first case implies that $g_+$ has zero or more than two roots in $(0,1)$.
If it has zero root, then clearly $g_+(\rho)<0$ for $\rho\in (0,1)$.
More than two roots yields a contradiction with the fact that $\rho_+$ is the only possible root on $(0,1)$.
Now, Inequality~\eqref{eq:rho+<=1} implies that $\rho_{+}<1$ if and only if $\kappa/\xi>(\E^{\kappa t}+1)/(3\E^{\kappa t}+1)$, which is equivalent to $g'_+(1)<0$.
Therefore in the case $\kappa/\xi \leq 1/2$, the only possible scenario is $g'_+(1)<0$, 
where $g_+$ has a unique root $\rho_{+}\in (0,1)$.
In summary, on the interval~$[-1,1]$, $g_{+}(\rho)>0$ if and only if $\rho\in(\rho_{+},1)$ and $\rho_{+}<1$.
The proof of~(iii) is analogous to the proof of~(ii) and we omit it for brevity.

We now prove Lemma~\ref{lemma:rhoconditions}(iv). 
From~\eqref{eq:DefUpmU*pm} write $\nu=z(\rho)^{1/2}$, where
$z(\rho):=\xi ^2-2 \E^{\kappa  t} \left(8 \kappa ^2-4 \kappa  \xi  \rho +\xi ^2\right)+\E^{2 \kappa  t} (\xi -4 \kappa  \rho )^2$.
The two numbers $u_{-}^*$ and $u_{+}^*$ in~\eqref{eq:DefUpmU*pm} are well-defined in $\RR$ if and only if $z(\rho)\geq 0$ and $t>0$.
The two roots of this polynomial are given by
$\chi_\pm:=\frac{1}{4\kappa}\left[\E^{-\kappa  t} \left(\xi(\E^{\kappa t}-1)\pm 4\kappa  \E^{\kappa  t/2}\right)\right]$.
We now claim that
$\rho_{-}\leq \chi_-$  and $\rho_{+}\geq \chi_+$.
From the expression of $\rho_-$ given in~\eqref{eq:DefUpmU*pm}, the inequality $\rho_{-}\leq \chi_-$ 
can be rearranged as 
$$
-\sqrt{\xi ^2+16 \kappa ^2 \E^{2 \kappa  t}-2 \xi ^2 \E^{\kappa  t}+\xi ^2 \E^{2 \kappa  t}}
\leq\frac{\xi -2 \xi  \E^{\kappa  t}+\xi  \E^{2 \kappa  t}-8 \kappa  \E^{3 \kappa  t/2}}{\E^{\kappa  t}+1}.
$$
The claim then follows from the identity
$$
\sqrt{\xi ^2+16 \kappa ^2 \E^{2 \kappa  t}-2 \xi ^2 \E^{\kappa  t}+\xi ^2 \E^{2 \kappa  t}}
= 
\sqrt{\frac{4 \E^{\kappa  t} \left(\E^{\kappa  t}-1\right)^2 \left(\xi +2 \kappa  \E^{\kappa  t/2}\right)^2}{\left(\E^{\kappa  t}+1\right)^2}
+\frac{\left(\xi -2 \xi  \E^{\kappa  t}+\xi  \E^{2 \kappa  t}-8 \kappa  \E^{3 \kappa  t/2}\right)^2}{\left(\E^{\kappa  t}+1\right)^2}}.
$$
Analogous manipulations imply $\rho_{+}\geq \chi_+$,
and hence $z(\rho)$ is a well-defined real number for $\rho\in [-1,\rho_-]\cup[\rho_+,1]$.

The claim~$u_{-}^*<0$ is equivalent to 
$-\sqrt{\xi ^2-2 \E^{\kappa  t} \left(8 \kappa ^2-4 \kappa  \xi  \rho +\xi ^2\right)+\E^{2 \kappa  t} (\xi -4 \kappa  \rho )^2}<\xi  \left(1-\E^{\kappa  t}\right)+4\kappa  \rho  \E^{\kappa  t}$,
which holds as soon as $\xi  \left(1-\E^{\kappa  t}\right)+4\kappa  \rho  \E^{\kappa  t}>0$, or
$\rho >\frac{\xi}{4 \kappa}\left(1- \E^{-\kappa t}\right)$.
Therefore for any $\rho\geq \rho_+$, $u_{-}^*<0$ if and only if 
$\rho_{+}>\frac{\xi}{4 \kappa}\left(1- \E^{-\kappa t}\right)$.
This simplifies to
$\sqrt{\xi ^2+16 \kappa ^2 \E^{2 \kappa  t}-2 \xi ^2 \E^{\kappa  t}+\xi ^2 \E^{2 \kappa  t}}>\frac{\xi  \left(\E^{\kappa  t}-1\right)^2}{\E^{\kappa  t}+1}$,
which also reads
$$\sqrt{\frac{4 \E^{\kappa t} \left(4\kappa^2 \E^{\kappa t}\left(\E^{\kappa  t}+1\right)^2
+\xi ^2 \left(\E^{\kappa  t}-1\right)^2\right)}{\left(\E^{\kappa t}+1\right)^2}
+\frac{\xi^2 \left(\E^{\kappa  t}-1\right)^4}{\left(\E^{\kappa  t}+1\right)^2}}
>\frac{\xi\left(\E^{\kappa  t}-1\right)^2}{\E^{\kappa  t}+1},$$
and this is clearly true. 
Now straightforward manipulations show that the inequality $u_{+}^*>1$ is equivalent to
$$\sqrt{\left(\xi  \left(\E^{\kappa  t}-1\right)+4 \kappa  \rho  \E^{\kappa  t}\right)^2-16 \kappa  \E^{\kappa  t} \left(\kappa +\xi  \rho  \left(\E^{\kappa  t}-1\right)\right)}>\xi 
\left(\E^{\kappa  t}-1\right)+4 \kappa  \rho  \E^{\kappa  t},$$
which is true if 
$\displaystyle \rho <-\frac{\kappa }{\xi  \left(\E^{\kappa  t}-1\right)}$
or 
$\displaystyle \rho <-\frac{\xi  \left(1- \E^{-\kappa  t}\right)}{4\kappa}$.
And of course the claim $\left(u_{+}^*>1\text{ if } \rho\leq\rho_{-}\right)$ holds if 
\begin{equation}\label{eq:u*+>1i}
\rho_- <-\frac{\kappa }{\xi  \left(\E^{\kappa  t}-1\right)}
\qquad\text{or}\qquad
\rho_- <-\frac{  \xi  \left(1- \E^{-\kappa  t}\right)}{4\kappa}.
\end{equation}
The first inequality, which can be re-written as
$$
-\sqrt{\frac{16 \kappa ^2 \E^{3 \kappa  t} \left(\xi ^2 \left(\E^{\kappa  t}-1\right)^2 \left(\E^{\kappa  t}+1\right)-4 \kappa ^2 \E^{\kappa  t}\right)}{\xi ^2
   \left(\E^{2 \kappa  t}-1\right)^2}+\left(\frac{\xi^2(1-\E^{\kappa t})(1-\E^{2\kappa t})+8\kappa^2\E^{2\kappa t}}{\xi(\E^{\kappa t}+1)(1-\E^{\kappa t})}\right)^2}<\frac{\xi^2(1-\E^{\kappa t})(1-\E^{2\kappa t})+8\kappa^2\E^{2\kappa t}}{\xi(\E^{\kappa t}+1)(1-\E^{\kappa t})},
$$
holds if 
$\xi ^2 \left(\E^{\kappa  t}-1\right)^2 \left(\E^{\kappa  t}+1\right)-4 \kappa ^2 \E^{\kappa  t}>0$, 
or
$
\frac{ \left(\E^{\kappa  t}-1\right)^2 \left(1+\E^{-\kappa t}\right)}{4} >\frac{\kappa ^2}{\xi ^2}.
$
Quick manipulations turn the second inequality in~\eqref{eq:u*+>1i} into
$$
-\sqrt{\frac{4 \E^{\kappa  t} \left(4 \kappa ^2 \E^{\kappa  t} \left(\E^{\kappa  t}+1\right)^2-\xi ^2 \left(\E^{\kappa  t}-1\right)^2 \left(2 \E^{\kappa  t}+1\right)\right)}{\left(\E^{\kappa  t}+1\right)^2}+\frac{\xi ^2 \left(2 \E^{\kappa  t}-3 \E^{2 \kappa  t}+1\right)^2}{\left(\E^{\kappa  t}+1\right)^2}}<\frac{\xi  \left(2 \E^{\kappa  t}-3\E^{2 \kappa  t}+1\right)}{\E^{\kappa  t}+1}.
$$
Again this trivially holds if $4 \kappa ^2 \E^{\kappa  t} \left(\E^{\kappa  t}+1\right)^2-\xi ^2 \left(\E^{\kappa  t}-1\right)^2 \left(2 \E^{\kappa  t}+1\right)>0$, which is in turn equivalent to 
$
\frac{\kappa ^2}{\xi ^2}>\frac{\left(\E^{\kappa  t}-1\right)^2 \left(2+\E^{-\kappa t} \right)}{4 \left(\E^{\kappa  t}+1\right)^2}.
$
Since 
$
\frac{\left(\E^{\kappa  t}-1\right)^2 \left(2+\E^{-\kappa t} \right)}{4 \left(\E^{\kappa  t}+1\right)^2}<\frac{ \left(\E^{\kappa  t}-1\right)^2 \left(1+\E^{-\kappa t}\right)}{4},
$
is clearly true, it follows that for any valid choice of parameters either inequality (or both) in~\eqref{eq:u*+>1i} holds, 
and the claim follows.
\end{proof}

We now use Lemma~\ref{lemma:rhoconditions} to compute the large-maturity lmgf effective limiting domain for the forward price process $(\tau^{-1}X^{(t)}_{\tau})_{\tau>0}$.
This is of fundamental importance since in the large-maturity case (unlike the diagonal small-maturity case) we need to find conditions on the parameters of the model such that the limiting lmgf is essentially smooth (Assumption~\ref{assump:Differentiability}(iv)) on the interior of its effective domain.

\begin{proposition}\label{Prop:HestonLimitingmgfLargeTime}
 Let $\varepsilon=\tau^{-1}$ and consider the large-maturity Heston forward process  $(\tau^{-1}X^{(t)}_{\tau})_{\tau>0}$. 
Then $\mathcal{D}_{0}=\Ddh$ and $\{0,1\}\subset\mathcal{D}_{0}^o$ 
with $\Ddh$ and $\mathcal{D}_{0}$
defined in~\eqref{eq:DInfinityLargeMaturity} and in Assumption~\ref{assump:Differentiability}.
\end{proposition}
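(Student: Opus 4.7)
The plan is to compute $\Lambda^{(t)}_\tau(u)$ directly from the explicit forward lmgf~\eqref{eq:LambdaTau}, study its limit as $\tau\uparrow\infty$, and identify the set of $u\in\RR$ on which this limit is a finite real number. Writing
\begin{equation*}
\Lambda^{(t)}_\tau(u) = \frac{A(u,\tau)}{\tau} + \frac{1}{\tau}\left[\frac{B(u,\tau)\,v\E^{-\kappa t}}{1-2\beta_t B(u,\tau)} - \frac{2\kappa\theta}{\xi^2}\log\!\bigl(1-2\beta_t B(u,\tau)\bigr)\right],
\end{equation*}
with $A$, $B$, $d$, $\gamma$, $\beta_t$ as in~\eqref{eq:ABfunctions}--\eqref{eq:DGammaBeta}, I would first restrict to $u\in[u_-,u_+]$ so that $d(u)\geq 0$; outside this interval $d(u)$ is purely imaginary and $\gamma(u)\E^{-d(u)\tau}$ oscillates in $\tau$, producing zeros of $1-\gamma(u)\E^{-d(u)\tau}$ at arbitrarily large $\tau$ and preventing any finite limit.

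On $(u_-,u_+)$ the decay $\E^{-d(u)\tau}\downarrow 0$ gives $B(u,\tau)\to B_\infty(u):=(\kappa-\rho\xi u-d(u))/\xi^2$ and $A(u,\tau)/\tau\to V(u)$, with $V$ as in~\eqref{eq:VandH}; at the endpoints $u_\pm$ one has $d(u_\pm)=0$ and $\gamma(u_\pm)=1$, so the fraction defining $B$ is of type $0/0$, but a short L'H\^opital expansion yields the same limiting value $(\kappa-\rho\xi u_\pm)/\xi^2$. The bracketed term divided by $\tau$ then tends to zero as long as it remains bounded, which is equivalent to $1-2\beta_t B_\infty(u)>0$, i.e.\ $\kappa\theta-2\beta_t V(u)>0$. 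Conversely, when this quantity is non-positive the bracket blows up like $\E^{d(u)\tau}$ and $\Lambda^{(t)}_\tau(u)\to+\infty$, so $u\notin\mathcal{D}_0$. Together these observations identify $\mathcal{D}_0$ as $\{u\in[u_-,u_+]:\kappa\theta-2\beta_t V(u)>0\}$, supplemented by the boundary points $u_\pm$ whenever the same strict inequality holds there (where $d(u_\pm)=0$ kills the exponential growth of the bracket).

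The main work lies in the case analysis. Rationalising the radical in $\kappa\theta-2\beta_t V(u)=0$ produces a quadratic in~$u$ whose two roots are precisely the $u_\pm^*$ of~\eqref{eq:DefUpmU*pm}, real exactly when $\nu^2\geq 0$; convexity of $V$ (Lemma~\ref{lemma:Vproperties}) then forces $\{V<\kappa\theta/(2\beta_t)\}$ to be either the open interval $(u_-^*,u_+^*)$ or all of $\RR$. Lemma~\ref{lemma:rhoconditions}(iv) is used to pin down the relative positions: when $\rho\leq\rho_-$ it guarantees $u_+^*>1$, forcing $u_+^*\in(1,u_+)$ and yielding case~(i); when $\rho\geq\rho_+$ (with $\kappa>\rho_+\xi$) it gives $u_-^*<0$ and the symmetric argument yields case~(ii); and in the intermediate range $\rho_-\leq\rho\leq\min(\rho_+,\kappa/\xi)$, parts~(ii)--(iii) of the same lemma convert $g_\pm(\rho)\leq 0$ into either $\nu^2<0$ or $u_\pm^*\notin[u_-,u_+]$, so that the whole interval $[u_-,u_+]$ survives and we land in case~(iii).

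Finally, $\{0,1\}\subset\mathcal{D}_0^o$ is immediate: Lemma~\ref{lemma:Vproperties} yields $u_-<0<1<u_+$ and $V(0)=V(1)=0$, hence $\kappa\theta-2\beta_t V(u)=\kappa\theta>0$ strictly at $u\in\{0,1\}$ in all three cases, placing both points in the interior of $\mathcal{D}_0$. The anticipated main obstacle is the simultaneous tracking of the reality of $\nu$ and the positions of $u_\pm^*$ relative to $[u_-,u_+]$ across the three $\rho$-regimes; each of these checks, however, reduces to an algebraic inequality already contained in Lemma~\ref{lemma:rhoconditions}.
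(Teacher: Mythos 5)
Your proposal follows the same overall strategy as the paper's proof — explicit form of the forward lmgf via the tower property, two separate finiteness conditions (on the conditional Heston lmgf and on the expectation against the transition density of~$V_t$), and a case analysis on~$\rho$ powered by Lemma~\ref{lemma:rhoconditions} — and you correctly identify the boundary equation $\kappa\theta-2\beta_t V(u)=0$ and the role of $u_\pm^*$. Two points deserve scrutiny.

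First, your justification for restricting to $[u_-,u_+]$ is imprecise. You argue that for $u\notin[u_-,u_+]$ the denominator $1-\gamma(u)\E^{-d(u)\tau}$ has zeros at arbitrarily large $\tau$, ``preventing any finite limit.'' But those zeros form a discrete set; for $\tau$ between consecutive zeros the closed-form formula for $B(u,\tau)$ is perfectly finite. The actual obstruction is that the formula is only the solution of the Riccati ODE and coincides with the true lmgf only up to the first explosion time; for $\tau$ beyond the first zero the Heston moment $\mathbb{E}(\E^{uX_\tau})$ is genuinely infinite. The paper handles this by invoking~\cite[Proposition 3.1]{AP07}, which shows the explosion time is finite precisely outside the set characterised by the conditions~\eqref{eq:FiniteMomCond}, and that these reduce to $u\in[u_-,u_+]$ when $\kappa>\rho\xi$. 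Your heuristic arrives at the right set but does not actually rule those $u$ out of $\mathcal{D}_0$; for a complete argument you either need the explosion-time result or an independent proof that the moment itself blows up.

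Second, $\mathcal{D}_0$ is defined as the set-limit $\lim_{\varepsilon\downarrow0}\mathcal{D}_\varepsilon$, i.e.\ the set of $u$ for which $\Lambda^{(t)}_\tau(u)$ is finite for all sufficiently large $\tau$, not the set on which $\lim_\tau\Lambda^{(t)}_\tau(u)$ is finite. You conflate the two. In the strict interior $\{\kappa\theta-2\beta_t V(u)>0\}$ this makes no difference, because the paper's monotonicity observation ($B(u,\cdot)$ is non-decreasing, indeed strictly increasing off $[0,1]$) ensures $1-2\beta_t B(u,\tau)$ stays bounded away from zero uniformly in $\tau$, so both criteria are satisfied; and when $\kappa\theta-2\beta_t V(u)<0$ strictly the expectation is truly $+\infty$ for $\tau$ past a finite threshold, so both criteria fail. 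But you omit the monotonicity argument, which is what makes ``boundedness of the bracket'' equivalent to ``finiteness of the lmgf for all large~$\tau$'' rather than just ``in the limit.'' This is a small but genuine gap: without it, your step ``the bracketed term divided by $\tau$ then tends to zero as long as it remains bounded, which is equivalent to $1-2\beta_t B_\infty(u)>0$'' is an assertion, not a proof. Once you add the observation that $B(0,\tau)=B(1,\tau)=0$, $B(u,\tau)<0$ on $(0,1)$, and $B(u,\cdot)$ is strictly increasing for $u\notin[0,1]$, your plan matches the paper's proof.
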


\begin{proof}
The tower property yields
$$
\mathbb{E}\left(\E^{u\left(X_{t+\tau}-X_t\right)}\right)
 = \mathbb{E}\left[\mathbb{E}\left(\E^{u\left(X_{t+\tau}-X_t\right)}|\mathcal{F}_t\right)\right]
 = \mathbb{E}\left(\E^{A(u,\tau)+B(u,\tau)V_t}\right)
 = \E^{A(u,\tau)}\mathbb{E}\left(\E^{B(u,\tau)V_t}\right),
$$
with $A$ and $B$ defined in~\eqref{eq:ABfunctions}.
For any fixed $t\geq0$ we require that
\begin{align}\label{eq:finitemom}
\mathbb{E}\left(\E^{u\left(X_{t+\tau}-X_t\right)}|\mathcal{F}_t\right)<\infty\quad\text{for all}\quad\tau>0.
\end{align}
Andersen and Piterbarg ~\cite[Proposition 3.1]{AP07} proved that if the following conditions are satisfied
\begin{equation}\label{eq:FiniteMomCond}
\kappa > \rho\xi u
\qquad\text{and}\qquad
(\kappa-\rho\xi u)^2+u(1-u)\xi^2 \geq 0,
\end{equation}
then the explosion time is infinite and~\eqref{eq:finitemom} is satisfied. 
In~\cite{FJ09} the authors proved that these conditions are equivalent to $\kappa>\rho\xi$ 
and $u\in\left[u_{-},u_{+}\right]$, with $u_{-}<0$ and $u_{+}>1$ 
($u_{\pm}$ defined in~\eqref{eq:DefUpmU*pm}).
Further we require that
\begin{eqnarray}\label{eq:VCondition}
\mathbb{E}\left(\E^{B(u,\tau)V_t}\right)<\infty,\quad\text{for all }\tau>0.
\end{eqnarray}
Now denote 
$\mathcal{K}_V:=\{u\in\mathbb{R}:\mathbb{E}(\E^{B(u,\tau)V_t})<\infty,\text{ for all } \tau>0\}$.
Then if $\kappa>\rho\xi$, the domain of the limiting forward lmgf is given by $\Ddh=\left[u_{-},u_{+}\right]\cap\mathcal{K}_V$.
Condition~\eqref{eq:VCondition} is equivalent to
$B(u,\tau)<1/(2\beta_t)$ for all $\tau>0$.
A simple calculation gives $B(0,\tau)=B(1,\tau)=0$ for all $\tau>0$ .
Furthermore for $u\in (0,1)$, and given Conditions~\eqref{eq:FiniteMomCond},
we have
$d(u)>\kappa-\rho\xi u$ and $\gamma(u)<0$.
This implies that $B(u,\tau)<0$ for $u\in (0,1)$ and $\tau>0$. 
In particular $[0,1]\subset\Ddh$ (martingale condition). 
For fixed $u\in\mathbb{R}$,
$$
\frac{\partial B(u,\tau)}{\partial \tau}
=\frac{2u (u-1) d(u)^2 \E^{d(u)\tau}}{\left(\kappa-\kappa \E^{d(u)\tau}+\xi\rho u\left(\E^{d(u)\tau}-1\right)-d(u) \left(\E^{d(u)\tau}+1\right)\right)^2},$$
so that for any $u\not\in\left[0,1\right]$, $B\left(u,\cdot\right)$ is strictly increasing. 
Therefore 
\begin{equation}\label{eq:Dv}
\mathcal{K}_{V}
=\left\{u\in\mathbb{R}:\lim_{\tau\uparrow\infty}B(u,\tau)<\frac{1}{2\beta_t}\right\}.
\end{equation}
We have
$\lim_{\tau\uparrow\infty}B(u,\tau) 
= \xi^{-2}(\kappa-\rho\xi u-d(u))
$.
So the condition is equivalent to
$\kappa-\rho\xi u-d(u)<2\kappa/(1-\E^{-\kappa t}).$
If $\rho\leq 0$ ($\rho\geq 0$) and $u\leq0$ ($u\geq0$) then
$\kappa-\rho\xi u-d(u)
\leq\kappa-\rho\xi u 
\leq\kappa
<\frac{2\kappa}{1-\E^{-\kappa t}}
$,
and the condition in~\eqref{eq:Dv} is always satisfied. 
So if $\rho=0$, $\Ddh=[u_{-},u_{+}].$ 
If $\rho<0$ ($\rho>0$), 
then $\RR_-\subset \mathcal{K}_V$ ($\RR_+\subset \mathcal{K}_V$), and hence
$\Ddh$ contains $[u_{-},0]$ ($[0,u_{+}]$). 
Now suppose that $\rho<0$ and $u>0$. 
The condition in~\eqref{eq:Dv} ($V$ given in~\eqref{eq:VandH}) is equivalent to
$V(u)<\kappa\theta/(2\beta_t)$.
From Lemma~\ref{lemma:Vproperties}, on~$(0,u_+]$, the function~$V$ 
attains its maximum at~$u_{+}$.
Using the properties in Lemma~\ref{lemma:Vproperties}, there exists $u_+^*\in(1,u_+)$ solving the equation
\begin{eqnarray}\label{eq:Uroots}
\frac{V( u_+^*)}{\kappa\theta}=\frac{1}{2\beta_t},
\end{eqnarray}
if and only if 
$g_-(\rho)>0$ ($g_-$ defined in Lemma~\ref{lemma:rhoconditions}),
which is equivalent (see Lemma~\ref{lemma:rhoconditions}) to $-1<\rho<\rho_{-}$ and $t>0$. 
The solution to~\eqref{eq:Uroots} has two roots $u_-^*$ and $u_+^*$ defined in~\eqref{eq:DefUpmU*pm},
and the correct solution here is $u_+^*$ by Lemma~\ref{lemma:rhoconditions}(iv).
So if $\rho_{-}\leq\rho<0$ then $\Ddh = [u_{-},u_{+}]$. 
If $-1<\rho<\rho_{-}$ and $t>0$ then $\Ddh = [u_{-},u_{+}^*)$. 
Analogous arguments show that for $0<\rho\leq\min\left(\kappa/\xi,\rho_{+}\right)$, we have
$\Ddh = [u_{-},u_{+}]$.
If $\rho_{+}<\rho<\min\left(\kappa/\xi,1\right)$, $t>0$ and $\kappa>\rho_{+}\xi$ then 
$\Ddh = (u_{-}^*,u_{+}]$, with $u_{-}<u_{-}^*<0$.
\end{proof}

\begin{lemma}\label{lemma:HestonmgfExpansionLargeTime}
The following expansion holds for the forward lmgf $\Lambda^{(t)}_{\tau}$ defined in~\eqref{eq:LambdaTauDef}:
$$
\Lambda^{(t)}_{\tau}(u)
=V(u)+\frac{H(u)}{\tau}\left(1+\mathcal{O}\left(\E^{-d(u)\tau}\right)\right),
\quad\text{for all }u\in\Ddh^o, \text{ as }\tau\text{ tends to infinity},
$$
where the functions $V$, $H$, $d$ and the interval $\Ddh$ are defined 
in~\eqref{eq:VandH},~\eqref{eq:DGammaBeta} and~\eqref{eq:DInfinityLargeMaturity}.
\end{lemma}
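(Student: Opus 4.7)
The plan is to perform a large-$\tau$ expansion directly from the closed-form expression~\eqref{eq:LambdaTau}. The crucial observation is that on $\Ddh^o$ the function~$d$ defined in~\eqref{eq:DGammaBeta} satisfies $d(u)>0$ strictly, so that $\E^{-d(u)\tau}$ decays exponentially as $\tau$ tends to infinity. Moreover, by the very definition~\eqref{eq:DInfinityLargeMaturity} of~$\Ddh$ and by Proposition~\ref{Prop:HestonLimitingmgfLargeTime}, the quantity $\kappa\theta-2\beta_t V(u)$ stays strictly positive on~$\Ddh^o$, which will keep all denominators bounded away from zero on compact subsets.

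First I would rewrite $A$ from~\eqref{eq:ABfunctions} as
\begin{equation*}
A(u,\tau) = \tau V(u) + \frac{2\kappa\theta}{\xi^2}\log(1-\gamma(u)) - \frac{2\kappa\theta}{\xi^2}\log\bigl(1-\gamma(u)\E^{-d(u)\tau}\bigr),
\end{equation*}
where the last term is of order $\mathcal{O}(\E^{-d(u)\tau})$ by a Taylor expansion of the logarithm around~$1$. An analogous manipulation of~$B$ yields
\begin{equation*}
B(u,\tau) = \frac{\kappa-\rho\xi u-d(u)}{\xi^2}\,\frac{1-\E^{-d(u)\tau}}{1-\gamma(u)\E^{-d(u)\tau}} = \frac{V(u)}{\kappa\theta} + \mathcal{O}(\E^{-d(u)\tau}),
\end{equation*}
where the prefactor equals $V(u)/(\kappa\theta)$ by~\eqref{eq:VandH} and the fraction converges to~$1$ with exponential rate.

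Next I would substitute these expansions into the remaining two summands of~\eqref{eq:LambdaTau}. The smooth maps $x\mapsto x/(1-2\beta_t x)$ and $x\mapsto -\log(1-2\beta_t x)$ are Taylor-expanded around $x=V(u)/(\kappa\theta)$; this is legitimate precisely because $\kappa\theta-2\beta_t V(u)$ is bounded away from zero on $\Ddh^o$, so the remainders are again $\mathcal{O}(\E^{-d(u)\tau})$. Collecting all $\tau$-independent contributions and recognising them through the identity
\begin{equation*}
\frac{2\kappa\theta}{\xi^2}\log(1-\gamma(u)) - \frac{2\kappa\theta}{\xi^2}\log\!\left(\frac{\kappa\theta-2\beta_t V(u)}{\kappa\theta}\right) = -\frac{2\kappa\theta}{\xi^2}\log\!\left(\frac{\kappa\theta-2\beta_t V(u)}{\kappa\theta(1-\gamma(u))}\right),
\end{equation*}
one matches exactly $H(u)$ from~\eqref{eq:VandH}, obtaining $\log\mathbb{E}(\E^{uX^{(t)}_\tau}) = \tau V(u) + H(u) + \mathcal{O}(\E^{-d(u)\tau})$. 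Dividing by~$\tau$ delivers the announced expansion.

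The only real obstacle is the careful bookkeeping of the exponential remainders and verifying that they survive division by the (uniformly bounded) denominators in each of the three summands of~\eqref{eq:LambdaTau}; this is entirely controlled on $\Ddh^o$ by the two strict-positivity facts highlighted in the first paragraph.
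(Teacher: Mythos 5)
Your proposal is correct and follows essentially the same route as the paper: expand $A(u,\tau)$ and $B(u,\tau)$ around their $\tau\uparrow\infty$ limits using $d(u)>0$ on $\Ddh^o$, push these through the two remaining smooth summands of~\eqref{eq:LambdaTau} (justified because $\kappa\theta-2\beta_t V(u)>0$ there), collect the $\tau$-independent parts into $H(u)$ via the log identity, and divide by $\tau$. The paper simply states the intermediate asymptotics for $B/(1-2\beta_t B)$ and $\log(1-2\beta_t B)$ where you explicitly invoke a Taylor expansion around $V(u)/(\kappa\theta)$, but this is the same computation.
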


\begin{proof}[Proof of Lemma~\ref{lemma:HestonmgfExpansionLargeTime}]
From the definition of $\Lambda^{(t)}_{\tau}$ in~\eqref{eq:LambdaTauDef} and the Heston forward lmgf given in~\eqref{eq:LambdaTau} 
we immediately obtain the following asymptotics as $\tau$ tends to infinity:
\begin{align*}
A(u,\tau)=\tau  V(u)-\frac{2\kappa\theta}{\xi ^2}\log \left(\frac{1}{1-\gamma (u)}\right)+\mathcal{O}\left(\E^{-d(u)\tau}\right), \qquad
B(u,\tau)=\frac{V(u)}{\kappa\theta}+\mathcal{O}\left(\E^{-d(u)\tau}\right), 
\end{align*}
where $A$ and $B$ are defined in~\eqref{eq:ABfunctions}, $V$ in~\eqref{eq:VandH} and $d$ and $\gamma$  in~\eqref{eq:DGammaBeta}.
In particular this implies that 
$\frac{B(u,\tau)}{1-2\beta_t B(u,\tau)}
 = \frac{V(u)}{\theta  \kappa -2 \beta _t V(u)}+\mathcal{O}\left(\E^{-d(u)\tau}\right)$ and 
$\log\left(1-2\beta_t B(u,\tau)\right)
  = \log \left(1-\frac{2 \beta _t V(u)}{\theta  \kappa }\right)+\mathcal{O}\left(\E^{-d(u)\tau}\right)$,
which are well-defined for all $u\in\Ddh^o$. 
We therefore obtain
$$
H(u)=\frac{V(u)}{\kappa\theta -2 \beta _t V(u)}v \E^{-\kappa t}-\frac{2\kappa\theta}{\xi ^2}\log \left(1-\frac{2 \beta _t V(u)}
{\kappa\theta }\right)-\frac{2\kappa\theta}{\xi ^2}\log\left(\frac{1}{1-\gamma (u)}\right),
$$
and the lemma follows from straightforward simplifications.
Note in passing that $d(u)>0$ for any $u\in\Ddh^o$.
\end{proof}

\subsection{Proofs of Section~\ref{sec:ExponentialLevyForwardSmile}}
\label{sec:ProofsExpLevy}
Let $\phi$ be the L\'evy exponent of the L\'evy process $N$. 
If $v$ follows~\eqref{eq:fellerdiff}, a straightforward application of the tower property for expectations yields
the forward lmgf:
\begin{equation}\label{eq:MGFFellerTC}
\log\mathbb{E}\left(\E^{u X_{\tau}^{(t)}}\right)
=
A(\phi(u),\tau)+\frac{B(\phi(u),\tau)}{1-2\beta_t B(\phi(u),\tau)}v\E^{-\kappa t}
-\frac{2\kappa\theta}{\xi^2}\log\left(1-2\beta_t B(\phi(u),\tau)\right),
\end{equation}
defined for all $u$ such that the rhs exists and where 
\begin{align}\label{eq:ABFellerTC}
A(u,\tau) & := 
\frac{\kappa\theta}{\xi^2}\left(\left(\kappa- d(u)\right)\tau-2\log\left(\frac{1-\gamma(u)\E^{-d(u)\tau}}{1-\gamma(u)}\right)\right),\qquad
B(u,\tau) := \frac{\kappa-d(u)}{\xi^2}\frac{1-\E^{-d(u)\tau}}{1-\gamma(u)\E^{-d(u)\tau}},\\ \label{eq:dgammabetatimechange}
d(u) & := \left(\kappa^2-2u\xi^2\right)^{1/2},
\qquad
\gamma(u) := \frac{\kappa-d(u)}{\kappa+d(u)}
\qquad\text{and}\qquad
\beta_t := \frac{\xi^2}{4\kappa}\left(1-\E^{-\kappa t}\right).
\end{align}
Similarly if $(v_t)_{t\geq0}$ follows~\eqref{eq:nonGaussOU} the forward lmgf is given by
\begin{equation}\label{eq:MGFNonGaussOU}
\log\mathbb{E}\left(\E^{u X_{\tau}^{(t)}}\right)
=
A(\phi(u),\tau)+B(\phi(u),\tau)v\E^{-\lambda t}
+\delta\log\left(\frac{B(\phi(u),\tau)-\E^{t\lambda}\alpha}{\E^{t\lambda}(B(\phi(u),\tau)-\alpha)}\right),
\end{equation}
defined for all $u$ such that the rhs exists and where 
\begin{equation}\label{eq:ABNonGaussOU} 
A(u,\tau) := 
\frac{\lambda \delta}{\alpha\lambda-u}\left[u \tau +\alpha \log\left(1-\frac{u}{\alpha \lambda}\left(1-\E^{-\lambda \tau}\right)\right)\right]
\qquad\text{and}\qquad
B(u,\tau) :=\frac{u}{\lambda}\left(1-\E^{-\lambda\tau}\right).
\end{equation}

\begin{proof}[Proof of Proposition~\ref{prop:fwdsmiletimechange}]

We show that Proposition~\ref{Prop:GeneralBSFwdVolLargeTime} is applicable given the assumptions of Proposition~\ref{prop:fwdsmiletimechange}. Consider case (i). The expansion for $\Lambda^{(t)}_{\tau}$ defined in~\eqref{eq:LambdaTauDef} is straightforward and analogous to Lemma~\ref{lemma:HestonmgfExpansionLargeTime}. In particular we establish that
$$
\Lambda^{(t)}_{\tau}(u)
=\widehat{V}(u)+\frac{\widehat{H}(u)}{\tau}\left(1+\mathcal{O}\left(\E^{-d(\phi(u))\tau}\right)\right),
\quad\text{for all }u\in\widehat{\mathcal{K}}^o_{\infty}, \text{ as }\tau\text{ tends to infinity},
$$
where the functions $\widehat{V}$, $\widehat{H}$, $d$ and the domain $\widehat{\mathcal{K}}_{\infty}$ 
are defined in~\eqref{eq:FellerLargeTime},~\eqref{eq:dgammabetatimechange} and~\eqref{eq:domains}. 
Since $\phi$ is essentially smooth and strictly convex on $\mathcal{K}_{\phi}$ 
and $\widehat{\mathcal{K}}_{\infty}\subseteq\mathcal{K}_{\phi}$, then the limiting lmgf
$\LO=\widehat{V}$ is essentially smooth and strictly convex on $\widehat{\mathcal{K}}_{\infty}$. 
The map $(\varepsilon,u)\mapsto \Lambda_{\varepsilon}(u)$ (defined in~\eqref{eq:LambdaTauDef})
is of class $\mathcal{C}^{\infty}$ on $\mathbb{R}_{+}^*\times\widehat{\mathcal{K}}_{\infty}^{o}$ since $\phi$ 
is of class $\mathcal{C}^{\infty}$ on $\widehat{\mathcal{K}}_{\infty}^{o}$ and Assumption~\ref{assump:Differentiability}(v) is also satisfied.
Since $\phi(1)=0$ we have that $\widehat{V}(1)=0$ and $\{0,1\}\subset\widehat{\mathcal{K}}_{\infty}^o$. 
It remains to be checked that the limiting domain is in fact given by~$\widehat{\mathcal{K}}_{\infty}$. 
We first note that that by conditioning on $(V_u)_{t\leq u \leq t+\tau}$ and using the independence of the time-change and the L\'evy process we have 
$\mathbb{E}\left(\E^{u\left(X_{t+\tau}-X_t\right)}\right)
=\mathbb{E}\left(\E^{\phi(u)\int_t^{t+\tau}v_s\D s }\right)$
and so any~$u$ in the limiting domain must satisfy $\phi(u)<\infty$. 
Using~\cite[page 476]{CT07} and the tower property we compute
\begin{equation}\label{eq:towerlaw}
\mathbb{E}\left(\E^{u\left(X_{t+\tau}-X_t\right)}\right)
 = \mathbb{E}\left[\mathbb{E}\left(\E^{\phi(u)\int_t^{t+\tau}v_s\D s}|\mathcal{F}_t\right)\right]
 = \mathbb{E}\left(\E^{A\left(\phi(u),\tau\right)+B\left(\phi(u),\tau\right)v_t}\right)
 = \E^{A\left(\phi(u),\tau\right)}\mathbb{E}\left(\E^{B\left(\phi(u),\tau\right)v_t}\right),
\end{equation}
with $A$ and $B$ given in~\eqref{eq:ABFellerTC}. 
Further from~\eqref{eq:HestonVarianceMGF} we have
$\log\mathbb{E}\left(\E^{u v_t}\right)=\frac{uv\E^{-\kappa t}}{1-2\beta_t u}-\frac{2\kappa\theta}{\xi^2}\log\left(1-2\beta_t u\right)$,
for all $u<1/(2\beta_t)$.
Following a similar argument to the proof of Proposition~\ref{Prop:HestonLimitingmgfLargeTime} we can show that for any $t\geq0$,
$B(\phi(u),\tau)<1/(2\beta_t)$ is always satisfied for each $\tau>0$. 
This follows from the independence of the L\'evy process~$N$ and the time-change. 
We also require that for any $t\geq 0$,
$
\mathbb{E}\left(\E^{\phi(u)\int_t^{t+\tau}v_s\D s }|\mathcal{F}_t\right)<\infty,
$
for every $\tau>0$. 
Here we use~\cite[Corollary 3.3]{AP07} with zero correlation to find that we require $\phi(u)\leq \kappa^2/(2\xi^2)$. 
It follows that $\widehat{\mathcal{K}}_{\infty}=\left\{u:\phi(u)\leq \kappa^2/(2\xi^2)\right\}$.

Regarding case (ii), arguments analogous to case (i) hold and we focus on showing that the limiting domain is $\widetilde{\mathcal{K}}_{\infty}$. Using~\cite[page 488]{CT07} Equality~\eqref{eq:towerlaw} also holds with $A$ and $B$ defined in~\eqref{eq:ABNonGaussOU}. 
Since we require that for any $t\geq 0$,
$\mathbb{E}\left(\E^{\int_t^{t+\tau}v_s\D s \phi(u)}|\mathcal{F}_t\right)<\infty$,
for every $\tau>0$ we have $\phi(u)<\alpha\lambda$. 
Using~\cite[page 482]{CT07} we also have 
\begin{align*}
\log\mathbb{E}\left(\E^{u v_t}\right)=u v\E^{-\lambda t}
+\delta\log\left(\frac{u-\alpha \E^{\lambda t}}{(u-\alpha)\E^{\lambda t}}\right),
\qquad
\text{for all }u<\alpha.
\end{align*}
But it is straightforward to show that $\phi(u)<\alpha\lambda$ implies $B(\phi(u),\tau)<\alpha$ for any $\tau>0$ and it follows that $\widetilde{\mathcal{K}}_{\infty}=\left\{u:\phi(u)<\alpha\lambda \right\}$. Case (iii) is straightforward and omitted.
\end{proof}

\appendix

\section{Tail Estimates}\label{sec: TailEstimates}

The purpose of this appendix is to prove that under Assumption~\ref{assump:Differentiability}(v) the tail integral
$
\left| \int_{|u|>1/\sqrt{\varepsilon}} \Phi_{Z_{k,\varepsilon}}(u) \overline{C_{\varepsilon,k}(u)} \D u \right|
$
is exponentially small, where $\Phi_{Z_{k,\varepsilon}}$ is defined in~\eqref{eq:PhiZ} and $C_{\varepsilon,k}$ is given in~\eqref{eq:Cdef}.
This is required in the proof of Theorem~\ref{theorem:GeneralOptionAsymp}. 
Now we recall~\eqref{eq:Cconj} that 
$$
\overline{C_{\varepsilon,k}(u)}=\frac{\varepsilon ^{3/2} f(\varepsilon)}
{\left(u^*(k)+\I u \sqrt{\varepsilon }\right) \left(u^*(k)-\varepsilon  f(\varepsilon )+\I u \sqrt{\varepsilon}\right)},
$$
and one can easily check that, for $|z|>1$,
\begin{equation}\label{eq:payoffbound}
\left| \overline{C_{\varepsilon,k}(z/\sqrt{\varepsilon})} \right|
\leq \varepsilon ^{3/2} f(\varepsilon)/z^2.
\end{equation}
Further by definition,
$
\left| \Phi_{Z_{k,\varepsilon}}(z/\sqrt{\varepsilon}) \right|
 = \exp\left[\frac{1}{\varepsilon}\left(\Re\left(\Lambda_{\varepsilon}\left(\I z +u^*(k)\right)\right)-\Lambda_{\varepsilon}\left(u^*(k)\right)\right)\right],
$
and for fixed $u^*(k)$, 
the ridge property for characteristic functions~\cite[Theorem 7.1.2]{L70} implies
$\left| \Phi_{Z_{k,\varepsilon}}(z/\sqrt{\varepsilon}) \right|\leq 1$
for all $z\in\mathbb{R}$ and $\varepsilon>0$.
Therefore the tail estimate
\begin{equation}\label{eq:FiniteBound}
\left| \int_{|u|>1/\sqrt{\varepsilon}} \Phi_{Z_{k,\varepsilon}}(u) \overline{C_{\varepsilon,k}(u)} \D u \right| \leq
\frac{1}{\sqrt{\varepsilon}}\int_{|z|>1} \left| \Phi_{Z_{k,\varepsilon}}(z/\sqrt{\varepsilon})\right| \left| \overline{C_{\varepsilon,k}(z/\sqrt{\varepsilon})} \right| \D z
\leq \varepsilon f(\varepsilon) \int_{|z|>1} \frac{\D z}{z^2} < \infty,
\end{equation}
is finite for sufficiently small $\varepsilon$ since $f(\varepsilon)\varepsilon=c+\mathcal{O}(\varepsilon)$.
We proceed now to show that Assumption~\ref{assump:Differentiability}(v) allows us to further conclude that this term is in fact exponentially small:
\begin{lemma}\label{lemma:tailestimatesteepcase}
There exists $\beta>0$ such that the following tail estimate holds for all $k \neq \Lambda_{0,1}(0)$
as $\varepsilon\downarrow 0$:
$$
\left| \int_{|u|>1/\sqrt{\varepsilon}} \Phi_{Z_{k,\varepsilon}}(u) \overline{C_{\varepsilon,k}(u)} \D u \right| = \mathcal{O}(\E^{-\beta/\varepsilon}).
$$
\end{lemma}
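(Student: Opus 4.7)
The plan is to use Assumption~\ref{assump:Differentiability}(v) to upgrade the trivial ridge-property bound $|\Phi_{Z_{k,\varepsilon}}(u)|\leq 1$ to an exponentially decaying bound of the form $|\Phi_{Z_{k,\varepsilon}}(u)|\leq \exp(-\beta/\varepsilon)$, uniformly over the tail region $\{|u|>1/\sqrt{\varepsilon}\}$. Since $k\neq \Lambda_{0,1}(0)$, the saddlepoint $u^*(k)$ satisfies $u^*(k)\in\mathcal{D}_0^o\setminus\{0\}$, which is precisely the setting of Assumption~\ref{assump:Differentiability}(v), applied with $p_r := u^*(k)$. Writing out the exponent,
\begin{equation*}
\bigl|\Phi_{Z_{k,\varepsilon}}(u)\bigr|
=\exp\!\Bigl[\tfrac{1}{\varepsilon}\bigl(\Re\Lambda_{\varepsilon}(\I u\sqrt{\varepsilon}+u^*(k))-\Lambda_{\varepsilon}(u^*(k))\bigr)\Bigr],
\end{equation*}
the task is to bound the bracket uniformly from above by a strictly negative constant for $|u|>1/\sqrt{\varepsilon}$.

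The substitution $p_i:=u\sqrt{\varepsilon}$ maps the tail region $|u|>1/\sqrt{\varepsilon}$ to $|p_i|>1$. I would then split this into the compact annulus $1<|p_i|\leq p_i^*$ and the outer region $|p_i|>p_i^*$, where $p_i^*$ is the threshold in Assumption~\ref{assump:Differentiability}(v)(c). On the compact annulus, (v)(b) ensures that the continuous function $L(p_i)=\Re\Lambda_0(\I p_i + u^*(k))$ attains a maximum strictly below $L(0)=\Lambda_0(u^*(k))$, say $L(p_i)\leq L(0)-\eta_1$ for some $\eta_1>0$, and (v)(a) then yields $\Re\Lambda_\varepsilon(\I p_i+u^*(k))\leq L(0)-\eta_1+\mathcal{O}(\varepsilon)$. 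On the outer region, (v)(b) also gives an $\eta_2>0$ with $L(p_i)\leq L(0)-\eta_2$ for $|p_i|\geq p_i^*$, and (v)(c) upgrades this to $\Re\Lambda_\varepsilon(\I p_i+u^*(k))\leq L(0)-\eta_2+M\varepsilon$ uniformly in $p_i$. Combining either estimate with the expansion $\Lambda_{\varepsilon}(u^*(k))=\Lambda_0(u^*(k))+\mathcal{O}(\varepsilon)$ from Assumption~\ref{assump:Differentiability}(i), one obtains, for all sufficiently small $\varepsilon$ and all $|u|>1/\sqrt{\varepsilon}$,
\begin{equation*}
\Re\Lambda_{\varepsilon}(\I u\sqrt{\varepsilon}+u^*(k))-\Lambda_{\varepsilon}(u^*(k))\leq -\tfrac{1}{2}\min(\eta_1,\eta_2)=:-\beta,
\end{equation*}
so that $|\Phi_{Z_{k,\varepsilon}}(u)|\leq \E^{-\beta/\varepsilon}$ uniformly on the tail.

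To conclude, I would pull this uniform bound out of the integral in~\eqref{eq:FiniteBound} and use the integrable payoff estimate~\eqref{eq:payoffbound}, giving
\begin{equation*}
\Bigl|\int_{|u|>1/\sqrt{\varepsilon}} \Phi_{Z_{k,\varepsilon}}(u) \overline{C_{\varepsilon,k}(u)} \D u\Bigr|
\leq \E^{-\beta/\varepsilon}\,\varepsilon f(\varepsilon)\!\int_{|z|>1}\!\frac{\D z}{z^2}
= \mathcal{O}\bigl(\E^{-\beta/\varepsilon}\bigr),
\end{equation*}
where finiteness of the $z$-integral and the condition $\varepsilon f(\varepsilon)=c+\mathcal{O}(\varepsilon)$ absorb the $\varepsilon$-prefactors. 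The main subtlety is handling the compact annulus $1<|p_i|\leq p_i^*$, which is not directly covered by the uniform control in (v)(c); here the argument rests on the strict-maximum clause of (v)(b) combined with the pointwise first-order control in (v)(a), which is why both sub-assumptions are needed.
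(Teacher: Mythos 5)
Your proof is correct and uses essentially the same strategy as the paper: substitute $p_i=u\sqrt{\varepsilon}$, split the tail at the threshold $p_i^*$ from Assumption~\ref{assump:Differentiability}(v)(c), and use (v)(a) on the compact annulus together with (v)(b), (v)(c) outside. The one difference in execution is where the exponential smallness is extracted: you establish a uniform pointwise bound $|\Phi_{Z_{k,\varepsilon}}(u)|\leq\E^{-\beta/\varepsilon}$ on the entire tail and pull it out of the integral, whereas the paper keeps the integral, handles the compact annulus via the extreme value theorem (the integrand attains its maximum at some $\widetilde{p}_i\in[1,p_i^*]$, a point at which $\Re\LO(\I\widetilde{p}_i+u^*(k))-\LO(u^*(k))<0$), and then invokes the Laplace method on the outer region. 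Your route thereby avoids the Laplace method entirely. Both routes implicitly require the $\mathcal{O}(\varepsilon)$ in (v)(a) to be uniform over the compact annulus (the paper's maximiser $\widetilde{p}_i$ is $\varepsilon$-dependent), so the technical nuance you flag at the end is shared with the paper's own argument and is not an extra gap in yours.
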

\begin{proof}
By Definition~\ref{eq:PhiZ},
$
\Phi_{Z_{k,\varepsilon}}(u)
 = \exp\left[-\frac{\I uk}{\sqrt{\varepsilon}}+\frac{1}{\varepsilon}\left(\Lambda_{\varepsilon}\left(\I u\sqrt{\varepsilon} +u^*(k)\right)-\Lambda_{\varepsilon}\left(u^*(k)\right)\right)\right].
$
Let
$\mathcal{R}(\varepsilon,z)\equiv\mathcal{R}_0(\varepsilon,z)+\mathcal{R}_1(\varepsilon)$, 
with
$
\mathcal{R}_0(\varepsilon,z):=\frac{1}{\varepsilon}\left[\Re\left(\Lambda_{\varepsilon}\left(\I z +u^*(k)\right)\right) - \Re\left(\LO\left(\I z +u^*(k)\right)\right) \right]$
and 
$\mathcal{R}_1(\varepsilon):=\frac{1}{\varepsilon}\left[\LO\left(u^*(k)\right) - \Lambda_{\varepsilon}\left(u^*(k)\right) \right]$.
Then
$$
|\Phi_{Z_{k,\varepsilon}}(z/\sqrt{\varepsilon})|
 = \exp\left[\frac{1}{\varepsilon}\left(\Re\left(\LO\left(\I z +u^*(k)\right)\right)-\LO\left(u^*(k)\right)\right)+\mathcal{R}(\varepsilon,z)\right],
$$
as $\varepsilon\downarrow 0$.
Set $F(z):=\Re\left(\LO\left(\I z +u^*(k)\right)\right)-\LO\left(u^*(k)\right)$.
Using~\eqref{eq:payoffbound} the tail estimate is then given by
$$
\left| \int_{|u|>1/\sqrt{\varepsilon}} \Phi_{Z_{k,\varepsilon}}(u) \overline{C_{\varepsilon,k}(u)} \D u \right|
\leq
\frac{1}{\sqrt{\varepsilon}}\int_{|z|>1}\left| \Phi_{Z_{k,\varepsilon}}(z/\sqrt{\varepsilon}) \right| \left|  \overline{C_{\varepsilon,k}(z/\sqrt{\varepsilon})}  \right| \D z
\leq
\varepsilon f(\varepsilon) \int_{|z|>1}\E^{F(z)/\varepsilon+\mathcal{R}(\varepsilon,z)}\frac{\D z}{z^2},
$$
as $\varepsilon$ tends to zero.
We deal with the case $z>1$. 
We note that
\begin{align*}
 \int_{z>1}\E^{F(z)/\varepsilon+\mathcal{R}(\varepsilon,z)}\frac{\D z}{z^2}
&=\ind_{\{p_i^*>1\}}\int_{1}^{p_i^*}\E^{F(z)/\varepsilon+\mathcal{R}(\varepsilon,z)}\frac{\D z}{z^2}
+ \int_{z>\max(p_i^*,1)}\E^{F(z)/\varepsilon+\mathcal{R}(\varepsilon,z)}\frac{\D z}{z^2}  \\
&\leq \frac{ (p_i^*-1)^+\E^{F(\widetilde{p}_i)/\varepsilon+\mathcal{R}(\varepsilon,\widetilde{p}_i)}}{ \widetilde{p}_i^2}
+ \int_{z>p_i^*}\E^{ F(z)/\varepsilon+\mathcal{R}(\varepsilon,z)} \frac{\D z}{z^2},
\end{align*}
where the first integral on the rhs follows from the extreme value theorem which implies that the integrand attains its maximum on~$[1,p_i^*]$ at some point $\widetilde{p}_i$ and the inequality for the second integral on the rhs follows since the integrand is positive.
Using Assumption~\ref{assump:Differentiability}(v)(c), for $z>p_i^*$ there exists $\varepsilon_1>0$ 
and $M$ (independent of $z$) such that $\mathcal{R}_0(\varepsilon,z)<M$ for $\varepsilon<\varepsilon_1$.
In particular for $\varepsilon<\varepsilon_1$ we have
$$
 \int_{z>1}\E^{F(z)/\varepsilon+\mathcal{R}(\varepsilon,z)}\frac{\D z}{z^2}
\leq \frac{ (p_i^*-1)^+\E^{F(\widetilde{p}_i)/\varepsilon+\mathcal{R}(\varepsilon,\widetilde{p}_i)}}{ \widetilde{p}_i^2}
+ \E^{M+\mathcal{R}_1(\varepsilon)} \int_{z>p_i^*}\E^{ F(z)/\varepsilon} \frac{\D z}{z^2}.
$$
Note Assumption~\ref{assump:Differentiability}(i) implies that $\mathcal{R}_1(\varepsilon)$ 
and $\mathcal{R}(\varepsilon,\widetilde{p}_i)$ are both $\mathcal{O}(1)$ quantities.
By a similar argument to~\eqref{eq:FiniteBound} the integral on the rhs is finite and we now use the Laplace method.
Since $F$ is continuous, has  a unique maximum at $z=0$ and is bounded away from zero as $|z|$ tends to infinity (Assumption~\ref{assump:Differentiability}(v)(b)) there exists $z_{+}^*>0$ such that $F(z_{+}^*)>F(z)$ for $z>z_{+}^*$. 
We now write
$$   \int_{z>p_i^*}\E^{ F(z)/\varepsilon}\frac{\D z}{z^2}
\leq   \int_{z>\min(p_i^*,z_{+}^*)}\E^{ F(z)/\varepsilon}\frac{\D z}{z^2}
\leq \frac{(z^*_{+}-p_i^*)^+\E^{F(z_{+})/\varepsilon}}{ z_{+}^2}
+ \int_{z>z^*_{+}}\E^{ F(z)/\varepsilon}\frac{\D z}{z^2},
 $$
where again the final step follows from the extreme value theorem: if $z_{+}^*>p_i^*$
the integrand attains its maximum on $[p_i^*,z_+^*]$ at $z_{+}$.
Since the contribution of the last integral is centered around $z=z_{+}^*$ as $\varepsilon\downarrow 0$, 
the Laplace method yields
$$
\int_{z>z^*_{+}}\E^{ F(z)/\varepsilon}\frac{\D z}{z^2}
\sim
-\frac{\varepsilon \E^{2 F(z^*_{+})/\varepsilon}}{2 F'(z^*_{+})(z^*_{+})^2}.
 $$
A similar argument holds for $z<1$ and the result follows.
\end{proof}

\section{Proof of Lemma~\ref{lem:optpricerep0}}
The proof of Lemma~\ref{lem:optpricerep0} proceeds in two steps:
we first prove that the integrand in the right-hand side of Equality~\eqref{eq:Pars} belongs to $L^1(\RR)$ 
(and hence the integral is well-defined), 
and we then prove that this very equality holds. 
The first step is contained in the following lemma.

\begin{lemma}\label{lem:L1lem}
There exists $\varepsilon^*_0>0$ such that
$
 \int_{\RR}|\Phi_{Z_{k,\varepsilon}}(u) \overline{C_{\varepsilon,k}(u)}|  \D u 
<\infty
$
 for all $\varepsilon<\varepsilon^*_0$ and  $k\in\mathbb{R}\backslash\{\Lambda_{0,1}(0), \Lambda_{0,1}(c)\}$.
\end{lemma}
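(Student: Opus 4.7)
The plan is to combine two elementary ingredients. First, $\Phi_{Z_{k,\varepsilon}}$ is a characteristic function under the probability measure $\mathbb{Q}_{k,\varepsilon}$ from~\eqref{eq:MeasureChange} (well defined for small enough $\varepsilon$, as already shown in the proof of Theorem~\ref{theorem:GeneralOptionAsymp}), so $|\Phi_{Z_{k,\varepsilon}}(u)|\le 1$ uniformly in $u\in\RR$. Second, the explicit formula~\eqref{eq:Cconj} gives
$$
|\overline{C_{\varepsilon,k}(u)}|^2 = \frac{\varepsilon^{3} f(\varepsilon)^{2}}{\bigl((u^*(k))^2+u^2\varepsilon\bigr)\bigl((u^*(k)-\varepsilon f(\varepsilon))^2 + u^2\varepsilon\bigr)},
$$
which decays like $u^{-2}$ at infinity and is bounded on compact sets, provided both real quantities $u^*(k)$ and $u^*(k)-\varepsilon f(\varepsilon)$ stay away from zero.

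I would then verify that the latter condition holds under the hypothesis. The strict convexity of $\Lambda_0$ (Assumption~\ref{assump:Differentiability}(iv)) together with the defining relation~\eqref{eq:u*definition} for $u^*$ implies that $u^*(k)=0$ if and only if $k=\Lambda_{0,1}(0)$, and that $u^*(k)=c$ if and only if $k=\Lambda_{0,1}(c)$. Since these two values of~$k$ are excluded by assumption, and since $\varepsilon f(\varepsilon)\to c$ by~\eqref{eq:LDP Rescaling}, there exist $\varepsilon_0^*>0$ and $\delta>0$ such that $\min\bigl(|u^*(k)|,\,|u^*(k)-\varepsilon f(\varepsilon)|\bigr)\ge\delta$ for every $\varepsilon<\varepsilon_0^*$.

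Finiteness of the integral then follows by splitting at $|u|=1$. On $\{|u|\le 1\}$, the uniform bound $|\overline{C_{\varepsilon,k}(u)}|\le \varepsilon^{3/2} f(\varepsilon)/\delta^{2}$ combined with $|\Phi_{Z_{k,\varepsilon}}|\le 1$ gives a finite contribution. On $\{|u|>1\}$, dropping the nonnegative constant terms in each factor of the denominator yields the tail bound $|\overline{C_{\varepsilon,k}(u)}|\le \sqrt{\varepsilon}\, f(\varepsilon)/u^{2}$, which is clearly integrable over the tails.

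There is really no obstacle in this lemma: it is purely the integrability statement used to justify the Fourier inversion~\eqref{eq:Pars}, and the crude estimate above (essentially the same as the one already appearing in~\eqref{eq:FiniteBound}) is more than enough. The harder, exponentially small tail control required elsewhere is a separate matter, handled in Lemma~\ref{lemma:tailestimatesteepcase} via Assumption~\ref{assump:Differentiability}(v).
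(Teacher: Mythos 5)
Your proof is correct, and it takes a simpler and more self-contained route than the paper's. The paper proves this lemma by (a) bounding the compact part $|u|\le 1/\sqrt{\varepsilon}$ via the uniform convergence of $\Phi_{Z_{k,\varepsilon}}$ to the limiting Gaussian characteristic function (Lemma~\ref{lemma:CharactExp} plus L\'evy's continuity theorem on compacts), which yields $\int_{|u|\le 1/\sqrt{\varepsilon}}|\Phi_{Z_{k,\varepsilon}}|\,\D u \le \int_{\RR}|\Phi|\,\D u + \delta_0$ for small $\varepsilon$, and (b) absorbing the tail $|u|> 1/\sqrt{\varepsilon}$ via the exponential smallness result of Lemma~\ref{lemma:tailestimatesteepcase}, which in turn relies on Assumption~\ref{assump:Differentiability}(v). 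You instead use only the trivial bound $|\Phi_{Z_{k,\varepsilon}}|\le 1$ (valid since $\Phi_{Z_{k,\varepsilon}}$ is a characteristic function under $\mathbb{Q}_{k,\varepsilon}$) and the explicit algebraic decay of $|\overline{C_{\varepsilon,k}(u)}|$, split at $|u|=1$ rather than at $|u|=1/\sqrt{\varepsilon}$. Both your compact bound $|\overline{C_{\varepsilon,k}(u)}|\le \varepsilon^{3/2}f(\varepsilon)/\delta^2$ and your tail bound $|\overline{C_{\varepsilon,k}(u)}|\le \sqrt{\varepsilon}f(\varepsilon)/u^2$ for $|u|>1$ follow correctly from~\eqref{eq:Cconj} once $|u^*(k)|$ and $|u^*(k)-\varepsilon f(\varepsilon)|$ are bounded below; you justify the lower bound correctly via strict monotonicity of $\Lambda_{0,1}$ and $\varepsilon f(\varepsilon)\to c$. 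The payoff of your approach is that it avoids both the uniform-convergence machinery and Assumption~\ref{assump:Differentiability}(v), showing that plain $L^1$ integrability (hence the validity of the Fourier inversion in Lemma~\ref{lem:optpricerep0}) is in fact a cheaper statement than the paper's proof suggests, needing only the explicit form of $C_{\varepsilon,k}$ and $u^*(k)\notin\{0,c\}$. The paper's harder tail control remains necessary only for the sharper claim that the tail contribution is $\mathcal{O}(\E^{-\beta/\varepsilon})$, which your argument does not (and need not) reproduce here.
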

\begin{proof}
We denote by $\Phi$ the characteristic function of the Gaussian with mean zero and variance $\Lambda_{0,2}$.
Since~$\Phi_{Z_{k,\varepsilon}}$ converges pointwise to $\Phi$ by Lemma~\ref{lemma:CharactExp}, then,
for any $B>0$,
it converges uniformly to $\Phi$ in every compact set $[-B,B]$ as $\varepsilon$ tends to zero 
(see~\cite[Corollary 1, page 50]{L70}).
Furthermore since $ \Phi_{Z_{k,\varepsilon}}$ is continuous and bounded on $\RR$,
then 
$|\Phi_{Z_{k,\varepsilon}}(u)|$ also converges uniformly to $|\Phi(u)|$ on $[-B,B]$,
and hence 
there exist $\varepsilon_0>0$ and $\delta_0>0$ such that for all $\varepsilon<\varepsilon_0$:
\begin{equation}\label{eq:unifcharact}
\int_{|u|<B}\left|\Phi_{Z_{k,\varepsilon}}(u)\right| \D u \leq 
\int_{|u|<B}\left|\Phi(u)\right| \D u + \delta_0.
\end{equation}
Following an analogous argument to the proof of Lemma~\ref{lemma:tailestimatesteepcase} we know that $\int_{|u|>1/\sqrt{\varepsilon}}| \Phi_{Z_{k,\varepsilon}}(u) \overline{C_{\varepsilon,k}(u)}|  \D u$ is exponentially small as $\varepsilon$ tends to zero.
Thus there exist $\varepsilon_1>0$ and $\delta_1>0$ and such that for all $\varepsilon<\varepsilon_1:$
\begin{align*}
\int_{\RR}\left| \Phi_{Z_{k,\varepsilon}}(u) \overline{C_{\varepsilon,k}(u)}  \right|  \D u 
&=
\int_{|u|\leq 1/\sqrt{\varepsilon}}\left| \Phi_{Z_{k,\varepsilon}}(u) \overline{C_{\varepsilon,k}(u)}  \right|  \D u 
+\int_{|u|>1/\sqrt{\varepsilon}}\left| \Phi_{Z_{k,\varepsilon}}(u) \overline{C_{\varepsilon,k}(u)}  \right|  \D u  \\
&\leq \frac{\varepsilon^{3/2}f(\varepsilon)}{|u^*(k)(u^*(k)-\varepsilon f(\varepsilon))|}\int_{|u|\leq 1/\sqrt{\varepsilon}}\left| \Phi_{Z_{k,\varepsilon}}(u) \right| \D u 
+ \delta_1,
\end{align*}
where the inequality for the first integral follows from the simple bound 
$$\left|  \overline{C_{\varepsilon,k}(u)}  \right|\leq \frac{\varepsilon^{3/2}f(\varepsilon)}{|u^*(k)(u^*(k)-\varepsilon f(\varepsilon))|},\quad\text{for all }|u| \leq1/\sqrt{\varepsilon}.$$
The quantity on the rhs is finite for $\varepsilon$ small enough since $\varepsilon f(\varepsilon)=c+\mathcal{O}(\varepsilon)$, $u^*(k)\neq c$, $u^*(k)\neq 0$.
Using~\eqref{eq:unifcharact}, fix some $B>0$ and $\varepsilon_0<1/B^2$;  
then for any $\varepsilon<\varepsilon_0<1/B^2$, 
$\int_{|u|<1/\sqrt{\varepsilon}}|\Phi_{Z_{k,\varepsilon}}(u)| \D u \leq \int_{|u|<1/\sqrt{\varepsilon}}|\Phi(u)| \D u + \delta_0$. 
This quantity is finite since the Gaussian characteristic function is in $L^1(\RR)$, and the lemma follows.
\end{proof}

We now move on to the proof of Lemma~\ref{lem:optpricerep0}.
We only look at the case $j=1$, the other cases being completely analogous.
We denote the convolution of two functions $f,h\in L^1(\RR)$ by $(f\ast g)(x):=\int_{\RR}f(x-y)g(y) \D y$,
and recall that $(f\ast g)\in L^1(\RR)$.
For such functions, we denote the Fourier transform by
$(\mathcal{F}f)(u):=\int_{-\infty}^{\infty}\E^{\I u x}f(x) \D x$ 
and the inverse Fourier transform by $
(\mathcal{F}^{-1}h)(x):=\frac{1}{2\pi}\int_{-\infty}^{\infty}\E^{-\I u x}h(u) \D u.
$ 

With $\tilde{g}_j$ defined on page~\pageref{eq:gjtilde}, 
the $\mathbb{Q}_{k,\varepsilon}$-measure in~\eqref{eq:MeasureChange} and the random variable $Z_{k,\varepsilon}$ in~\eqref{eq:zvarepsilon}, we have
$$
\mathbb{E}^{\mathbb{Q}_{k,\varepsilon}}\left[\tilde{g}_j(Z_{k,\varepsilon})\right]
=\int_{\RR}q_j(k/\sqrt{\varepsilon}-y)p(y) \D y = (q_j\ast p)(k/\sqrt{\varepsilon}),
$$
with $q_j(z)\equiv\tilde{g}_j(-z)$ and $p$ denoting the density of $Y_{\varepsilon}/\sqrt{\varepsilon}$.
On the strips of regularity derived on page~\pageref{eq:stripsofreg} we know there exists $\varepsilon_0>0$ such that
$q_j\in L^1(\RR)$ for $\varepsilon<\varepsilon_0$.
Since $p$ is a density, $p\in L^1(\RR)$, and therefore 
\begin{equation}\label{eq:conv}
\mathcal{F}(q_j\ast p)(u)=\mathcal{F}q_j(u) \mathcal{F}p(u).
\end{equation}
We note that $\mathcal{F}q_j(u)\equiv\mathcal{F}\tilde{g}_j(-u)\equiv\overline{\mathcal{F}\tilde{g}_j(u)}$ and hence
\begin{equation}\label{eq:simpconv}
\mathcal{F}q_j(u) \mathcal{F}p(u)
\equiv\E^{\I u k/\sqrt{\varepsilon}}\Phi_{Z_{k,\varepsilon}}(u) \overline{C_{\varepsilon,k}}(u).
\end{equation}
Thus by Lemma~\ref{lem:L1lem} there exists an $\varepsilon_1>0$ 
such that $\mathcal{F}q_j \mathcal{F}p\in L^1(\RR)$ for $\varepsilon<\varepsilon_1$. 
By the inversion theorem~\cite[Theorem 9.11]{R87} this then implies from~\eqref{eq:conv} and~\eqref{eq:simpconv} that for $\varepsilon<\min(\varepsilon_0,\varepsilon_1)$:
\begin{align*}
\mathbb{E}^{\mathbb{Q}_{k,\varepsilon}}\left[\tilde{g}_j(Z_{k,\varepsilon})\right]
&= (q_j\ast p)(k/\sqrt{\varepsilon}) 
=\mathcal{F}^{-1}\left(\mathcal{F}q_j(u) \mathcal{F}p(u)\right)(k/\sqrt{\varepsilon}) \\ 
&=\frac{1}{2\pi}\int_{\RR} \E^{-\I u k/\sqrt{\varepsilon}}\mathcal{F}q_j(u) \mathcal{F}p(u) \D u
= \frac{1}{2\pi}\int_{\RR} \Phi_{Z_{k,\varepsilon}}(u) \overline{C_{\varepsilon,k}(u)}  \D u .
\end{align*}

\begin{remark}
There exists $\varepsilon_0>0$ such that for the strips of regularity derived on page~\pageref{eq:stripsofreg}, the modified payoffs $\tilde{g}_j$ are in $L^2(\RR)$ for $\varepsilon<\varepsilon_0$. 
If there further exists $\varepsilon_1>0$ such that $\Phi_{Z_{k,\varepsilon}}\in L^2(\RR)$ for $\varepsilon<\varepsilon_1$
then we can directly apply Parseval's Theorem~\cite[Theorem 13E]{G70} for $\varepsilon<\min(\varepsilon_0,\varepsilon_1)$ and we obtain the same result as in Lemma~\ref{lem:optpricerep0}.
This requires though a stronger tail assumption compared to ~\ref{assump:Differentiability}(v)(c).
\end{remark}

\section{Verification of Assumption~\ref{assump:Differentiability}(v)}\label{append:tailverif}

The tail assumption~\ref{assump:Differentiability}(v) needs to be verified in order to apply Theorem~\ref{theorem:GeneralOptionAsymp}.
It is readily satisfied by most models used in practice.
Its verification is tedious but straightforward, and 
we give here an outline for the time-changed exponential L\'evy case where the time-change is given by an integrated Feller process~\eqref{eq:fellerdiff}, i.e. Proposition~\ref{prop:fwdsmiletimechange}(i).
Analogous arguments hold for all other models in the paper.

We recall that the forward lmg is given in~\eqref{eq:MGFFellerTC}
and the limiting lmgf (\eqref{eq:FellerLargeTime},\eqref{eq:domains}) is given by $\widehat{V}:{\widehat{\mathcal{K}}_{\infty}}\ni u \mapsto \frac{\kappa\theta}{\xi^2}\left(\kappa-\sqrt{\kappa^2-2\phi(u)\xi^2}\right)$
with $\widehat{\mathcal{K}}_{\infty}:=\left\{u:\phi(u)\leq\kappa^2/(2\xi^2)\right\}$ and $\phi$ is the L\'evy exponent.
Straightforward computations yield Assumption~\ref{assump:Differentiability}(v)(a).
For fixed $a\in\widehat{\mathcal{K}}_{\infty}^{0}$ denote $L_r:\mathbb{R}\to\mathbb{R}$ by $L_r(z):=\Re(\widehat{V}(\I z +a))$ and
$L_i:\mathbb{R}\to\mathbb{R}$ by $L_i(z):=\Im(\widehat{V}(\I z +a))$.
Then $\widehat{V}(\I z +a)=L_r(z)+\I L_i(z)$. 
Similarly we define $\phi_{r}$ and $\phi_{i}$ such that $\phi(\I z +a)=\phi_{r}(z)+\I \phi_{i}(z)$. 
From~\cite[Lemma A.1, page 10]{FFJ11} we know that 
$\phi_{r}$ has a unique maximum at zero and is bounded way from zero as $|z|$ tends to infinity. 
Now 
$
L_r(z):
= \frac{\kappa^2\theta}{\xi^2}
- \frac{\kappa\theta}{\xi^2}\Re\left(\sqrt{\kappa^2-2\phi(\I z + a)\xi^2}\right)
$
and
$
\Re\left(\sqrt{\kappa^2-2\phi(\I z + a)\xi^2}\right)
=\frac{1}{2}\sqrt{2(\kappa^2-2\phi_r(z)\xi^2) +2\sqrt{(\kappa^2-2\phi_r(z)\xi^2)^2+4\xi^4\phi_{i}(z)^2}}.
$
Since $\phi_r(z)<\phi_r(0)\leq \kappa^2/(2\xi^2)$  we certainly have
$$
\sqrt{2(\kappa^2-2\phi_r(z)\xi^2) +2\sqrt{(\kappa^2-2\phi_r(z)\xi^2)^2}}\leq
\sqrt{2(\kappa^2-2\phi_r(z)\xi^2) +2\sqrt{(\kappa^2-2\phi_r(z)\xi^2)^2+4\xi^4\phi_{i}(z)^2}},
$$
with equality only if $\phi_{i}(z)=0$.
Since $\phi_r$ has a unique maximum at zero we have $\phi_r(z)<\phi_r(0)\leq \kappa^2/(2\xi^2)$ and further
$
4\sqrt{(\kappa^2-2\phi_r(0)\xi^2)} \leq
\sqrt{2(\kappa^2-2\phi_r(u)\xi^2) +2\sqrt{(\kappa^2-2\phi_r(z)\xi^2)^2}},
$
with equality only if $z=0$.
Since $\phi_{i}(0)=0$ it follows that $u=0$ is the unique minimum of $\Re\left(\sqrt{\kappa^2-2\phi(\I z + a)\xi^2}\right)$.
Since $\phi_r$ is bounded away from $\phi_r(0)$ as $|z|$ tends to infinity there exists a $q^*>0$ and $M>0$ such that for $|z|>q^*$ we have that $\phi_r(z) \leq M < \phi_r(0)$. 
But then for $|z|>q^*$ we certainly have
$$
\Re\left(\sqrt{\kappa^2-2\phi(a)\xi^2}\right)=
4\sqrt{(\kappa^2-2\phi_r(0)\xi^2)} <
4\sqrt{(\kappa^2-2 M \xi^2)}  \leq
\Re\left(\sqrt{\kappa^2-2\phi(\I z + a)\xi^2}\right).
$$
This proves Assumption~\ref{assump:Differentiability}(v)(b).
The proof of Assumption~\ref{assump:Differentiability}(v)(c) involves tedious but straightforward computations and we only highlight the main steps.
Let $a\in\widehat{\mathcal{K}}_{\infty}^{0}$ and define $\overline{A}(u,\tau):=A(u,\tau)-\tau\widehat{V}(u)$ with $A$ given in~\eqref{eq:ABFellerTC}.
From the analysis above we know that the map $z\mapsto \Re{d(\phi(\I z + a))}$ has a unique minimum at $z=0$.
Also we recall that $0<d(\phi(a))$ and straightforward calculations show that $|\gamma(\phi(\I z + a))|<1$ with $d$ and $\gamma$ given in~\eqref{eq:dgammabetatimechange}.
Using the triangle and reverse triangle inequality we have for all $z\in\RR$ and $\tau>0$ that
\begin{equation}\label{eq:eq0unif}
\Re{\overline{A}(\phi(\I z +a),\tau)}=\frac{2\kappa\theta}{\xi^2 }\log\left|\frac{1-\gamma(\phi(\I z + a))}{1-\gamma(\phi(\I z + a))\E^{-d(\phi(\I z + a))\tau}}\right|
\leq \frac{2\kappa\theta}{\xi^2 }
\log \left(\frac{2}{1-\E^{-d(\phi(a))\tau}}\right).
\end{equation}
Tedious computations (see Figure~\ref{fig:RealPart} for a visual help) also reveal that ($B$ given in~\eqref{eq:ABFellerTC}):
$
\Re{B(\phi(\I z +a),\tau)}\leq B(\phi(a),\tau),
$
for all $z\in\RR$ and $\tau>0$.
Consider the second and third terms in~\eqref{eq:MGFFellerTC}. 
For all $z\in\RR$ and $\tau>0$:
\begin{equation}\label{eq:eq1unif}
\Re\log \left( \frac{1}{1-2\beta_t B(\phi(\I z +a),\tau) }\right)
=
\log \left|\frac{1}{1-2\beta_t B(\phi(\I z +a),\tau) }\right|
\leq
\log\left(\frac{1}{1-2\beta_t B(\phi(a),\tau) }\right),
\end{equation} 
where we note in the last inequality that $1-2\beta_t B(\phi(a),\tau)>0$.
We also compute 
$$
\Re \left( \frac{B(\phi(\I z +a),\tau)}{1-2\beta_t B(\phi(\I z +a),\tau) }\right)
=\frac{\Re B(\phi(\I z +a),\tau)-2\beta_t|B(\phi(\I z +a),\tau)|^2}{1-4\beta_t \Re B(\phi(\I z +a),\tau)+4 \beta_t^2 |B(\phi(\I z +a),\tau)|^2 },
$$
and hence using $\Re B(\phi(\I z+a),\tau)\leq|B(\phi(\I z +a),\tau)|$ we see that for all $z\in\RR$ and $\tau>0$:
\begin{equation}\label{eq:eq2unif}
\Re \left( \frac{B(\phi(\I z +a),\tau)}{1-2\beta_t B(\phi(\I z +a),\tau) }\right)
\leq
\frac{\Re B(\phi(\I z +a),\tau)}{1-2\beta_t \Re B(\phi(\I z +a),\tau) }
\leq
\frac{ B(\phi(a),\tau)}{1-2\beta_t  B(\phi(a),\tau) },
\end{equation}
where the last inequality follows since the term in the second inequality is strictly increasing in $\Re B(\phi(\I z +a),\tau)$.
Combining~\eqref{eq:eq0unif},~\eqref{eq:eq1unif} and~\eqref{eq:eq2unif} we see that as $\tau$ tends to infinity:
$$
\Re\left[ \tau^{-1}\log\mathbb{E}\left(\E^{(\I z + a) X_{\tau}^{(t)}}\right)-\widehat{V}(\I u + a)\right]\leq 
\left[\frac{\widehat{V}(a)v\E^{-\kappa t}}{1-2\beta_t \widehat{V}(a)}
+\frac{2\kappa\theta}{\xi^2}\log\left(\frac{2}{1-2\beta_t \widehat{V}(a)}\right)\right] \frac{1}{\tau}
+\mathcal{O}\left(\frac{1}{\tau^2}\right), \quad
\text{for all }
z\in\mathbb{R}.
$$
where the remainder does not depend on $z$.
This proves Assumption~\ref{assump:Differentiability}(v)(c).

\begin{figure}[h!tb] 
\centering
\mbox{\subfigure[$z\mapsto \Re B(\phi(\I z + a),\tau) $ .]{\includegraphics[scale=0.7]{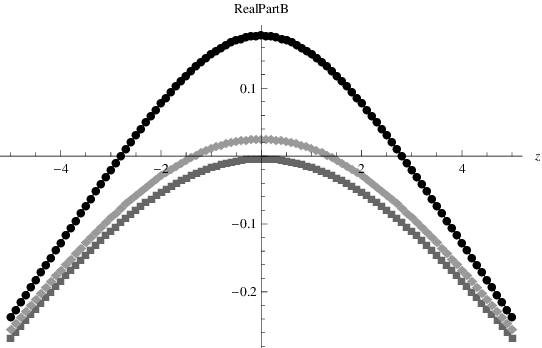}}\quad
\subfigure[Upper Bound]{\includegraphics[scale=0.7]{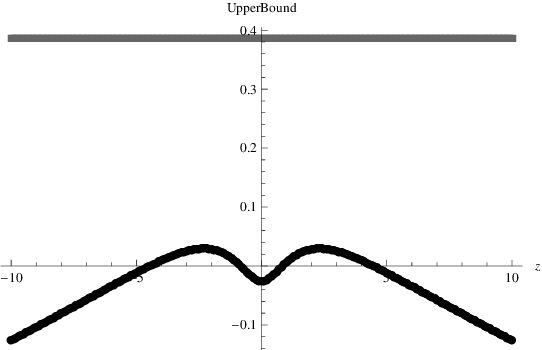}}}
\caption{The parameters are the same as Figure~\ref{fig:timechangedfeller}(a) with $t=1/2$ and $\tau=2$;
(a): plot of $z\mapsto \Re B(\phi(\I z + a),\tau) $ with $a=-3$ (circles), $a=0.5$ (squares) and $a=4$ (diamonds);
(b): plot of $z\mapsto \Re\left[\tau^{-1}\log\mathbb{E}\left(\E^{(\I z + 4) X_{\tau}^{(t)}}\right)-\widehat{V}(\I z +4 )\right] $ vs the upper bound derived in the appendix. }
\label{fig:RealPart}
\end{figure}


\end{document}